\documentclass[10pt]{article} 
\usepackage[preprint]{tmlr}


\usepackage{amsmath,amsfonts,bm}









\def\eqref#1{equation~\ref{#1}}









\def\1{\bm{1}}










\DeclareMathAlphabet{\mathsfit}{\encodingdefault}{\sfdefault}{m}{sl}
\SetMathAlphabet{\mathsfit}{bold}{\encodingdefault}{\sfdefault}{bx}{n}













\usepackage{url}

\usepackage[pageanchor=false,colorlinks=true,linkcolor=blue,citecolor=blue,urlcolor=blue]{hyperref}

\usepackage[utf8]{inputenc} 
\usepackage[T1]{fontenc}    

\usepackage{url}            
\usepackage{booktabs}       
\usepackage{amsfonts}       
\usepackage{nicefrac}       
\usepackage{microtype}      
\usepackage{xcolor}         

\usepackage{tikz}
\usetikzlibrary{positioning, arrows.meta}
\usepackage{witharrows}
\usepackage{graphicx}
\usepackage{subcaption} 
\usepackage{float}

\usepackage{amsmath}
\usepackage{amssymb}
\usepackage{mathtools}
\usepackage{amsthm}

\usepackage[capitalize,noabbrev]{cleveref}

\theoremstyle{plain}
\newtheorem{theorem}{Theorem}[section]

\newtheorem{lemma}[theorem]{Lemma}

\theoremstyle{definition}
\newtheorem{definition}[theorem]{Definition}

\theoremstyle{remark}
\newtheorem{remark}[theorem]{Remark}

\title{On the impact of the parametrization of deep convolutional neural networks on post-training quantization}


\author{\name Samy Houache \email samy.houache@u-bordeaux.fr \\
      \addr Univ. Bordeaux, IMB\\
      Thales AVS, France
      \AND
      \name Jean-François Aujol \email  jean-francois.aujol@math.u-bordeaux.fr \\
      \addr Univ. Bordeaux, Bordeaux INP, \\ CNRS, IMB, F-33400, Talence, France 
      \AND
      \name Yann Traonmilin  \email yann.traonmilin@math.u-bordeaux.fr\\
      \addr Univ. Bordeaux, Bordeaux INP, \\ CNRS, IMB, F-33400, Talence, France}



\begin{document}

\maketitle

\begin{abstract}
This paper introduces novel theoretical approximation bounds for the output of quantized neural networks, with a focus on convolutional neural networks (CNN). By considering layerwise parametrization and focusing on the quantization of weights, we provide bounds that gain several orders of magnitude compared to state-of-the-art results on classical deep convolutional neural networks such as MobileNetV2 or ResNets. These gains are achieved by improving the behaviour of the approximation bounds with respect to the depth parameter, which has the most impact on the approximation error induced by quantization. To complement our theoretical result, we provide a numerical exploration of our bounds on MobileNetV2 and ResNets.
\end{abstract}

\section{Introduction}

Neural networks have become central to modern machine learning, driving significant advancements across a wide range of applications, including computer vision, natural language processing, and robotics \cite{lecun2015deep,goodfellow2016deep}. Due to the size of state-of-the-art models, deploying these in resource-constrained environments, such as mobile devices or embedded systems, requires model compression techniques like pruning \cite{han2015deep}, low-rank approximations \cite{denton2014exploiting} or quantization \cite{gholami2022survey}. Post-training quantization, in particular, reduces the bit-width of parameters, enabling faster inference and reduced energy consumption without retraining the model.  In this paper, we focus on quantized convolutional neural networks because they provide state-of-the-art performances while remaining lightweight, thus, they remain models of choice for embedded applications. Despite the empirical success of quantized models, one concern is the potential performance degradation introduced by  quantization. Establishing theoretical bounds on this degradation is therefore crucial, especially for safety-critical applications where robust guarantees are required \cite{forsberg2020challenges}.

For a neural network  $R_{\theta}$ with parameters $\theta$ (i.e. a function parametrized by $\theta$), given an approximation $\theta'$ of $\theta$, we look for bounds of the form
\begin{equation}\label{eq_intro_1}
	\sup_{x \in \Omega}\| R_{\theta}(x) - R_{\theta'}(x) \|_{\infty} \leq C \|\theta - \theta'\|_\infty,
\end{equation}
where  $\Omega$ is the domain of the considered inputs of the networks  and \( C \) is a constant (that must be explicited)  depending on the network’s architecture. Such bounds quantify the stability of an architecture with respect to parameter perturbations which is essential for understanding the impact of quantization on performance.


Recent works, such as \cite{gonon2023approximation}, provide insightful  approximation bounds for quantization. However, these bounds are often pessimistic for practical use and come with strict assumptions. For example, \cite{gonon2023approximation} imposes a condition that the maximum parameter norm $r$ must be larger than $1$, which limits the applicability of their results, particularly when using post-training quantization where the weights are fixed.  In practice, for larger networks, we often regularize weights to prevent overfitting \cite{bejani2021systematic,santos2022avoiding}. Techniques such as L2 regularization DropConnect \cite{wan2013regularization} and weight decay \cite{krogh1991simple} actively encourage the parameters to remain small, potentially making $r $ smaller than 1. Moreover, when removing the impact of weights and input distribution, the constant $C$ in \eqref{eq_intro_1} exhibits a dependency $O(NL^2)$, where $L$ is the depth of the network and $N$ is the width, making this upper bound of little use for modern deep architectures (which have a large $L$). This opens the following question: is it possible to quantize \emph{deep} neural networks with a practical approximation bound ?

In this work, we provide new theoretical approximation bounds for neural networks, as illustrated in Figure \ref{fig:ICML_bounds_comparison_Resnet18}. This bound improves several dependencies in the approximation constant $C$ of~\eqref{eq_intro_1} giving a better  analysis of the performance of quantized \emph{deep} neural networks in practical cases. We focus on theoretical guarantees for the worst-case quantization error under the \textit{infinity norm} metric, which captures the maximum deviation in network outputs. Our key contributions can be summarized as follows.

\begin{figure}[htpb]
	\centering
	\includegraphics[width=0.7\columnwidth]{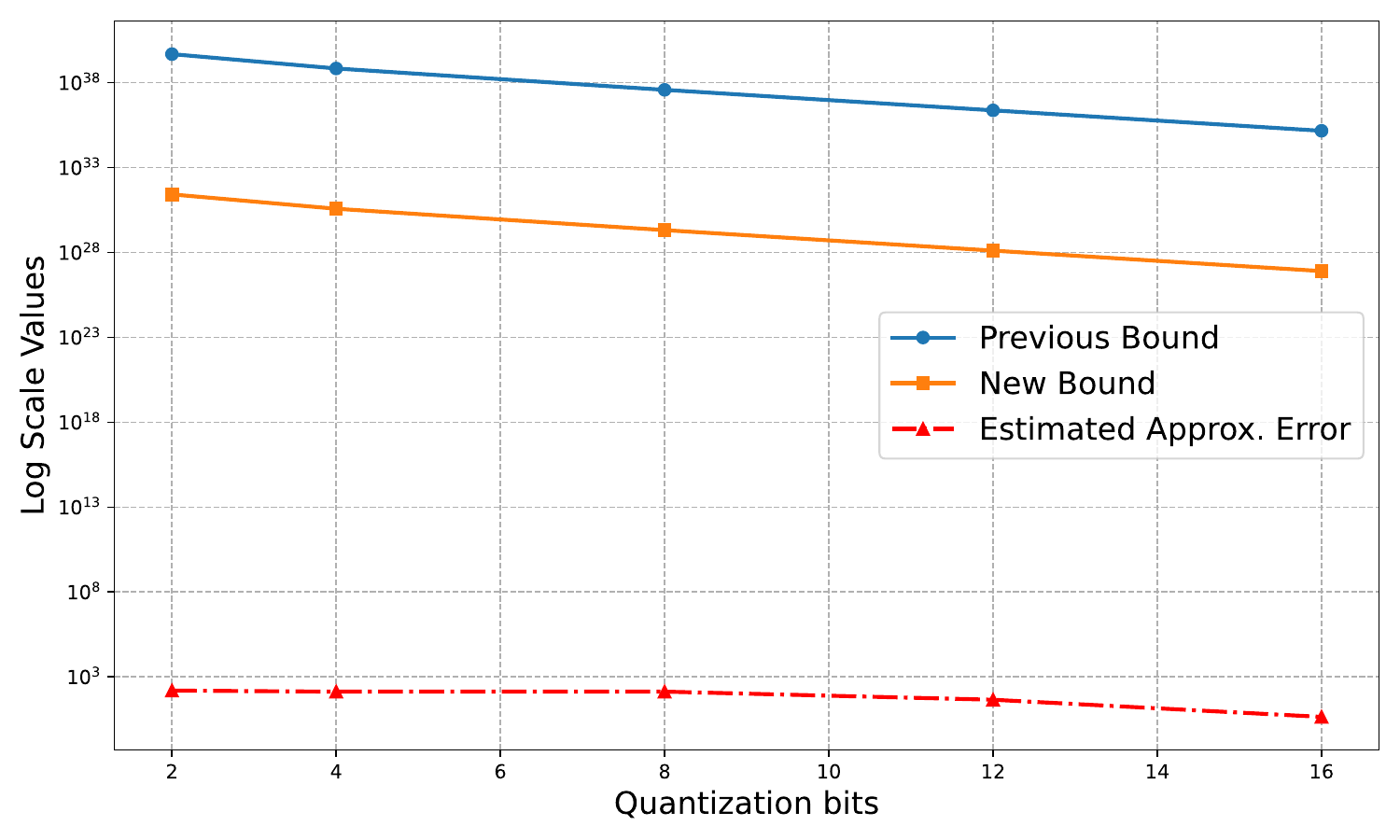}
	\caption{Illustration of the improvement, in log scale, over the previous bound \eqref{eq:orig_bound1} on ResNet18 without BatchNorm and without biases, with respect to the number of quantization bits, showing a $10^{8}$ times tighter error estimation.}
	\label{fig:ICML_bounds_comparison_Resnet18}
\end{figure}

\textbf{Tighter approximation bounds:} As weights typically require much higher memory usage than biases and many convolutional architectures do not use biases (for the convolutional part),  we provide  approximation bounds for quantization, when only weights are quantized.  In this case, our result enhances the state-of-the-art theorem by \cite{gonon2023approximation}, where we replace their factor of  \( N L^2 \) with the sum \( \sum_{\ell=1}^{L} N_{\ell-1}  \) where $N_\ell$ is the width of the $\ell$-th layer, which simplifies to \( N L \) when the network has a constant width \( N \). We further improve this constant for convolutional networks where only the filter sizes and the number of channels of each layer replaces the width in the approximation bound.

\textbf{Relaxation of norm constraints:} We weaken the assumptions on the operator norm constraints by considering arbitrary positive values of \( r_\ell \), the $l$-th layer operator norm. This generalization makes our result applicable to networks where smaller parameter norms are present due to regularization or sparsity constraints. Also, instead of taking the maximum parameter norm $r$, as in \cite{gonon2023approximation} and [Corollary F.1]\cite{gonon2024path}, our approach replaces it with the less pessimistic expression \( r_{mean} \), which (geometrically) averages the parameter norms across layers.

\textbf{Practical validation:} We validate our theoretical improvements by applying them to classical pretrained CNN models, demonstrating that our approach is orders of magnitude closer to a practical application compared to previous works.

\section{Related Works} \label{sec:related_works}

Approximation bounds have been studied from various perspectives, including results on the approximation capacity of neural networks \cite{devore2021neural,ding2019universal,csaji2001approximation,barron1993universal} and the topological properties of the realization map \cite{petersen2021topological}, particularly focusing on the fact that this realization is Lipschitz continuous with the constant depending on the network architecture \cite{petersen2021topological,devore2021neural}.

There also have been analytical works directly quantifying the value of \( C \) in Equation \eqref{eq_intro_1}. For example, \cite{neyshabur2017pac} studied this constant in the context of a particular case where \( \theta' \) is obtained through controlled perturbation (i.e., perturbations that do not significantly modify the norm of the initial weights). They derived, for the \( L^2 \) norm, a constant that depends on the network depth, the norm of the weights, the data and the perturbation. However, their results do not generalize to any \( \theta' \) and  are not applicable to arbitrary quantization. Similarly, \cite{berner2020analysis} expressed the constant in the case of the \( L^\infty \) norm but with uniform parameter bounds and, specifically for \( d_{\text{out}} = 1 \), which cannot be applied to every neural network tasks.

More recently, \cite{gonon2023approximation} formalized a framework for approximation bounds of ReLU neural networks, providing a general upper-bound for the constant of a neural network in terms of its architecture, weight norms, and other properties. Specifically, their result applies to neural networks defined over general \(L^p\)-spaces, and under general constraints on the weight parameters. The generalization provided by their upper-bound generalizes prior results, which were often limited to specific cases, such as \cite{neyshabur2017pac} for spectrally-normalized networks.
By the same authors, in \cite{gonon2024path}, there is another approach that generalizes the notion of approximation bounds to Directed Acyclic Graphs (DAGs) using \( \ell_1 \)-path norms. This formulation as graphs allows for more flexibility on the network architecture (pooling, skip connection...). This work provides general bounds with the notable feature of being invariant under parameter rescaling. They improve their previous paper results, relaxing some assumptions, notably the condition \( r \geq 1 \), but introducing new conditions such as \( \theta_i \theta'_i \geq 0 \), which is not always satisfied for general distinct parameters $(\theta,\theta')$. We further discuss these bounds  in relation to our work in Section~\ref{sec:prelim}.



A direct  application of approximation bounds is quantization.  Quantization significantly reduces the storage and computational requirements of deep models \cite{gholami2022survey}. The impact of quantization on the approximation capabilities of neural networks has been studied in \cite{ding2019universal} and \cite{hubara2018quantized}, where the authors provided empirical evidence for maintaining high performance even at low precision. However, formal guarantees are still limited, and existing bounds often assume either uniform quantization or specific activation functions, which limit their applicability. Recent works have introduced strategies for low-bit quantization to retain high predictive accuracy in practical implementations. For instance, \cite{choukroun2019low} explored methods for efficient inference with low-bit quantization, while \cite{courbariaux2015binaryconnect} demonstrated the effectiveness of binary weight quantization during training, opening a way for training and deploying models with significantly reduced memory and computational demands.  Another work that explores methods that optimize weight rounding is AdaRound \cite{nagel2020up} which introduces a data-driven adaptive rounding, which learns rounding offsets to preserve layer outputs post-quantization. This approach can be used to further reduce approximation errors. While our theoretical bounds are designed for general quantization mappings, Adaround can be integrated to tighten the term $\|\theta - \theta'\|_\infty$ in practice.
 Another important aspect in controlling the accuracy of quantized networks involves understanding singular values in convolutional layers  \cite{sedghi2018singular} which help inform layer-specific quantization strategies.

\section{Preliminaries}\label{sec:prelim}

In this section, we define the key concepts and notations that will be used throughout the paper and we recall reference theoretical approximation bounds from the literature.

\begin{definition}\textbf{Neural network architecture.}  The architecture of a neural network is defined by the tuple \((L,  \mathbf{N})\), where \( L \in \mathbb{N} \) represents the depth of the network, and \( \mathbf{N} = (N_0, \ldots, N_L) \in \mathbb{N}^{L+1} \) is a sequence specifying the number of neurons in each layer (the width of each). We call $N_\ell$ the width of the \(\ell\)-th layer. The width of the network is defined as \( N := \max_{\ell=0,\ldots,L} N_\ell \).
\end{definition}

\begin{definition}
	\textbf{Parameters associated with an architecture.} Given an architecture \((L, \mathbf{N})\),  parameters associated with this architecture are  \(\theta = (\tilde{W}_1, \ldots, \tilde{W}_L)\), where \( \tilde{W}_\ell \in \mathbb{R}^{N_\ell \times ( N_{\ell-1}+1)} \) is the weight matrix for layer \(\ell=1, \dots,L\) with included bias, i.e.  the concatenation of a base weight matrix $W_\ell\in \mathbb{R}^{N_\ell \times N_{\ell-1}}$ with  associated bias \( b_\ell \in \mathbb{R}^{N_\ell} \), for layer \(\ell\). We have that $ \theta \in
	\Theta_{L,\mathbf{N}} := \mathbb{R}^{d(L,\mathbf{N})}$,
	where the dimension $d(L,\mathbf{N})$ is  defined by $d(L,\mathbf{N}) := \sum_{\ell=1}^L N_\ell(N_{\ell-1} + 1)$.
\end{definition}

\begin{definition}
	\textbf{ReLU Network.} For any vector $x$, we write $\tilde{x}=\begin{pmatrix} x \\ 1 \end{pmatrix} $. Given an architecture \((L, \mathbf{N})\) and parameter vector \(\theta = (\tilde{W}_1, \ldots, \tilde{W}_L)\), we associate the function \( R_\theta : \mathbb{R}^{N_0 + 1} \rightarrow \mathbb{R}^{N_L} \), which is recursively defined for $\ell = 0,\ldots,L$ as follows:
	
	\begin{equation}
		y_0 = x, \; y_\ell =  \sigma \left(\tilde{W}_\ell  \tilde{y}_{l-1}\right) \; \text{and} \; R_\theta(\tilde{x}) = y_L 
	\end{equation}
	
	where \(\sigma(x) = \mathrm{max}(0, x)\) is the ReLU activation function.
\end{definition}

\begin{definition}
	\textbf{Domains for parameters and input vectors.} Given an architecture \((L, \mathbf{N})\) and a parameter space \(\Theta_{L,\mathbf{N}}\), for any \(r \geq 0\), we define the set of admissible parameters:
	\[
	\Theta_{L,\mathbf{N}}(r) := \left\{(\tilde{W}_1, \ldots, \tilde{W}_L) \in \Theta_{L,\mathbf{N}} :  \right.
	\]
	\[
	\left.  \| \tilde{W}_\ell \|_{\mathrm{op},\infty} \leq r, \; \ell = 1,\ldots,L\right\}.
	\]
	where \(\|\cdot\|_{\mathrm{op},\infty}\) denotes the infinity operator norm, defined as follows, for every matrix $W$ in $\mathbb{R}^{m\times n}$:
	\begin{equation}
		\| W \|_{\mathrm{op},\infty} := \sup_{x \in \mathbb{R}^{n}, \; \|x\|_\infty = 1} \| Wx \|_\infty.
	\end{equation}
\end{definition}

In this work, we suppose that the input $x$ belongs to the domain $  \Omega = [-D,D]^{N_0} $.

We can now give previous approximation bounds in the $\ell^\infty$-norm setting.
\begin{theorem}[Previous bound from \cite{gonon2023approximation}]\label{Th_bound_article}
	For any architecture \( (L,\mathbf{N}) \), and any \( r \geq 1 \), denoting  \( N := \max_{l=0,\ldots,L} N_l \), for any $\theta,\theta' \in \Theta_{L,\mathbf{N}}(r) $, we have :
	\begin{equation}
		\label{eq:orig_bound1}
		\sup_{x \in \Omega}\| R_{\theta}(x) - R_{\theta'}(x) \|_{\infty} \leq (D+1) N L^2 r^{L-1} \| \theta - \theta' \|_\infty.
	\end{equation}
	
\end{theorem}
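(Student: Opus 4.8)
The plan is to control the layerwise discrepancy $e_\ell := \sup_{x\in\Omega}\|y_\ell - y'_\ell\|_\infty$ between the two networks by a recursion in $\ell$, where $y_\ell$ and $y'_\ell$ denote the $\ell$-th hidden activations produced by $\theta$ and $\theta'$ from a common input $x$. Two elementary facts drive everything: the ReLU is $1$-Lipschitz for $\|\cdot\|_\infty$, so that $\|\sigma(a)-\sigma(b)\|_\infty \le \|a-b\|_\infty$; and the membership $\theta,\theta'\in\Theta_{L,\mathbf{N}}(r)$ gives $\|\tilde W_\ell\|_{\mathrm{op},\infty}\le r$ at every layer. At the core of the recursion I would insert the cross term $\tilde W_\ell\tilde y'_{\ell-1}$ and write $\tilde W_\ell\tilde y_{\ell-1}-\tilde W'_\ell\tilde y'_{\ell-1} = \tilde W_\ell(\tilde y_{\ell-1}-\tilde y'_{\ell-1}) + (\tilde W_\ell-\tilde W'_\ell)\tilde y'_{\ell-1}$, which separates a \emph{propagation} term (the error accumulated up to layer $\ell-1$, amplified by the current operator norm) from a \emph{source} term (the fresh perturbation introduced at layer $\ell$). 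Note that in the difference $\tilde y_{\ell-1}-\tilde y'_{\ell-1}$ the appended $1$'s cancel, so its $\infty$-norm equals $\|y_{\ell-1}-y'_{\ell-1}\|_\infty$.

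First I would establish a forward bound on the activation magnitudes of the reference network, namely $\|\tilde y'_\ell\|_\infty \le r^{\ell}\max(D,1) \le r^\ell (D+1)$. Since $\tilde y'_\ell$ is the activation with a $1$ appended, $\|\tilde y'_\ell\|_\infty = \max(\|y'_\ell\|_\infty,1)$, and $\|y'_\ell\|_\infty = \|\sigma(\tilde W'_\ell\tilde y'_{\ell-1})\|_\infty \le r\,\|\tilde y'_{\ell-1}\|_\infty$. Here the hypothesis $r\ge 1$ is essential: it lets the appended $1$ be absorbed (because $r\|\tilde y'_{\ell-1}\|_\infty \ge 1$), so the induction closes with a clean geometric factor starting from $\|\tilde y'_0\|_\infty = \max(\|x\|_\infty,1)\le\max(D,1)$ on $\Omega=[-D,D]^{N_0}$.

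Then I would assemble the recursion. Applying $1$-Lipschitzness together with the decomposition above gives, pointwise in $x$ and then uniformly, $e_\ell \le r\,e_{\ell-1} + \|\tilde W_\ell-\tilde W'_\ell\|_{\mathrm{op},\infty}\,\sup_{x}\|\tilde y'_{\ell-1}\|_\infty$. The infinity operator norm of the perturbation matrix is its largest absolute row sum, and each of its rows has $N_{\ell-1}+1$ entries each bounded by $\|\theta-\theta'\|_\infty$, so $\|\tilde W_\ell-\tilde W'_\ell\|_{\mathrm{op},\infty}\le (N_{\ell-1}+1)\|\theta-\theta'\|_\infty$. Writing $\delta:=\|\theta-\theta'\|_\infty$ and combining with the forward bound yields
\begin{equation*}
	e_\ell \le r\,e_{\ell-1} + (N_{\ell-1}+1)\,r^{\ell-1}(D+1)\,\delta .
\end{equation*}
Starting from $e_0=0$ and unrolling, the powers of $r$ telescope exactly ($r^{L-\ell}\cdot r^{\ell-1}=r^{L-1}$), leaving $e_L \le r^{L-1}(D+1)\,\delta\sum_{\ell=1}^{L}(N_{\ell-1}+1)$. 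The crude estimate $\sum_{\ell=1}^{L}(N_{\ell-1}+1)\le L(N+1)$ already suffices, and is dominated by $NL^2$, giving the stated bound.

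The hard part is not any single estimate but getting the bookkeeping of the powers of $r$ and of the activation growth exactly right, since this is where the depth dependence is decided; the entire argument hinges on the forward bound of the second paragraph, which is precisely why $r\ge 1$ must be assumed (for $r<1$ the $\max(\cdot,1)$ no longer collapses into a single geometric factor and the clean recursion breaks). I would also flag that this telescoping is in fact sharper than the cited constant: the honest value of the layer sum is of order $NL$, so the step $L(N+1)\le NL^2$ deliberately discards a factor of order $L$. Recovering that lost factor — and replacing the per-layer count by one adapted to convolutional structure — is exactly the sharpening the rest of the paper pursues, but for the stated theorem the loose bound is all that is needed.
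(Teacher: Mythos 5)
Your proof follows essentially the same route as the machinery this paper builds in Appendix~\ref{appendix:proofs}: your cross-term decomposition and unrolled recursion are exactly the content of Lemma~\ref{Lemma_article} (there proved by induction on depth rather than as a forward recursion, which is the same thing once unrolled), your forward bound $\|\tilde y'_\ell\|_\infty \le r^\ell \max(D,1)$ is the specialization of Lemma~\ref{Lemma_max} to uniform norm bounds under $r\ge 1$, and your row-sum estimate $\|\tilde W_\ell-\tilde W'_\ell\|_{\mathrm{op},\infty}\le (N_{\ell-1}+1)\|\theta-\theta'\|_\infty$ is Lemma~\ref{lem:matrices}. Note that the paper does not actually reprove the cited theorem, and its Lemma~\ref{Lemma_article} assumes \emph{equal} biases; your retention of the full bias-included perturbation matrix is the correct adaptation here, since Theorem~\ref{Th_bound_article} allows the biases to be quantized as well. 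Your closing observation that the telescoped layer sum is really of order $NL$ is precisely the improvement that the paper's Theorem~\ref{Th:my_bound_extend_new} formalizes.

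One genuine, though small, defect: your final domination step $L(N+1)\le NL^2$ is equivalent to $N(L-1)\ge 1$ and therefore fails when $L=1$, a case included in the statement. There your chain yields $(N_0+1)(D+1)\|\theta-\theta'\|_\infty$, which exceeds the claimed $(D+1)N\|\theta-\theta'\|_\infty$ whenever $N_0=N$. The repair is to avoid charging the bias entry with the factor $\max(D,1)$: the bias column of $\tilde W_1-\tilde W'_1$ multiplies the constant entry $1$ of $\tilde x$, not a coordinate of size up to $D$, so the $L=1$ error is at most $(N_0 D+1)\|\theta-\theta'\|_\infty \le N(D+1)\|\theta-\theta'\|_\infty$, using $N\ge N_0$ and $N\ge 1$. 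For $L\ge 2$ your argument is correct as written.
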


This shows that the quantization error depends on the architecture’s depth  $L$, width  $N$, and the maximum  norm $r$ of weight matrices. As the network depth $L$ increases, the potential error grows exponentially, increasing the sensitivity of deeper networks to small parameter changes.

In \cite{gonon2024path}[Corollary F.1], a new bound is given as a consequence of a general approach using a path-norm metric (see Section \ref{sec:related_works}):
\begin{equation}\label{eq:orig_bound2}
	\sup_{x \in \Omega}\| R_{\theta}(x) - R_{\theta'}(x) \|_{1}  \leq 2 \max(D,1) LN^2 r^{L-1} \| \theta - \theta' \|_\infty
\end{equation}
The dependency in $L^2$  of bound~\eqref{eq:orig_bound1} is reduced to a dependency in $L$ at the cost of a dependency in $N^2$ for the $\ell^1$ operator norm. For the sake of clarity in our representations, we have chosen to include only the bound ~\eqref{eq:orig_bound1} in our comparisons. This choice is motivated by the fact that it is generally tighter, particularly because most networks of interest tend to have widths significantly larger than their depths (see Table \ref{tab:model_bound_comparison} for an illustration). That is, the condition \( 2LN^2 \gg NL^2 \) often holds, making the \cite{gonon2023approximation} bound  more appropriate for practical comparisons in our $\ell^\infty$ norm setting.

\textbf{The convolutional case.}  In convolutional networks for image processing, a convolutional layer can be represented as a matrix multiplication. Consider an input image \( x \) of dimensions \( n \times m \) and a convolutional filter \( h \) of dimensions \( p \times p \). The input image \( x \) can be flattened into a column vector \( \mathbf{x} \) of size \( nm \times 1 \): $\mathbf{x} = \text{vec}(X)$ where \(\text{vec}(\cdot)\) denotes the vectorization operation. The convolution of the input image \( x \) by the filter \( h \) of size $p \times p$  followed by a subsampling/upsampling can be expressed as a matrix product $ \mathbf{y} = \mathcal{H} \mathbf{x}$
where \( \mathbf{y} \)  of dimensions \( n_\ell \times m_\ell \) is the vectorization of the output image and $\mathcal{H}$ is a matrix  representing the 2d convolution by $h$ (hence each row contains at most $p^2$ coefficients). The weight matrix associated with  a given convolutional layer is a collection of $c_\ell$ convolutions by several filters $h_\ell$, followed by a subsampling or an upsampling. Hence the width of the $\ell$-th layer is $c_{\ell-1}\times n_{\ell-1} \times m_{\ell-1}$.

%

\section{Theoretical results}

In this section, we present our main theoretical results, which extend and improve upon the existing bounds for quantized ReLU networks by relaxing the constraints on the network parameters and considering a quantization of the weight matrices only (and not biases). Indeed, in practice, convolutional neural networks such as MobileNetV2 or Resnets without biases are used successfully for vision tasks. If we consider more general networks with biases (and a constant width for the sake of discussion), the size of the bias vector is $NL$ compared to the size of weight matrix $N^2L$. If the width $N$ of the network is large compared to the objective in terms of memory requirement, e.g. $N >>8$  for a $8\times$ memory reduction from $64$ bits to $8$ bits, then quantifying biases will add little gain in memory compared to the gain resulting from the quantization of weights (this is seen in practice when the biases often remain in full precision or are only lightly quantized \cite{finkelstein2019fighting}).

We first give a general approximation bound and then specify to the convolutional case.
\begin{theorem}[General approximation bound]\label{Th:my_bound_extend_new}
	For any architecture \((L,\mathbf{N})\),  define the parameters \(\theta = (\tilde{W}_1, \ldots, \tilde{W}_L)\) and \(\theta' = (\tilde{W}'_1, \ldots, \tilde{W}'_L)\), where $\tilde{W}_\ell$ and $\tilde{W}'_\ell$ are weight matrices with included bias. Assume that the two networks have same biases. Assume besides that $ \forall \ell = 1, \ldots, L$:
	\begin{equation}
		\quad \| \tilde{W}_\ell \|_{\mathrm{op}, \infty} \leq r_\ell \quad \text{and} \quad \| \tilde{W}'_\ell \|_{\mathrm{op}, \infty} \leq r_\ell.
	\end{equation}
	Then:
	\begin{equation}\begin{split}\label{eq1_main_th} &\sup_{x \in \Omega}\| R_{\theta}(\tilde{x}) - R_{\theta'}(\tilde{x}) \|_{\infty}  \leq \max(D,1) \sum_{\ell=1}^{L} N_{\ell-1} \times r_{mean}^{L-1} \|\theta - \theta'\|_\infty,\\
	\end{split}\end{equation}
	where  we define the mean norm parameter
	\begin{equation}
		r_{mean} := \sqrt[L-1]{\max_{l=1, \dots, L} \max_{i=1, \dots, l-1} \prod_{\substack{j=i \\ j \neq l}}^{L} r_j}.
	\end{equation}
\end{theorem}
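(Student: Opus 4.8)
The plan is to control the pointwise layerwise discrepancy $e_\ell := \|y_\ell - y'_\ell\|_\infty$ between the activations of the two networks through a linear recursion, to separately control the size of the activations $\|y'_\ell\|_\infty$, and to take the supremum over $x\in\Omega$ only at the very end. First I would write, for each layer $\ell$, the identity $\tilde W_\ell\tilde y_{\ell-1} - \tilde W'_\ell\tilde y'_{\ell-1} = \tilde W_\ell(\tilde y_{\ell-1}-\tilde y'_{\ell-1}) + (\tilde W_\ell - \tilde W'_\ell)\tilde y'_{\ell-1}$, and then use that $\sigma$ is $1$-Lipschitz for $\|\cdot\|_\infty$ to pass from $y_\ell = \sigma(\tilde W_\ell\tilde y_{\ell-1})$ to the pre-activations. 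Because the two networks share the same biases, the last column of $\tilde W_\ell - \tilde W'_\ell$ vanishes, so the second term reduces to $(W_\ell - W'_\ell)y'_{\ell-1}$, where $W_\ell$ is the weight block; this is the step that removes all dependence on the biases.

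Next I would bound the two contributions separately. For the first term, dropping the bias column can only decrease the maximal absolute row sum, so $\|W_\ell\|_{\mathrm{op},\infty}\le\|\tilde W_\ell\|_{\mathrm{op},\infty}\le r_\ell$, whence $\|W_\ell(y_{\ell-1}-y'_{\ell-1})\|_\infty\le r_\ell\, e_{\ell-1}$. For the second term, every entry of $W_\ell - W'_\ell$ is at most $\|\theta-\theta'\|_\infty$ and each row has $N_{\ell-1}$ entries, so $\|(W_\ell-W'_\ell)y'_{\ell-1}\|_\infty \le N_{\ell-1}\,\|\theta-\theta'\|_\infty\,\|y'_{\ell-1}\|_\infty$. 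This yields the recursion $e_\ell \le r_\ell\, e_{\ell-1} + N_{\ell-1}\,\|\theta-\theta'\|_\infty\,\|y'_{\ell-1}\|_\infty$ with $e_0=0$, which unrolls to $e_L \le \|\theta-\theta'\|_\infty\sum_{\ell=1}^L \big(\prod_{j=\ell+1}^L r_j\big)\,N_{\ell-1}\,\|y'_{\ell-1}\|_\infty$.

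In parallel I would bound the activation norms. From $\|y'_\ell\|_\infty \le r_\ell\max(\|y'_{\ell-1}\|_\infty,1)$ and $\|y'_0\|_\infty\le D$, a short induction gives $\|y'_{\ell-1}\|_\infty \le \max(D,1)\,\max_{1\le i\le \ell}\prod_{j=i}^{\ell-1} r_j$, where the $\max(\cdot,1)$ lets the product ``restart'' at some index $i>1$; this is exactly the mechanism that lets the argument accommodate arbitrary $r_\ell$, including values below $1$, instead of paying the blow-up incurred by a single maximal norm $r$. Substituting this estimate, the propagation factor $\prod_{j=\ell+1}^L r_j$ and the activation factor $\prod_{j=i}^{\ell-1} r_j$ merge into the single one-factor-deleted product $\prod_{j=i,\,j\neq\ell}^L r_j$ over $\{i,\dots,L\}\setminus\{\ell\}$.

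The hard and most delicate part will be the final bookkeeping: showing that each product $\prod_{j=i,\,j\neq\ell}^L r_j$ appearing in the resulting double maximum is dominated by $r_{mean}^{L-1}$, so that the whole expression collapses to $\max(D,1)\,r_{mean}^{L-1}\sum_{\ell=1}^L N_{\ell-1}\,\|\theta-\theta'\|_\infty$. Concretely this amounts to checking that $r_{mean}^{L-1}$ is precisely the maximum, over the admissible pairs $(\ell,i)$, of these partial products, after which one takes the supremum over $x\in\Omega$ (entering only through $\|y'_0\|_\infty\le D$). Matching the ranges of $\ell$ and $i$ to those in the definition of $r_{mean}$, and in particular verifying that the largest partial product is indeed captured, is where all the care is needed; the analytic estimates themselves (Lipschitzness of $\sigma$, row-sum bounds on the operator norm, and the induction for $\|y'_\ell\|_\infty$) are routine.
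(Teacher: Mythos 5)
Your overall strategy mirrors the paper's proof exactly: your unrolled recursion for $e_\ell$ is the paper's Lemma~\ref{Lemma_article}, your activation estimate plays the role of Lemma~\ref{Lemma_max}, and the entrywise bound $\|W_\ell-W'_\ell\|_{\mathrm{op},\infty}\le N_{\ell-1}\|\theta-\theta'\|_\infty$ is Lemma~\ref{lem:matrices}. However, there is a genuine quantitative gap in your activation bound, and it makes precisely the step you flag as "delicate" fail. Your induction yields $\|y'_{\ell-1}\|_\infty \le \max(D,1)\max_{1\le i\le \ell}\prod_{j=i}^{\ell-1}r_j$, where the range $i\le\ell$ admits the empty product, i.e.\ the value $1$. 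This is a valid inequality but strictly looser than what is true: the constant $1$ sitting in the bias slot of $\tilde y'_m$ is always multiplied by $\tilde W'_{m+1}$ before it can influence $y'_{m+1}$, so the correct induction gives, for $\ell\ge 2$, $\|y'_{\ell-1}\|_\infty \le \max(D,1)\max_{1\le i\le \ell-1}\prod_{j=i}^{\ell-1}r_j$ — every product must contain the factor $r_{\ell-1}$, and the empty product occurs only for $\ell=1$. As a consequence, after merging with the propagation factor, your double maximum contains the extra terms $\prod_{j=\ell+1}^{L}r_j$ (the pairs $(\ell,i)=(\ell,\ell)$ with $\ell\ge 2$), which are not of the form appearing in the definition of $r_{mean}$ (there $i\le l-1$) and are not dominated by $r_{mean}^{L-1}$ in general.

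Concretely, take $L=3$, $r_1=r_2=\epsilon<1$, $r_3=M\ge 1$. Then $r_{mean}^{L-1}=\max(r_2r_3,\,r_1r_3,\,r_1r_2,\,r_2)=\epsilon M$, but your coefficient for $\ell=2$ is $r_3\cdot\max(r_1,1)=M$, so your final constant is $M$ rather than the claimed $\epsilon M$: you prove a strictly weaker bound, losing a factor $1/\epsilon$ exactly in the regime of sub-unit layer norms that the theorem is designed to exploit (and that motivates dropping the assumption $r\ge 1$). The fix is local: run the activation induction keeping track that the restart value $1$ is immediately multiplied by the next layer's norm, i.e.\ $\max(\|y'_m\|_\infty,1)\le \max(D,1)\cdot\max(\max_{i\le m}\prod_{j=i}^m r_j,\,1)$, and multiplying through by $r_{m+1}$ turns the $1$ into the singleton product $r_{m+1}$, never into an empty product. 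With that corrected estimate all surviving pairs satisfy $i\le\ell-1$ (plus the $\ell=1$ term $\prod_{j=2}^{L}r_j$, which is what the paper's convention assigns to $l=1$ in the definition of $r_{mean}$), the domination by $r_{mean}^{L-1}$ then holds by definition of $r_{mean}$, and the rest of your argument goes through exactly as in the paper.
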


Note that the maximum norm parameter  $ r $ in \eqref{eq:orig_bound1}, from \cite{gonon2023approximation}, is replaced by the term $r_{mean}$.
This geometric mean-like term, which considers the largest of  partial products of layer-wise norm parameters, allows for a better adjustment of the bound to the variability of the norms. This variability can be significant, particularly between $ r_{\max} $ and the other \( r_l \) values. Specifically, in the least favorable case where \( r_1 = r_2 = \dots = r_L \), the largest product simplifies to \( r^{L-1} \). Conversely, the most favorable scenario would involve a distribution of \( r_l \) with high variance, particularly where \( \max(r_l) \gg 1 \) and all other \( r_l \leq 1 \). In such cases, $r_{mean}^{L-1}$ would be much smaller than \( r^{L-1} \). We provide a more in depth discussion  about this, in Appendix \ref{appendix:$r_{mean}$ $r_{max$}}, with simulated examples in Figure \ref{fig:ICML_comparison_product_all} .

\begin{remark}
	[Improved factor \( \sum_{l=1}^L N_{l-1} \)]
	Our result introduces an improved factor, which takes into account the sum of the layer widths. If the architecture has uniform width across all layers, i.e., all layers have width \( N \), this factor becomes \( N \times L \), which is smaller than  \( N \times L^2 \) of Equation \eqref{eq:orig_bound1} .
\end{remark}

\begin{remark}
	[Weakened condition for the domain of parameters \(r_\ell \)] In this bound, the constants \( r_\ell \) are allowed to be arbitrary positive numbers. This weakens the condition $r \geq 1$ of previous bound, making the result more general and applicable to a wider range of network architectures.
\end{remark}

\begin{theorem}[Approximation bound for  CNN]\label{Th:my_bound_extend_new_conv}
	With the same settings as in Theorem \ref{Th:my_bound_extend_new}
	and for a purely convolutional network without biases, where each layer applies \( c_l \) filters of size \( p_l \times p_l \), we have:
	\begin{equation} \label{eq:th_conv}
		\begin{split}
			&\sup_{x \in \Omega}\| R_{\theta}(x) - R_{\theta'}(x) \|_{\infty} \leq D \times \sum_{l=1}^{L} p_{l}^2 c_{l-1} \times r_{conv}^{L-1} \|\theta - \theta'\|_\infty
		\end{split}
	\end{equation}
	where we define
	\begin{equation}
		r_{conv} := \sqrt[L-1]{\max_{l=1, \dots, L}\prod\limits_{\substack{k=1 \\ k \neq l}}^{L} r^{\text{conv}}_k}
	\end{equation}
	
	with $r^{conv}_k$  a bound on the norm of the convolutional matrix of layer $k$ without bias (i.e. $r^{conv}_k \geq  \|\mathcal{H}_k\|_{op,\infty}$).
	
\end{theorem}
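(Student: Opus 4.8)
The plan is to mirror the proof of Theorem~\ref{Th:my_bound_extend_new}, tracking the error layer by layer, and to exploit the sparsity of the convolutional weight matrices to replace the width factor $N_{\ell-1}$ by $p_\ell^2 c_{\ell-1}$. Write $y_\ell$ and $y'_\ell$ for the activations of the two networks at layer $\ell$ (so $y_0 = y'_0 = x$, since the input is shared), and set $e_\ell := \|y_\ell - y'_\ell\|_\infty$. Using that $\sigma$ is $1$-Lipschitz and inserting $\mathcal{H}_\ell y'_{\ell-1}$, I would split
\begin{equation*}
e_\ell \leq \|\mathcal{H}_\ell (y_{\ell-1} - y'_{\ell-1})\|_\infty + \|(\mathcal{H}_\ell - \mathcal{H}'_\ell) y'_{\ell-1}\|_\infty \leq r^{conv}_\ell\, e_{\ell-1} + \|(\mathcal{H}_\ell - \mathcal{H}'_\ell) y'_{\ell-1}\|_\infty,
\end{equation*}
where $\mathcal{H}_\ell, \mathcal{H}'_\ell$ denote the bias-free convolutional matrices of layer $\ell$ and the first term is controlled by $\|\mathcal{H}_\ell\|_{\mathrm{op},\infty} \leq r^{conv}_\ell$.

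The convolutional heart of the argument is the second term. Each row of $\mathcal{H}_\ell$ (and of $\mathcal{H}'_\ell$) produces one output coefficient by summing, over the $c_{\ell-1}$ input channels and the $p_\ell^2$ filter positions, products of a filter weight with an input entry; hence each row has at most $p_\ell^2 c_{\ell-1}$ nonzero entries, and each such entry is a copy of a filter coefficient, so its perturbation is bounded by $\|\theta - \theta'\|_\infty$. Bounding the $i$-th coordinate of $(\mathcal{H}_\ell - \mathcal{H}'_\ell) y'_{\ell-1}$ by a sum of at most $p_\ell^2 c_{\ell-1}$ terms then yields $\|(\mathcal{H}_\ell - \mathcal{H}'_\ell) y'_{\ell-1}\|_\infty \leq p_\ell^2 c_{\ell-1} \,\|\theta - \theta'\|_\infty\, \|y'_{\ell-1}\|_\infty$. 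I would then bound the forward pass of the perturbed network by iterating $\|y'_\ell\|_\infty \leq r^{conv}_\ell \|y'_{\ell-1}\|_\infty$ down to $\|y'_0\|_\infty = \|x\|_\infty \leq D$ (no $\max(D,1)$ is needed, since the absence of biases removes the appended constant $1$), giving $\|y'_{\ell-1}\|_\infty \leq D \prod_{k=1}^{\ell-1} r^{conv}_k$.

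Combining these into the recursion $e_\ell \leq r^{conv}_\ell e_{\ell-1} + p_\ell^2 c_{\ell-1} D \left(\prod_{k=1}^{\ell-1} r^{conv}_k\right)\|\theta-\theta'\|_\infty$ with $e_0 = 0$, I would unroll it to obtain
\begin{equation*}
e_L \leq D\,\|\theta-\theta'\|_\infty \sum_{\ell=1}^L p_\ell^2 c_{\ell-1} \left(\prod_{j=\ell+1}^L r^{conv}_j\right)\left(\prod_{k=1}^{\ell-1} r^{conv}_k\right) = D\,\|\theta-\theta'\|_\infty \sum_{\ell=1}^L p_\ell^2 c_{\ell-1} \prod_{\substack{k=1\\ k\neq \ell}}^L r^{conv}_k.
\end{equation*}
Finally, each partial product $\prod_{k\neq \ell} r^{conv}_k$ is at most $r_{conv}^{L-1}$ by definition of $r_{conv}$, so it factors out of the sum to give the claimed bound. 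The main obstacle is the sparsity bookkeeping in the middle step: one must verify that, across multiple input channels and after the subsampling/upsampling, each row of the convolutional matrix genuinely has at most $p_\ell^2 c_{\ell-1}$ nonzero entries and that these entries are repeated filter coefficients whose perturbation is uniformly bounded by $\|\theta-\theta'\|_\infty$; everything else is a direct specialization of the general recursion.
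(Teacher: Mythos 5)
Your proof is correct and follows essentially the same route as the paper's: the recursion you set up and unroll is exactly the paper's Lemma~\ref{Lemma_article} in its bias-free form (Equation~\eqref{inequality_lemmeC_no_bias}), the forward-pass bound $\|y'_{\ell-1}\|_\infty \leq D\prod_{k=1}^{\ell-1} r^{conv}_k$ matches the paper's treatment, and the key sparsity step---each row of $\mathcal{H}_\ell - \mathcal{H}'_\ell$ having at most $p_\ell^2 c_{\ell-1}$ nonzero entries, each bounded by $\|\theta-\theta'\|_\infty$---is precisely how the paper improves the operator-norm bound of Lemma~\ref{lem:matrices}. The only difference is packaging: you derive and unroll the recursion inline, whereas the paper invokes its pre-established lemmas.
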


With this formulation,  the term \( \sum_{\ell=1}^{L} p_\ell^2 c_{\ell-1} \) accounts for the sparse structure of the convolution matrices \( \mathcal{H}_\ell \) rather than simply using the number of row elements  $N_{l-1} = m_{l-1}n_{l-1}c_{l-1}$ ($m_{l-1}n_{l-1}$ being the size of each channel at each layer, which can be much larger, see Table \ref{tab:model_bound_comparison}). Then notice that we use $r_{conv}$ in this bound, which corresponds to the larger product omitting one term and starting from the first layer, so $r_{conv}$ is smaller than $r_{mean}$ by construction. Finally, we also have an intermediate result between Theorem \ref{Th:my_bound_extend_new} and Theorem \ref{Th:my_bound_extend_new_conv}, which corresponds to the case of MLP without bias, given in Appendix \ref{appendix:proofs} (Theorem \ref{th:mlp_noBias}).


\section{Experiments} \label{sec:experiments}

In this section, we describe the methodology and setup used to validate our theoretical findings, followed by the results of our experiments, that involve \emph{pretrained} architectures. We evaluate the performance of larger models such as \textbf{ResNet18}, \textbf{ResNet50}, and \textbf{MobileNetV2}, all pretrained on the ImageNet dataset \cite{deng2009imagenet}. These models use deep convolutional architectures without biases (except the last fully connected layer) that are frequently used in real-world applications. Note that a particular care must be taken to manage skip connections (see Lemma \ref{lem:resnet_block_representation} and Lemma \ref{lem:resnet50_block_representation} in the Annex) in the calculations. For experiments with these networks, we  removed the BatchNormalization of the networks to match the conditions of Theorem \ref{Th:my_bound_extend_new_conv}. By doing this, we still obtained similar accuracies to models with BatchNorm.  Moreover Batch Normalization is not guaranteed to be 1-Lipschitz in general, since its scale and shift parameters are learned during training. One could assume the $\ell$-th BN layer is $\beta_\ell$-Lipschitz with $\beta_\ell \geq 1$, which would introduce a factor $\prod_{\ell=1}^L \beta_\ell$ in the final bound. So this would just complicate the analysis and loosen  our bound which is already conservative.

The goal of these experiments is to empirically compare the bounds given by our new theoretical results against those derived from the work of \cite{gonon2023approximation}. The pretrained models are also used to evaluate the effects of post-training quantization on model performance. We conduct experiments on the Tiny ImageNet dataset, which contains a total of 110,000 images resized to $64\times64$ pixels. It includes 100,000 training images, 10,000 test images, and covers 200 distinct classes selected from the original ImageNet. Other experiments on the MNIST and CIFAR-10 datasets leading to similar conclusion are presented in the Annex \ref{appendix:MNIST-CIFAR}.

\subsection{Analysis of weight distribution across layers}

The results of Figure \ref{fig:rk_pretrained_comparison} highlight the key advantage of our theoretical bound \eqref{eq:th_conv} compared to prior works.  In \cite{gonon2023approximation}, the bound depends on the maximum operator norm $r$, which in these examples, is significantly larger than the "geometric mean" layer-wise term $r_{conv}$. For example, in  ResNet50, the maximum  $r$ is approximately 3 times larger than $r_{conv}$, and in MobileNetV2, it is more than 11 times larger. Specifically for MobileNetV2 we can see that the weight norm distribution looks like the exponential distribution of  Figure \ref{fig:ICML_comparison_product_all}, that is a favorable case to have a tighter bound. Lots of values are small and the maximum value $r$ is more than 100, while $r_{conv}$ is 9. This shows better consideration of network weight norm distribution.  In practice, techniques such as Cross-Layer Equalization (CLE) \cite{nagel2019data} can be used as a preprocessing step to homogenize weight distributions across layers for the quantized network. By doing this the value of $r_{conv}$ decreases and leads to tighter bounds in our framework. For example in Figure \ref{fig:CLE_comp_norms_resnet50_4bits} (in Appendix \ref{appendix:CLE}), when applying CLE to a 4-bit quantized ResNet50 using the AIMET library \cite{siddegowda2022neural}, we observed that $r_{conv}$ went from 6 to 4 after CLE. This shows how our theoretical results can explain the benefits of such transformations.

\begin{figure*}
	\centering
	\includegraphics[width=0.85\textwidth]{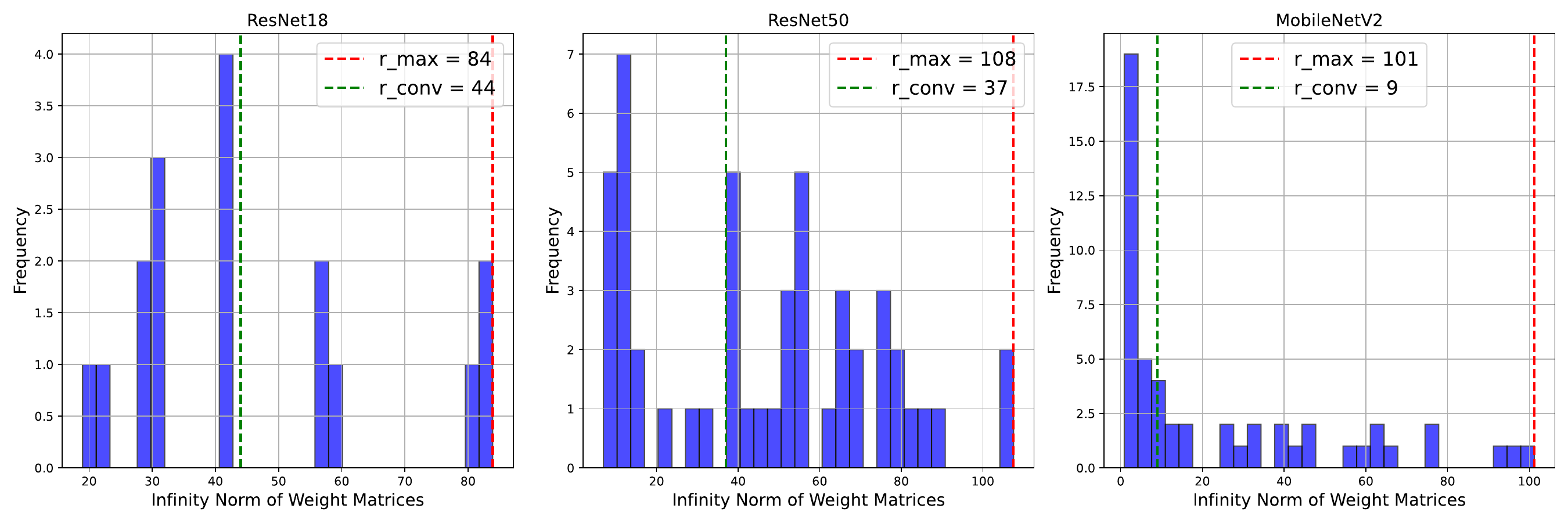}
	\caption{Comparison between the maximum geometric mean term $r_{\text{conv}}$ (green) used in \eqref{eq:th_conv} and the maximum weight norm $r$ (red) used in \eqref{eq:orig_bound1}, for ResNet18, ResNet50 and MobileNetV2, without BatchNorm, showing a smaller value of $r_{conv}$ for all models.}
	\label{fig:rk_pretrained_comparison}
\end{figure*}

\subsection{Quantization} \label{sec:Quantization}

In our experiments, we consider three types of \textbf{post-training quantization}.
We define \emph{uniform} quantization as a function \( Q^{unif} : \Theta_{L,\mathbf{N}} \rightarrow \Theta_{L,\mathbf{N}} \), where \( Q(\theta) \) represents the quantized parameter. Specifically, we use the following definition of uniform quantization in our experiments:
\begin{equation}
	Q^{unif}(\theta) = Q_\eta^{unif}(\theta) = \left\lfloor \frac{\theta}{\eta} \right\rfloor \eta,
\end{equation}
where \( \eta > 0 \) is the quantization step size. For each layer, the parameter \( \eta \) is determined by the formula:
\[
\eta = \frac{W_{\max}}{2^n - 1},
\]
where \( W_{\max} \) is the maximum absolute value of the weight values, and \( n \) is the bit width.

To highlight the role of the term $\|\theta -Q(\theta)\|_\infty$ in the quantization error,  as an alternative approach to uniform quantization, we  define \( Q^{round} \) using a rounding function instead of floor truncation. This form of quantization is defined as:
\begin{equation}
	Q^{round}(\theta) = Q_\eta^{round}(\theta) = \text{round}\left( \frac{\theta}{\eta} \right) \eta,
\end{equation}
where the function \( \text{round} \) rounds \( \frac{\theta}{\eta} \) to the nearest integer before scaling back by \( \eta \). This approach may reduce quantization error in cases where uniform quantization leads to excessive loss of information.

In the same way, we also consider an adaptive quantization approach, which is a simplified AdaRound scheme. AdaRound optimizes a per-parameter rounding offset via a differentiable relaxation of the binary rounding decision. More precisely, for a given parameter \(\theta\), our adaptive quantization is defined as:
	\begin{equation}
		Q^{adaround}(\theta) = \left(\left\lfloor \frac{\theta}{\eta} \right\rfloor + \sigma(\alpha)\right) \eta,
	\end{equation}
	where \(\alpha\) is a learnable offset, and \(\sigma(\cdot)\) denotes the sigmoid function. The offset \(\alpha\) is optimized over a calibration dataset by minimizing the mean squared error between the FP32 output and the quantized one, with an additional regularization term that encourages \(\sigma(\alpha)\) to converge to 0.5 (i.e., standard rounding behavior). This approach better adjusts the quantization with the distribution of the weights.

\subsection{Experimental results on MobileNetV2 and Resnets}

We analyze key parameters involved in our bounds and those from \cite{gonon2023approximation} in Table \ref{tab:model_bound_comparison}.  The depth $ L$ of the considered architectures varies between \( L = 18 \) for ResNet18 and \( L = 53 \) for MobileNetV2. This variation in depth is important, as the approximation bound is exponentially dependent on \( L \). An important difference is observed in the width parameter  $N$, where the values obtained using the formulation of Theorem \ref{Th:my_bound_extend_new_conv} bound are orders of magnitude smaller than those from the previous bound. For instance, for MobileNetV2, the previous bound is calculated with a width of \( 1.2 \times 10^6 \), whereas the new bound reduces this to \(8641 \). Similarly, for ResNet50, the width decreases from \( 8 \times 10^5 \) to \( 4609 \). This significant decrease reflects the tighter characterization provided by the new bound, which avoids overly conservative estimations of \( N \). Another critical parameter is the maximum norm parameter, which is significantly smaller under the new bound, particularly for MobileNetV2 ($ r \approx 101 $) while $r_{mean}$ only equals to $9$. The reduced values of norm parameters reduce the exponential dependency on depth, which is the main pessimistic factor in the bound. These differences between the two bounds is reflected in the ratio value. Even for a shallow network like ResNet18, we notice in Table \ref{tab:model_bound_comparison} that our bound is \(10^{8}\) times tighter and this observation becomes even more relevant for deeper and wider networks such as MobileNetV2, where the ratio reaches \(10^{56}\).

\begin{table*}[t]
	\caption{Comparison of parameters between our bound \eqref{eq:th_conv} and the state-of-the-art \cite{gonon2023approximation} bound on pre-trained models. The comparison is also expressed in terms of a ratio of the two bounds, where the values of the bounds in this ratio are computed exclusively for the convolutional part of the network.}
	\vskip 0.1cm
	\begin{center}
		\begin{small}
			\begin{sc}
				\renewcommand{\arraystretch}{1.5}
				\resizebox{\textwidth}{!}{%
					\begin{tabular}{lcccccc}
						\toprule
						&  & Previous Width & Previous Norm Param & New Width & New Norm Param & Ratio \\
						\midrule
						Model & Depth (\( L \)) & $m_{l-1}n_{l-1}c_{l-1}$ & $ r $  & $p_{l}^2 c_{l-1}$ & $r_{conv}$ & $\dfrac{\text{Previous Bound (2023)}}{\text{New bound}}$ \\
						\midrule
						MobileNetV2 & 53 & $1.2 \times 10^6$ & \( \approx 101\) & 8641 & \( \approx 9 \) & \( \approx 10^{56} \) \\
						ResNet18 & 18 & $8 \times 10^5$ & \( \approx 84 \) & 4609 & \( \approx 44 \) & \( \approx 10^{8} \) \\
						ResNet50 & 50 & $8 \times 10^5$ & \( \approx 108 \) & 4609 & \( \approx 37 \) & \( \approx 10^{27} \) \\
						\bottomrule
					\end{tabular}%
				}
			\end{sc}
		\end{small}
	\end{center}
	\vskip -0.1in
	\label{tab:model_bound_comparison}
\end{table*}

In Figure \ref{fig:ICML_bounds_comparison_Resnet18} (in the introduction), we compare the approximation error bound and the estimated approximation error for the Resnet18 architecture. The results  were obtained by removing the final fully connected layer to keep only the convolutional part of the network. Then we use $r_{conv}$ to calculate the new bound since the network has no biases for the convolutional part. The figure demonstrates an error of approximately $ 10^{40}$, whereas the new bound is only $10^{32}$. This highlights a significant improvement, even for a shallow network (with $ L=18$). In Figure \ref{fig:ICML_bounds_comparison_Mobilnet}, we perform the same experiment for the MobileNetV2 architecture. The effect of weight norm distribution is even more apparent. The previous bound reaches a value of $ 10^{109}$, whereas the new bound is reduced to $ 10^{53}$. For both figures, we can observe that the shape of the bound is accurate as it follows the error approximation trend, up to a constant factor. Nevertheless, the constant is still large even for the new bound, since the output error is at most around $ 10^3$.

\begin{figure}[ht]
	\centering
	\begin{subfigure}[b]{0.49\columnwidth}
		\includegraphics[width=\textwidth]{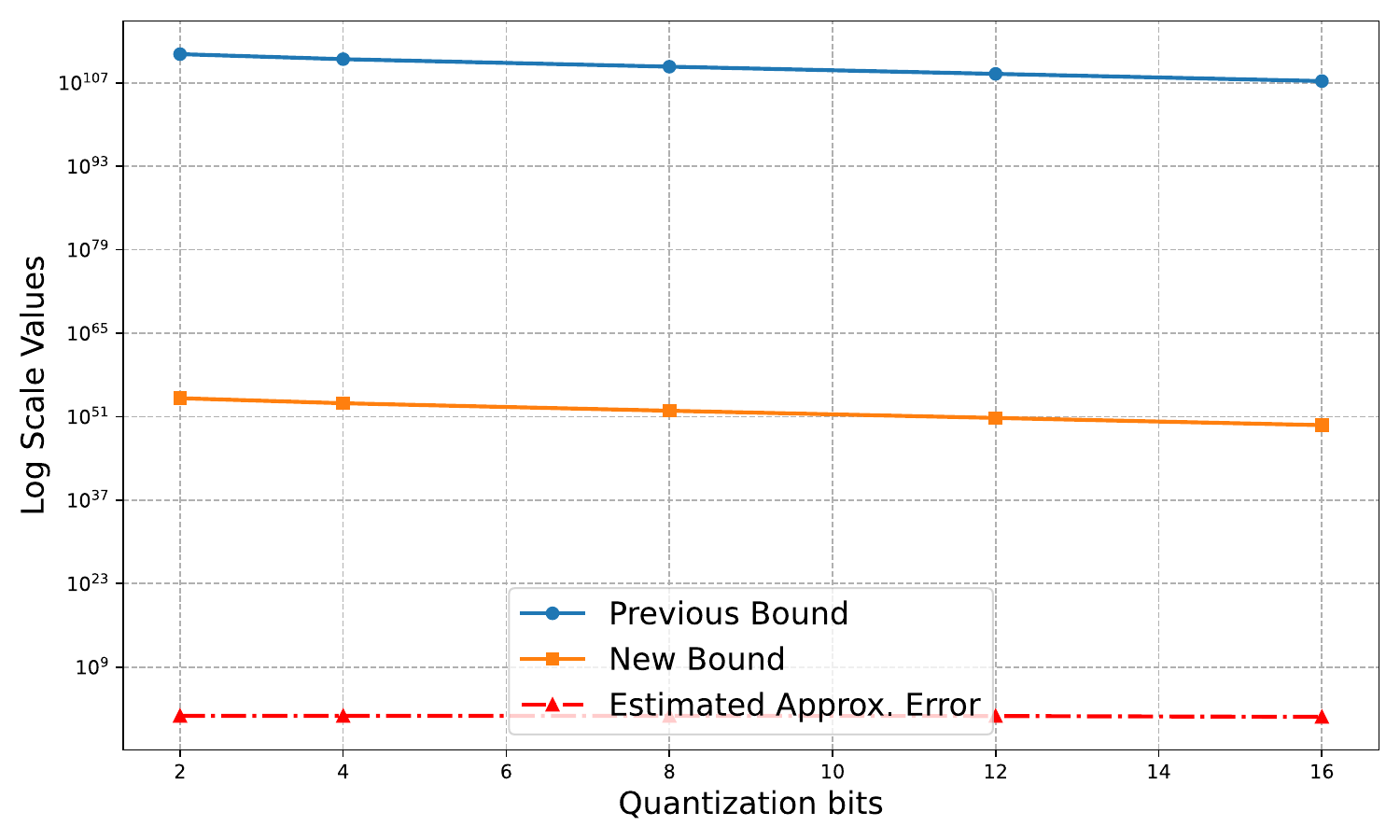}
		\caption{MobileNetV2 without BatchNorm and biases.}
		\label{fig:ICML_bounds_comparison_Mobilnet}
	\end{subfigure}
	\hfill
	\begin{subfigure}[b]{0.49\columnwidth}
		\includegraphics[width=\textwidth]{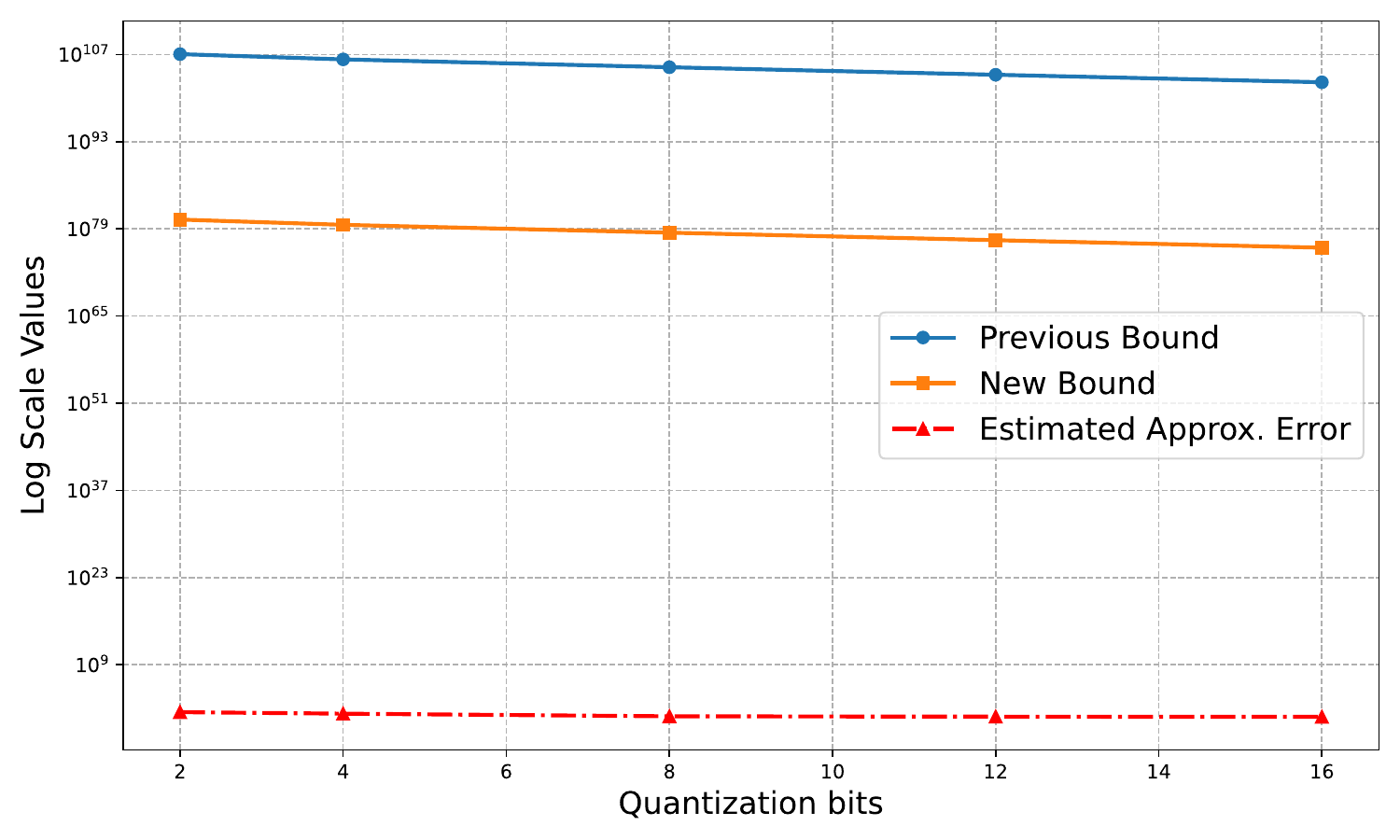}
		\caption{ResNet50 without BatchNorm and biases.}
		\label{fig:ICML_bounds_comparison_Resnet50}
	\end{subfigure}
	\caption{Comparison in log scale between our bound \eqref{eq:th_conv} and the previous bound \eqref{eq:orig_bound1}, on the convolutional part of (a) MobileNetV2 and (b) ResNet50, with respect to the number of bits. Our bound is approximately \(10^{56}\) times tighter for MobileNetV2 and \(10^{27}\) times tighter for ResNet50.}
	\label{fig:ICML_bounds_comparison_models}
\end{figure}
We can make similar observations from Figure \ref{fig:ICML_bounds_comparison_Resnet50} because the previous bound has the same order of magnitude. However, even though MobileNetV2 and ResNet50 have similar depths, our bound has adapted much better to the specific distribution of weight norms in MobileNetV2. The bound value for MobileNetV2 is significantly lower than the ResNet50 one, being arround $ 10^{26}$ times smaller. This is mainly due to the favorable distribution of weight norms in MobileNetV2, which favors $ r_{\text{conv}}$ to remain small.
Although our new bound significantly improves the previous state of the art bound (e.g., \cite{gonon2023approximation}, [Corollary F.1]\cite{gonon2024path}), the new bound is still orders of magnitude far from what is observed in real outputs of networks (Figures \ref{fig:ICML_bounds_comparison_Resnet18},\ref{fig:ICML_bounds_comparison_Mobilnet} and \ref{fig:ICML_bounds_comparison_Resnet50}). One reason is that such bounds are derived under theoretical worst-case scenarios, which rarely occur in practice. Another reason is the generality of the theorem itself, which leads to a very conservative estimate of the output error.

 In Figure~\ref{fig:comparison_pretrained_quantized_tinyimagenet} we analyze the impact of post-training quantization on the accuracy of MobileNetsV2 and ResNets, across Tiny Imagenet. We notice that the Adaround method gives similar performances to the round method,  except for ResNet18 where the behavior for extreme quantization ($\leq$ 4 bits) is better. Indeed for ResNet18 the accuracy with 4 bit quantization is over $30\%$ with the AdaRound method, while  it is below $10\%$ accuracy with floor and round methods.
Finally, it appears that MobileNetV2 better support quantization with floor method than ResNets. This is probably due to the specific architecture of MobileNetV2 that uses residual bottlenecks and Relu6 activation function ($ReLu6(x):= \min(\max(0,x),6)$ which is known to better support quantization \cite{sandler2018mobilenetv2}. Furthermore, the dataset influences precision as well, that is why we conducted similar experiments
across MNIST \cite{lecun1998mnist} and CIFAR-10 \cite{krizhevsky2009learning} in which we can also observe the behavior reflected by the form of the bound: depth significantly influences quantization errors (see Figure \ref{fig:comparison_pretrained_quantized_six} in the Appendix \ref{appendix:MNIST-CIFAR}).

\begin{figure}[ht]
	\centering
	\begin{subfigure}[b]{0.33\textwidth}
		\includegraphics[width=\textwidth]{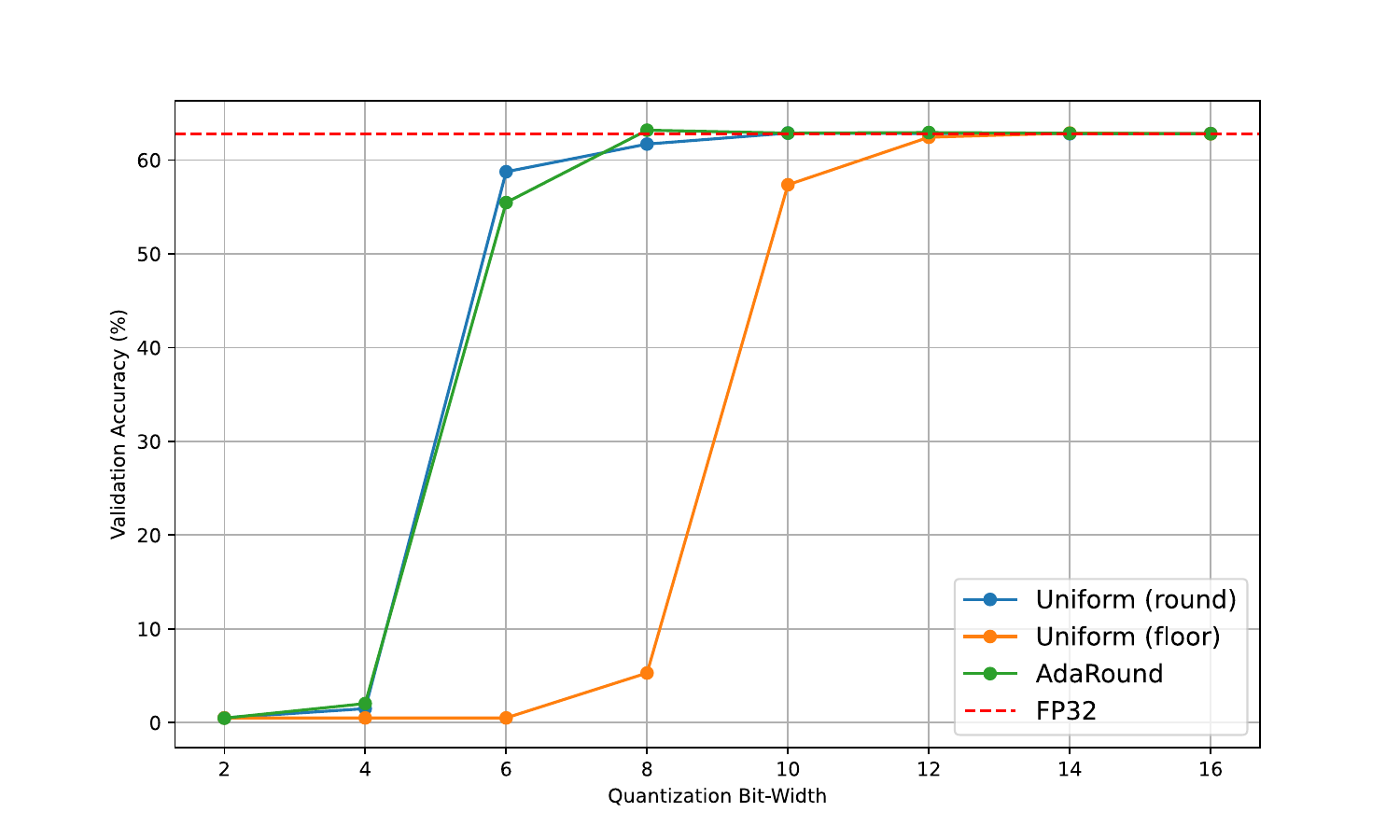}
		\caption{MobileNetV2}
		\label{fig:tiny_mobilenetv2}
	\end{subfigure}
	\hfill
	\begin{subfigure}[b]{0.33\textwidth}
		\includegraphics[width=\textwidth]{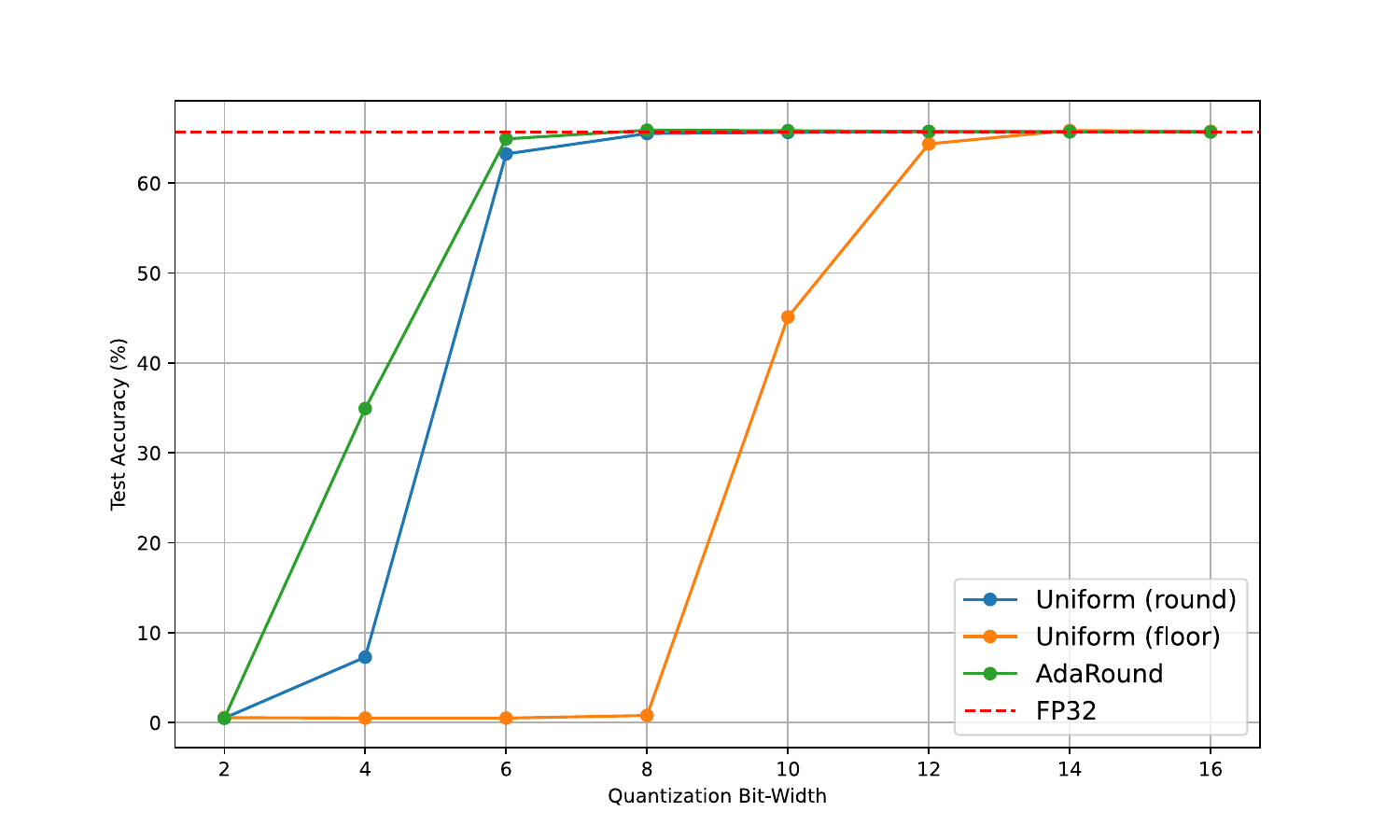}
		\caption{ResNet18}
		\label{fig:tiny_resnet18}
	\end{subfigure}
	\hfill
	\begin{subfigure}[b]{0.32\textwidth}
		\includegraphics[width=\textwidth]{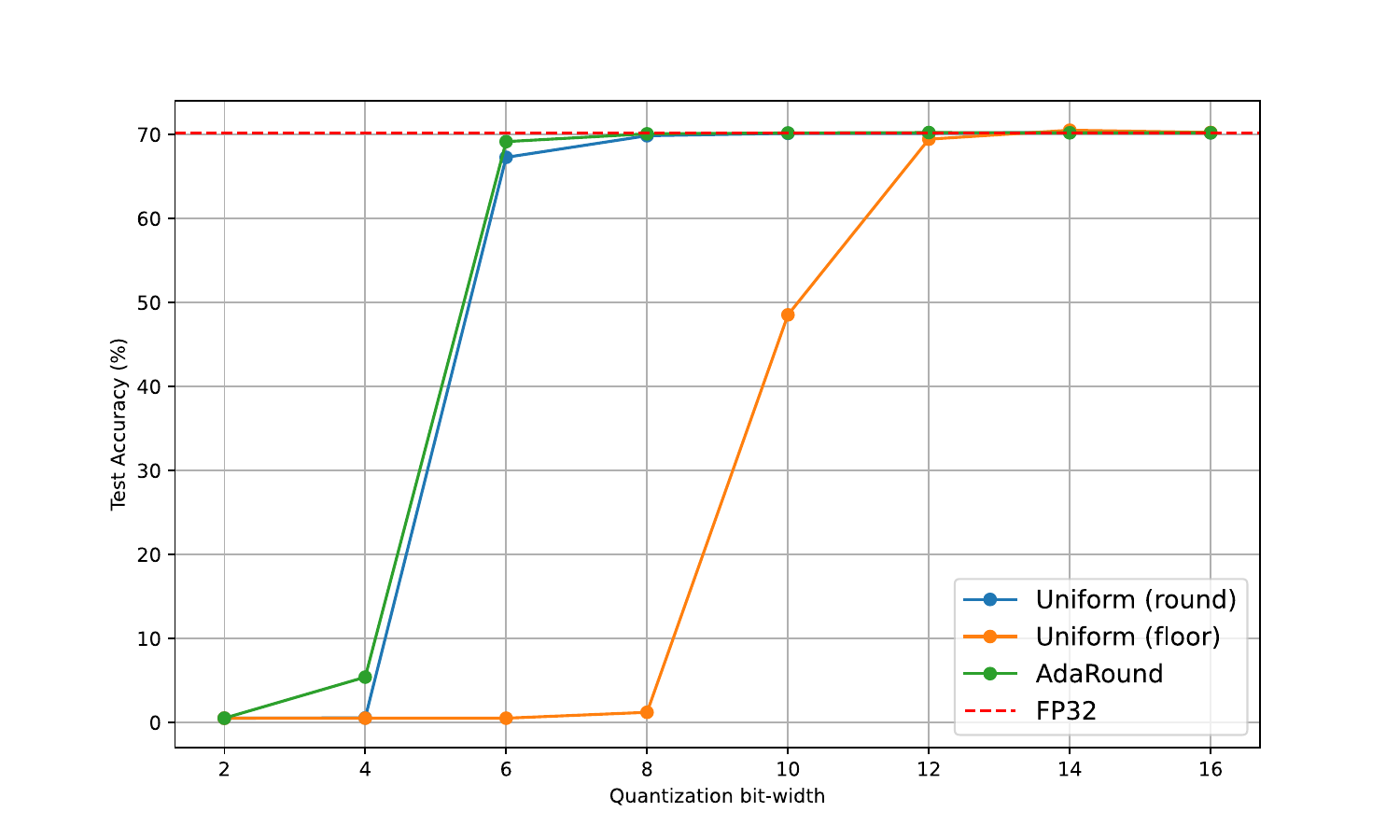}
		\caption{ResNet50}
		\label{fig:tiny_resnet50}
	\end{subfigure}
	\caption{Graphs illustrating the effect of quantization on performance on Tiny ImageNet  for three quantization functions (round, uniform and Adaround). The results highlight how quantization reduces memory requirements while maintaining or approaching the base model's accuracy. The amount of quantization needed to reach the base precision depends on the quantization function used.}
	\label{fig:comparison_pretrained_quantized_tinyimagenet}
\end{figure}

\subsection{Effect  of depth in the quantization of MLP without biases}

In Figure \ref{fig:MLP_comparison}, we investigate the improvement in accuracy of our bound in  Theorem \ref{Th:my_bound_extend_new} with respect to depth.  We construct  four MLPs of depths: 5, 7, 9, and 11, trained on the MNIST dataset with architectures provided in the Appendix \ref{sec:setup_MLPs}. Independently of the number of bits, the ratio value is  dependent on the depth, starting from  $\approx 10^3$ for depth 5 to $\approx 10^8$ for depth 11. Additionally, it is observed that for extreme quantization (4 bits), there is a notable difference compared to other quantizations (8, 16, and 24 bits). Indeed, for 4 bits, the model accuracy is bad ($\leq 10\%$), impacting the norms of the weight matrices and making $r_{mean}$ closer to $r_{max}$. This results in a slightly lower ratio compared to other quantization levels.
\begin{figure}[h]
	\centering
	\begin{subfigure}[b]{0.49\columnwidth}
		\includegraphics[width=\textwidth]{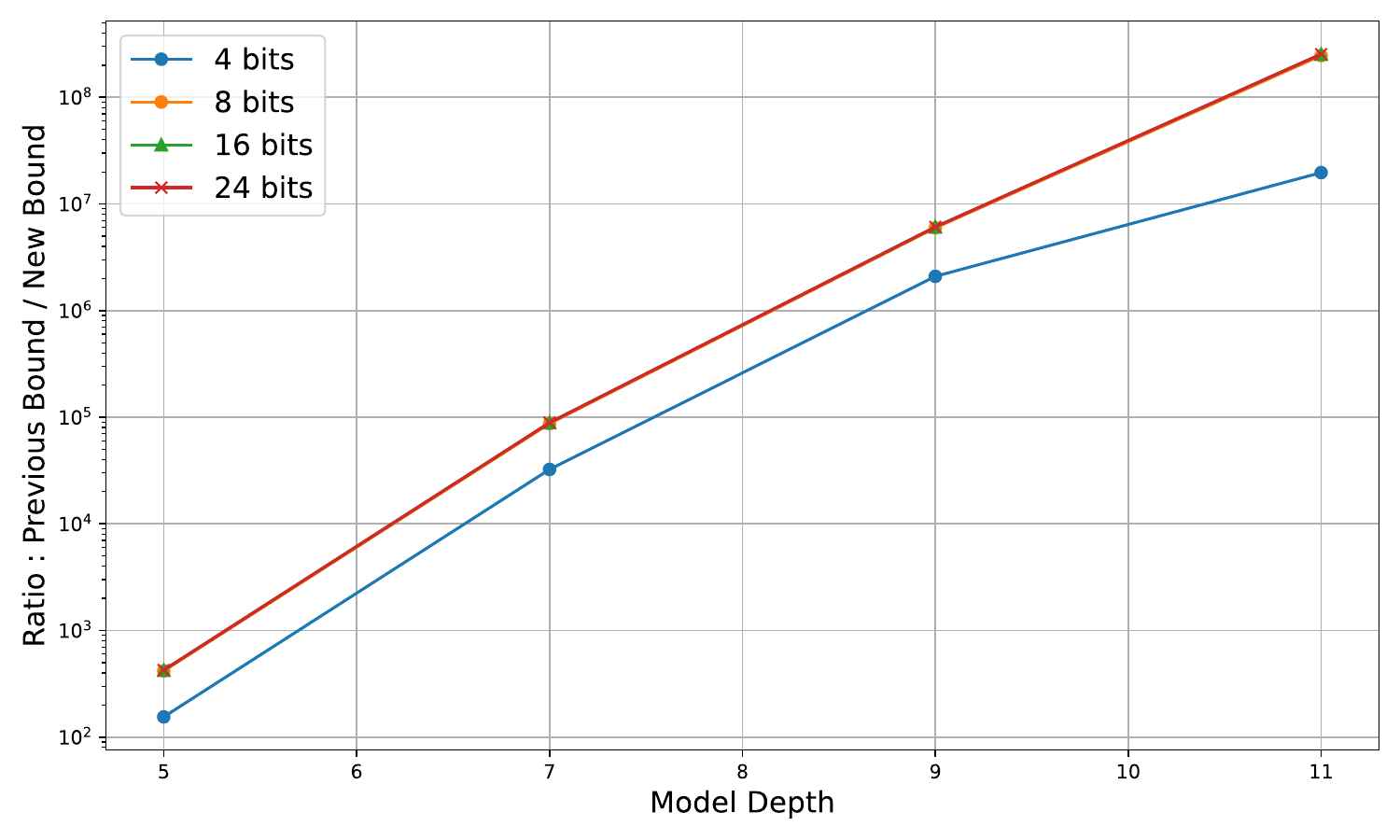}
		\caption{Ratio in log scale between the previous bound \eqref{eq:orig_bound1} and our bound \eqref{eq1_main_th} as a function of model depth for different quantization bit-widths.}
		\label{fig:MLP_comparison}
	\end{subfigure}
	\hfill
	\begin{subfigure}[b]{0.4\columnwidth}
		\includegraphics[width=\textwidth]{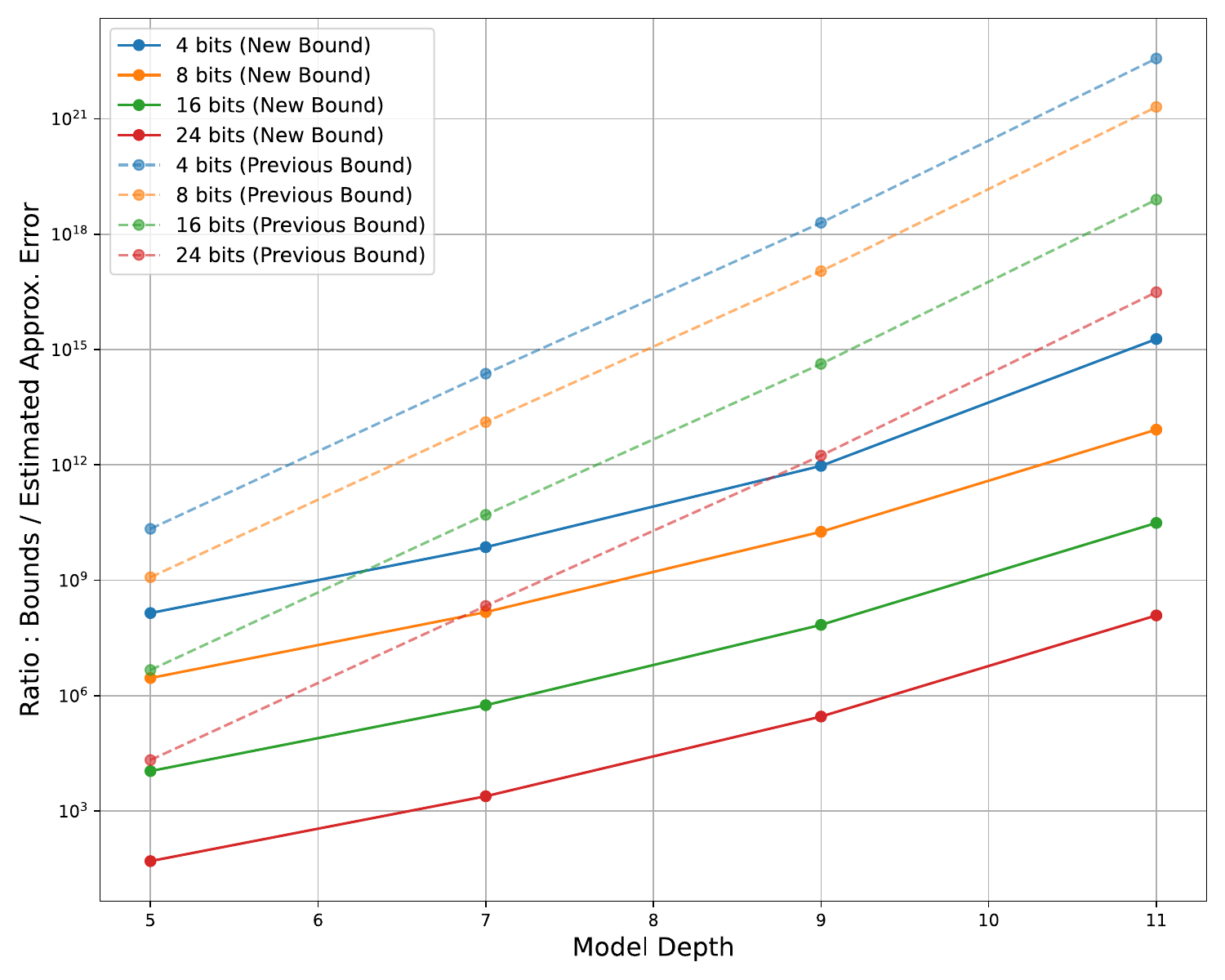}
		\caption{Ratio in log scale between bounds \eqref{eq:orig_bound1}, \eqref{eq1_main_th} and the estimated error approximation, as a function of model depth for different quantization bit-widths.}
		\label{fig:MLP_both_bounds_ratios}
	\end{subfigure}
	\caption{Comparison for MLPs of depths 5, 7, 9 and 11 on MNIST. (a) shows how the ratio of our bound over the previous bound grows exponentially with depth, and (b) demonstrates that our bound reduces that exponential dependence across bit-widths.}
	\label{fig:MLP_comparison_side_by_side}
\end{figure}
Finally, in Figure \ref{fig:MLP_both_bounds_ratios}, we compare the bounds with respect to the approximation error. First, the slope of the curves does not change. Moreover, the exponential dependence on depth is significantly reduced. Indeed, the slope given by $ r$ for the previous bound is much steeper than the slope given by $ r_{\text{mean}}$ in our bound. Thus, the new bound better adapts to the network's behavior, up to a constant. However, the ideal slope, which would perfectly match the network's behavior, would be a horizontal curve. We extended our experiments on MLPs by replacing the ReLU activation with a sigmoid function implemented with the \texttt{tanh}. This function satisfies our hypotheses and the experimental results show similar trends to those obtained with ReLU, showing that our approach is robust to different activation functions (see Figure \ref{fig:MLP_comparison_side_by_side_Tanh} in Appendix \ref{appendix:MLP_tanh}).

\section{Conclusion and Perspectives} \label{sec:conclusion}

In this work, we introduced a novel theoretical approximation bound for deep (convolutional) neural networks. By generalizing the infinity-norm-based bound and introducing a more flexible approach to operator norm constraints, our bound significantly improves existing ones for a broad range of network architectures. This was validated on popular architectures such as ResNet18, ResNet50, and MobileNetV2, as well as MLPs without bias quantization; showcasing improved accuracy in performance predictions for post-training quantized networks. The present study is focused on CNNs, and extensions to other architectures are beyond the scope of this paper. It is nevertheless clear that extending our theoretical framework to Vision Transformers \citep{dosovitskiy2020image} would be highly interesting for future work. Such an extension would require a careful treatment of specific Transformers block, including layer normalization and multi-head attention.

As for layer normalization, it is not globally Lipschitz due to its input-dependent scaling. However, after training, one can empirically estimate a Lipschitz constant $L_{\text{LN}}$ for each layer, which can then be integrated into our framework in the same way as batch normalization, as discussed in Section~\ref{sec:experiments}.

Regarding multi-head attention, the main difficulty lies in bounding the softmax operation used in scaled dot-product attention:
\[
\text{Att}(Q,K,V) = \text{softmax}\!\left( \frac{QK^\top}{\sqrt{d}} \right) V .
\]
While softmax is not globally 1-Lipschitz, it is locally Lipschitz over bounded domains. A local Lipschitz constant $L_{\text{soft}}$ can be extracted depending on the norm of the inputs, and included in the bound. Recent work \citep{castin2023smooth} provides precise estimates for this constant. The concatenation of multiple attention heads can be handled by bounding each head independently and summing their contributions, although this may result in pessimistic bounds. Skip connections can be addressed on a case-by-case basis depending on the model architecture, as we did for CNNs.

Another important and promising lead lies in developing a probabilistic approach that reflects the network’s expected behavior under typical operating conditions, such as average or quantile-based performance, providing a complementary view to our deterministic bound. While a deterministic bound is crucial for critical applications, probabilistic insights can enhance our understanding of the network’s behavior in practical, real-world scenarios.

\bibliography{references}

\begin{thebibliography}{34}
\providecommand{\natexlab}[1]{#1}
\providecommand{\url}[1]{\texttt{#1}}
\expandafter\ifx\csname urlstyle\endcsname\relax
  \providecommand{\doi}[1]{doi: #1}\else
  \providecommand{\doi}{doi: \begingroup \urlstyle{rm}\Url}\fi

\bibitem[Barron(1993)]{barron1993universal}
Andrew~R Barron.
\newblock Universal approximation bounds for superpositions of a sigmoidal
  function.
\newblock \emph{IEEE Transactions on Information theory}, 39\penalty0
  (3):\penalty0 930--945, 1993.

\bibitem[Bejani \& Ghatee(2021)Bejani and Ghatee]{bejani2021systematic}
Mohammad~Mahdi Bejani and Mehdi Ghatee.
\newblock A systematic review on overfitting control in shallow and deep neural
  networks.
\newblock \emph{Artificial Intelligence Review}, 54\penalty0 (8):\penalty0
  6391--6438, 2021.

\bibitem[Berner et~al.(2020)Berner, Grohs, and Jentzen]{berner2020analysis}
Julius Berner, Philipp Grohs, and Arnulf Jentzen.
\newblock Analysis of the generalization error: Empirical risk minimization
  over deep artificial neural networks overcomes the curse of dimensionality in
  the numerical approximation of black--scholes partial differential equations.
\newblock \emph{SIAM Journal on Mathematics of Data Science}, 2\penalty0
  (3):\penalty0 631--657, 2020.

\bibitem[Castin et~al.(2023)Castin, Ablin, and Peyr{\'e}]{castin2023smooth}
Val{\'e}rie Castin, Pierre Ablin, and Gabriel Peyr{\'e}.
\newblock How smooth is attention?
\newblock \emph{arXiv preprint arXiv:2312.14820}, 2023.

\bibitem[Choukroun et~al.(2019)Choukroun, Kravchik, Yang, and
  Kisilev]{choukroun2019low}
Yoni Choukroun, Eli Kravchik, Fan Yang, and Pavel Kisilev.
\newblock Low-bit quantization of neural networks for efficient inference.
\newblock In \emph{2019 IEEE/CVF International Conference on Computer Vision
  Workshop (ICCVW)}, pp.\  3009--3018. IEEE, 2019.

\bibitem[Courbariaux et~al.(2015)Courbariaux, Bengio, and
  David]{courbariaux2015binaryconnect}
Matthieu Courbariaux, Yoshua Bengio, and Jean-Pierre David.
\newblock Binaryconnect: Training deep neural networks with binary weights
  during propagations.
\newblock \emph{Advances in neural information processing systems}, 28, 2015.

\bibitem[Cs{\'a}ji et~al.(2001)]{csaji2001approximation}
Bal{\'a}zs~Csan{\'a}d Cs{\'a}ji et~al.
\newblock Approximation with artificial neural networks.
\newblock \emph{Faculty of Sciences, Etvs Lornd University, Hungary},
  24\penalty0 (48):\penalty0 7, 2001.

\bibitem[Deng et~al.(2009)Deng, Dong, Socher, Li, Li, and
  Fei-Fei]{deng2009imagenet}
Jia Deng, Wei Dong, Richard Socher, Li-Jia Li, Kai Li, and Li~Fei-Fei.
\newblock Imagenet: A large-scale hierarchical image database.
\newblock In \emph{2009 IEEE conference on computer vision and pattern
  recognition}, pp.\  248--255. Ieee, 2009.

\bibitem[Denton et~al.(2014)Denton, Zaremba, Bruna, LeCun, and
  Fergus]{denton2014exploiting}
Emily Denton, Wojciech Zaremba, Joan Bruna, Yann LeCun, and Rob Fergus.
\newblock Exploiting linear structure within convolutional networks for
  efficient evaluation.
\newblock In \emph{Advances in Neural Information Processing Systems}, pp.\
  1269--1277, 2014.

\bibitem[DeVore et~al.(2021)DeVore, Hanin, and Petrova]{devore2021neural}
Ronald DeVore, Boris Hanin, and Guergana Petrova.
\newblock Neural network approximation.
\newblock \emph{Acta Numerica}, 30:\penalty0 327--444, 2021.

\bibitem[Ding et~al.(2019)Ding, Liu, Xiong, and Shi]{ding2019universal}
Yukun Ding, Jinglan Liu, Jinjun Xiong, and Yiyu Shi.
\newblock On the universal approximability and complexity bounds of quantized
  relu neural networks.
\newblock In \emph{International Conference on Learning Representations}.
  International Conference on Learning Representations, ICLR, 2019.

\bibitem[Dosovitskiy et~al.(2020)Dosovitskiy, Beyer, Kolesnikov, Weissenborn,
  Zhai, Unterthiner, Dehghani, Minderer, Heigold, Gelly,
  et~al.]{dosovitskiy2020image}
Alexey Dosovitskiy, Lucas Beyer, Alexander Kolesnikov, Dirk Weissenborn,
  Xiaohua Zhai, Thomas Unterthiner, Mostafa Dehghani, Matthias Minderer, Georg
  Heigold, Sylvain Gelly, et~al.
\newblock An image is worth 16x16 words: Transformers for image recognition at
  scale.
\newblock \emph{arXiv preprint arXiv:2010.11929}, 2020.

\bibitem[Finkelstein et~al.(2019)Finkelstein, Almog, and
  Grobman]{finkelstein2019fighting}
Alexander Finkelstein, Uri Almog, and Mark Grobman.
\newblock Fighting quantization bias with bias.
\newblock \emph{arXiv preprint arXiv:1906.03193}, 2019.

\bibitem[Forsberg et~al.(2020)Forsberg, Lind{\'e}n, Hjorth, M{\aa}nefjord, and
  Daneshtalab]{forsberg2020challenges}
H{\aa}kan Forsberg, Joakim Lind{\'e}n, Johan Hjorth, Torbj{\"o}rn
  M{\aa}nefjord, and Masoud Daneshtalab.
\newblock Challenges in using neural networks in safety-critical applications.
\newblock In \emph{2020 AIAA/IEEE 39th Digital Avionics Systems Conference
  (DASC)}, pp.\  1--7. IEEE, 2020.

\bibitem[Gholami et~al.(2022)Gholami, Kim, Dong, Yao, Mahoney, and
  Keutzer]{gholami2022survey}
Amir Gholami, Sehoon Kim, Zhen Dong, Zhewei Yao, Michael~W Mahoney, and Kurt
  Keutzer.
\newblock A survey of quantization methods for efficient neural network
  inference.
\newblock In \emph{Low-Power Computer Vision}, pp.\  291--326. Chapman and
  Hall/CRC, 2022.

\bibitem[Gonon et~al.(2023)Gonon, Brisebarre, Gribonval, and
  Riccietti]{gonon2023approximation}
Antoine Gonon, Nicolas Brisebarre, R{\'e}mi Gribonval, and Elisa Riccietti.
\newblock Approximation speed of quantized vs. unquantized relu neural networks
  and beyond.
\newblock \emph{IEEE Transactions on Information Theory}, 2023.

\bibitem[Gonon et~al.(2024)Gonon, Brisebarre, Riccietti, and
  Gribonval]{gonon2024path}
Antoine Gonon, Nicolas Brisebarre, Elisa Riccietti, and R{\'e}mi Gribonval.
\newblock Path-metrics, pruning, and generalization.
\newblock \emph{arXiv preprint arXiv:2405.15006}, 2024.

\bibitem[Goodfellow et~al.(2016)Goodfellow, Bengio, and
  Courville]{goodfellow2016deep}
Ian Goodfellow, Yoshua Bengio, and Aaron Courville.
\newblock \emph{Deep learning}.
\newblock MIT press, 2016.

\bibitem[Han et~al.(2016)Han, Mao, and Dally]{han2015deep}
Song Han, Huizi Mao, and William~J. Dally.
\newblock Deep compression: Compressing deep neural networks with pruning,
  trained quantization and huffman coding.
\newblock In \emph{4th International Conference on Learning Representations
  (ICLR)}, 2016.

\bibitem[He et~al.(2016)He, Zhang, Ren, and Sun]{he2016deep}
Kaiming He, Xiangyu Zhang, Shaoqing Ren, and Jian Sun.
\newblock Deep residual learning for image recognition.
\newblock In \emph{Proceedings of the IEEE conference on computer vision and
  pattern recognition}, pp.\  770--778, 2016.

\bibitem[Hubara et~al.(2018)Hubara, Courbariaux, Soudry, El-Yaniv, and
  Bengio]{hubara2018quantized}
Itay Hubara, Matthieu Courbariaux, Daniel Soudry, Ran El-Yaniv, and Yoshua
  Bengio.
\newblock Quantized neural networks: Training neural networks with low
  precision weights and activations.
\newblock \emph{Journal of Machine Learning Research}, 18\penalty0
  (187):\penalty0 1--30, 2018.

\bibitem[Krizhevsky et~al.(2009)Krizhevsky, Hinton,
  et~al.]{krizhevsky2009learning}
Alex Krizhevsky, Geoffrey Hinton, et~al.
\newblock Learning multiple layers of features from tiny images.
\newblock 2009.

\bibitem[Krogh \& Hertz(1991)Krogh and Hertz]{krogh1991simple}
Anders Krogh and John Hertz.
\newblock A simple weight decay can improve generalization.
\newblock \emph{Advances in neural information processing systems}, 4, 1991.

\bibitem[LeCun(1998)]{lecun1998mnist}
Yann LeCun.
\newblock The {MNIST} database of handwritten digits.
\newblock \url{http://yann.lecun.com/exdb/mnist/}, 1998.
\newblock Accessed: 2025-05-06.

\bibitem[LeCun et~al.(2015)LeCun, Bengio, and Hinton]{lecun2015deep}
Yann LeCun, Yoshua Bengio, and Geoffrey Hinton.
\newblock Deep learning.
\newblock \emph{Nature}, 521\penalty0 (7553):\penalty0 436--444, 2015.

\bibitem[Nagel et~al.(2019)Nagel, Baalen, Blankevoort, and
  Welling]{nagel2019data}
Markus Nagel, Mart~van Baalen, Tijmen Blankevoort, and Max Welling.
\newblock Data-free quantization through weight equalization and bias
  correction.
\newblock In \emph{Proceedings of the IEEE/CVF international conference on
  computer vision}, pp.\  1325--1334, 2019.

\bibitem[Nagel et~al.(2020)Nagel, Amjad, Van~Baalen, Louizos, and
  Blankevoort]{nagel2020up}
Markus Nagel, Rana~Ali Amjad, Mart Van~Baalen, Christos Louizos, and Tijmen
  Blankevoort.
\newblock Up or down? adaptive rounding for post-training quantization.
\newblock In \emph{International conference on machine learning}, pp.\
  7197--7206. PMLR, 2020.

\bibitem[Neyshabur et~al.(2018)Neyshabur, Bhojanapalli, and
  Srebro]{neyshabur2017pac}
Behnam Neyshabur, Srinadh Bhojanapalli, and Nathan Srebro.
\newblock A pac-bayesian approach to spectrally-normalized margin bounds for
  neural networks.
\newblock \emph{International Conference on Learning Representations (ICLR)},
  2018.

\bibitem[Petersen et~al.(2021)Petersen, Raslan, and
  Voigtlaender]{petersen2021topological}
Philipp Petersen, Mones Raslan, and Felix Voigtlaender.
\newblock Topological properties of the set of functions generated by neural
  networks of fixed size.
\newblock \emph{Foundations of computational mathematics}, 21:\penalty0
  375--444, 2021.

\bibitem[Sandler et~al.(2018)Sandler, Howard, Zhu, Zhmoginov, and
  Chen]{sandler2018mobilenetv2}
Mark Sandler, Andrew Howard, Menglong Zhu, Andrey Zhmoginov, and Liang-Chieh
  Chen.
\newblock Mobilenetv2: Inverted residuals and linear bottlenecks.
\newblock In \emph{Proceedings of the IEEE conference on computer vision and
  pattern recognition}, pp.\  4510--4520, 2018.

\bibitem[Santos \& Papa(2022)Santos and Papa]{santos2022avoiding}
Claudio Filipi Gon{\c{c}}alves~Dos Santos and Jo{\~a}o~Paulo Papa.
\newblock Avoiding overfitting: A survey on regularization methods for
  convolutional neural networks.
\newblock \emph{ACM Computing Surveys (CSUR)}, 54\penalty0 (10s):\penalty0
  1--25, 2022.

\bibitem[Sedghi et~al.(2019)Sedghi, Gupta, and Long]{sedghi2018singular}
Hanie Sedghi, Vineet Gupta, and Philip~M Long.
\newblock The singular values of convolutional layers.
\newblock \emph{International Conference on Learning Representations (ICLR)},
  2019.

\bibitem[Siddegowda et~al.(2022)Siddegowda, Fournarakis, Nagel, Blankevoort,
  Patel, and Khobare]{siddegowda2022neural}
Sangeetha Siddegowda, Marios Fournarakis, Markus Nagel, Tijmen Blankevoort,
  Chirag Patel, and Abhijit Khobare.
\newblock Neural network quantization with ai model efficiency toolkit (aimet).
\newblock \emph{arXiv preprint arXiv:2201.08442}, 2022.

\bibitem[Wan et~al.(2013)Wan, Zeiler, Zhang, Le~Cun, and
  Fergus]{wan2013regularization}
Li~Wan, Matthew Zeiler, Sixin Zhang, Yann Le~Cun, and Rob Fergus.
\newblock Regularization of neural networks using dropconnect.
\newblock In \emph{International conference on machine learning}, pp.\
  1058--1066. PMLR, 2013.

\end{thebibliography}
\bibliographystyle{tmlr}

\appendix
\section{Proof of main results}  \label{appendix:proofs}
For completeness, we begin by recalling  the value of the infinity norm of a matrix.
\begin{lemma} \label{lem:matrices}
	For all matrices $A \in \mathbb R^{m\times n}$ and all $x \in \mathbb{R}^n$ we have:
	\begin{equation}
		\| A \|_{\mathrm{op},\infty} := \sup_{x \in \mathbb{R}^{n}, \; \|x\|_\infty = 1} \| Wx \|_\infty = \max_{1 \leq i \leq m} \sum_{j=1}^n |a_{i,j}|
	\end{equation}
\end{lemma}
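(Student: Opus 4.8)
The plan is to prove the identity by establishing the two inequalities $\|A\|_{\mathrm{op},\infty} \le \max_{1\le i\le m} \sum_{j=1}^n |a_{i,j}|$ and $\|A\|_{\mathrm{op},\infty} \ge \max_{1\le i\le m} \sum_{j=1}^n |a_{i,j}|$ separately, which is the standard route for operator-norm identities: the first inequality controls every admissible $x$, while the second exhibits an explicit maximizer.

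For the upper bound, I would fix any $x$ with $\|x\|_\infty = 1$ and bound each coordinate of $Ax$ by the triangle inequality: $|(Ax)_i| = \bigl|\sum_{j} a_{i,j} x_j\bigr| \le \sum_j |a_{i,j}|\,|x_j| \le \sum_j |a_{i,j}|$, where the last step uses $|x_j| \le \|x\|_\infty = 1$. Taking the maximum over the row index $i$ gives $\|Ax\|_\infty \le \max_i \sum_j |a_{i,j}|$, and since this estimate is uniform over the unit sphere $\{\|x\|_\infty = 1\}$, the supremum defining $\|A\|_{\mathrm{op},\infty}$ inherits the same bound.

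For the lower bound, and to simultaneously show the supremum is attained, I would let $i^\star$ be a row index achieving $\max_i \sum_j |a_{i,j}|$ and take the test vector $x^\star$ with entries $x^\star_j = \sign(a_{i^\star, j})$. Then $\|x^\star\|_\infty \le 1$, and the $i^\star$-th coordinate of $A x^\star$ equals $\sum_j a_{i^\star,j}\,\sign(a_{i^\star,j}) = \sum_j |a_{i^\star,j}|$, so $\|A x^\star\|_\infty \ge \sum_j |a_{i^\star,j}| = \max_i \sum_j |a_{i,j}|$. Combining this with the upper bound yields the claimed equality.

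There is no genuine obstacle here; the only points needing slight care are the degenerate cases. If the maximizing row is identically zero, then the whole maximum is $0$ and both sides vanish. To keep $x^\star$ on the unit sphere in all cases, I would adopt the convention $\sign(0) = 1$, so that $x^\star \in \{-1,1\}^n$ always satisfies $\|x^\star\|_\infty = 1$; the computation of the $i^\star$-th coordinate is unaffected since the zero entries contribute nothing to $\sum_j a_{i^\star,j}\,\sign(a_{i^\star,j})$. With this convention the supremum is realized by $x^\star$, confirming that it is in fact a maximum, as asserted.
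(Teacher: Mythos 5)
Your proof is correct and follows essentially the same route as the paper's: the triangle-inequality upper bound over the unit sphere, then the sign-vector $x^\star_j = \sign(a_{i^\star,j})$ for the maximizing row $i^\star$ to attain equality. Your explicit convention $\sign(0)=1$ to keep $\|x^\star\|_\infty = 1$ in degenerate cases is a small refinement the paper glosses over, but it does not change the argument.
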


\begin{proof}
	
	Let $A \in \mathbb R^{m\times n}$ and $x \in \mathbb R^{n}$ with $\|x\|_\infty = 1$, then
	
	\begin{align}
		\|Ax\|_\infty &= \max_{1 \leq i \leq m} \left| (Ax)_i \right| = \max_{1 \leq i \leq m} \left| \sum_{j=1}^n a_{i,j} x_j \right|\leq \max_{1 \leq i \leq m}  \sum_{j=1}^n \left| a_{i,j} \right| \|x\|_\infty = \max_{1 \leq i \leq m}  \sum_{j=1}^n \left|a_{i,j} \right|
	\end{align}

	To show that equality holds, let \( x \) such that \( x_j = \mathrm{sign}(a_{i^\star,j})\) where $i^\star \in  \arg \max_i \sum_{j=1}^n \left|a_{i,j} \right|$.  Then \( \|x\|_\infty = 1 \) and:
	\begin{equation}
		\|Ax\|_\infty =  \sum_{j=1}^n a_{i^\star j}\cdot \mathrm{sign}(a_{i^\star j}) =  \sum_{j=1}^n |a_{i^\star j} | =  \max_i \sum_{j=1}^n \left|a_{i,j} \right|.
	\end{equation}
	
	This shows that:
	\begin{equation}
		\|A\|_{\mathrm{op},\infty}= \max_{1 \leq i \leq m} \sum_{j=1}^n |a_{ij}|.
	\end{equation}
\end{proof}

The following Lemma bounds the difference of outputs between a network and its quantization. It is adapated from  [Lemma C.1]\cite{gonon2023approximation} to weight matrices with included identical biases.

\begin{lemma}\label{Lemma_article}
	Let $(L, \mathbf{N})$ be an architecture with  $L \geq 1$, denoting
	$\theta = (\tilde{W}_1, \dots, \tilde{W}_L)$,
	$\theta' = (\tilde{W}'_1, \dots, \tilde{W}'_L) \in \Theta_{L, \mathbf{N}}$ as two  sets of parameters associated with this architecture, with each last column of the matrices composed by biases of the corresponding layer. We assume that the two networks have same biases (i.e. we do not quantize the bias) For every $\ell = 1, \dots, L-1$, define $\theta'_\ell$ as the parameter deduced from $\theta'$, associated with the architecture $(\ell, (N_0, \dots, N_\ell))$:
	\[
	\theta'_\ell = (\tilde{W}'_1, \dots, \tilde{W}'_\ell).
	\]
	Then for every $\tilde{x} = \begin{pmatrix}
		x \\ 1 \end{pmatrix} \; \text{with} \; x \in \mathbb{R}^{N_0}$, denoting by $W_k$ and $W'_k$ the  weight matrices without biases (i.e. $\tilde{W}_k$ and $\tilde{W}'_k$ with last column removed), then, for any $1$-Lipschitz activation function $\sigma$, we have:
	
	\begin{equation} \label{inequality_lemmeC}
		\|R_\theta (\tilde{x}) - R_{\theta'} (\tilde{x})\|_{\infty} \leq
		\sum_{\ell=1}^{L}
		\left( \prod_{k=\ell+1}^{L} \|W_k\|_{\mathrm{op}, \infty} \right)
		\|W_\ell - W'_\ell\|_{\mathrm{op}, \infty} \| R_{\theta'_{\ell-1}} (\tilde{x}) \|_\infty,
	\end{equation}
	
	where we set by convention $R_{\theta'_{\ell-1}} (\tilde{x}) = x $ for $\ell = 1$, and $\prod_{k=\ell+1}^{L} \|W_k\|_{\mathrm{op}, \infty} = 1$ for $\ell = L$.
\end{lemma}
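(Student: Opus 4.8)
The plan is to prove the inequality by induction on the depth $L$, peeling off the last layer and inserting a hybrid intermediate point so that each step of the telescoping contributes exactly one summand of the right-hand side.

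For the base case $L=1$, I would note that $R_\theta(\tilde x)=\sigma(\tilde W_1\tilde x)$ and $R_{\theta'}(\tilde x)=\sigma(\tilde W'_1\tilde x)$. Since the two networks share the same bias, the last column of $\tilde W_1-\tilde W'_1$ vanishes, so $(\tilde W_1-\tilde W'_1)\tilde x=(W_1-W'_1)x$. Applying the $1$-Lipschitz property of $\sigma$ and then the operator-norm identity of Lemma~\ref{lem:matrices} gives $\|R_\theta(\tilde x)-R_{\theta'}(\tilde x)\|_\infty\le\|W_1-W'_1\|_{\mathrm{op},\infty}\|x\|_\infty$, which is exactly the single summand of the claim under the convention $R_{\theta'_0}(\tilde x)=x$.

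For the inductive step, write $y_{L-1}=R_{\theta_{L-1}}(\tilde x)$ and $y'_{L-1}=R_{\theta'_{L-1}}(\tilde x)$ for the depth-$(L-1)$ outputs under the original and quantized parameters, and split
\[
\|\sigma(\tilde W_L\tilde y_{L-1})-\sigma(\tilde W'_L\tilde y'_{L-1})\|_\infty
\le\|\sigma(\tilde W_L\tilde y_{L-1})-\sigma(\tilde W_L\tilde y'_{L-1})\|_\infty
+\|\sigma(\tilde W_L\tilde y'_{L-1})-\sigma(\tilde W'_L\tilde y'_{L-1})\|_\infty
\]
by the triangle inequality, using the intermediate point $\sigma(\tilde W_L\tilde y'_{L-1})$. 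For the first term I would use that $\tilde y_{L-1}-\tilde y'_{L-1}$ has a zero in its last (augmenting) coordinate, so $\tilde W_L(\tilde y_{L-1}-\tilde y'_{L-1})=W_L(y_{L-1}-y'_{L-1})$; the $1$-Lipschitz bound then produces $\|W_L\|_{\mathrm{op},\infty}\|y_{L-1}-y'_{L-1}\|_\infty$, to which I apply the induction hypothesis and absorb the leading factor $\|W_L\|_{\mathrm{op},\infty}$ into each product, extending its upper index from $L-1$ to $L$. This reproduces the summands $\ell=1,\dots,L-1$. For the second term, the shared bias again forces the last column of $\tilde W_L-\tilde W'_L$ to vanish, so $(\tilde W_L-\tilde W'_L)\tilde y'_{L-1}=(W_L-W'_L)y'_{L-1}$, and the $1$-Lipschitz bound gives precisely the $\ell=L$ summand $\|W_L-W'_L\|_{\mathrm{op},\infty}\|R_{\theta'_{L-1}}(\tilde x)\|_\infty$ with empty product equal to $1$. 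Summing the two bounds closes the induction.

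The only delicate bookkeeping, and the crux of the argument, is the systematic use of the identical-bias assumption: it is exactly what lets me discard the augmenting coordinate and replace every bias-augmented matrix $\tilde W$ by its bias-free part $W$, so that the operator norms $\|W_k\|_{\mathrm{op},\infty}$ and $\|W_\ell-W'_\ell\|_{\mathrm{op},\infty}$ (rather than their augmented counterparts) appear in the final bound. Everything else is a standard telescoping perturbation estimate for $1$-Lipschitz layers, so I do not expect any substantive obstacle beyond keeping the product indices and the $\ell=L$ boundary term straight.
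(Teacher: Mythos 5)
Your proof is correct and follows essentially the same route as the paper: induction on depth, splitting the last layer's difference through the hybrid point $\tilde W_L \tilde y'_{L-1}$, using the shared-bias assumption to replace augmented matrices by their bias-free parts, and then applying the induction hypothesis with the leading factor $\|W_L\|_{\mathrm{op},\infty}$ absorbed into the products. The only cosmetic difference is that you apply the triangle inequality to the $\sigma$-outputs before invoking the Lipschitz property, whereas the paper applies the Lipschitz bound first and then splits the resulting linear difference; the two orderings yield identical estimates.
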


\begin{proof}
	The proof of Inequality \eqref{inequality_lemmeC} follows by induction on $L \in \mathbb{N}$. For \( L = 1 \), with the Lipschitz condition and the definition of $\|\cdot\|_{\mathrm{op},\infty}$, using the fact that the last columns of $\tilde{W}_1$ and $\tilde{W}_1'$ are equal, we have $ \| \tilde{W}_1 \tilde{x} - \tilde{W}'_1 \tilde{x} \|_{\infty} = \| W_1 x - W'_1x \|_{\infty}$ and:
	
	\begin{equation}
		\begin{split}
				\left\|R_{\theta_L} (\tilde{x})  -  R_{\theta_L'} (\tilde{x}) \right\|_{ \infty} = \left\| \sigma(\tilde{W}_1 \tilde{x}) - \sigma(\tilde{W}'_1 \tilde{x}) \right\|_{\infty} &\leq \left\| \tilde{W}_1 \tilde{x} - \tilde{W}'_1 \tilde{x} \right\|_{\infty} \\ &= \left\| W_1 \tilde{x} - W'_1 x \right\|_{\infty} \\
			&\leq \|W_1 - W'_1\|_{\mathrm{op}, \infty} \left\| x\right\|_\infty.
		\end{split}
	\end{equation}
	
	Now assume that  property \eqref{inequality_lemmeC} holds for \( L \geq 1 \). At rank \( L+1 \), using the fact that the activation function \( \sigma \) is 1-Lipschitz, we have :
	\begin{DispWithArrows*}
		&\left\| R_{\theta_{L+1}} (\tilde{x}) - R_{\theta_{L+1}'} (\tilde{x}) \right\|_{\infty}\\
		&= \left\| \sigma\left( \tilde{W}_{L+1} \begin{pmatrix} R_{\theta_L} (\tilde{x})  \\ 1 \end{pmatrix}\right)
		-  \sigma\left( \tilde{W}'_{L+1} \begin{pmatrix} R_{\theta'_L} (\tilde{x})  \\ 1 \end{pmatrix}\right) \right\|_{ \infty}\Arrow{$\sigma$ is 1-Lipschitz} \\
		&\leq \left\| \tilde{W}_{L+1} \begin{pmatrix} R_{\theta_L} (\tilde{x} \\ 1 \end{pmatrix}
		-  \tilde{W}'_{L+1} \begin{pmatrix}  R_{\theta'_L} (\tilde{x})  \\ 1 \end{pmatrix} \right\|_{ \infty}\\
		&= \left\| \tilde{W}_{L+1} \left(\begin{pmatrix}  R_{\theta_L} (\tilde{x})\\ 1 \end{pmatrix}
		- \begin{pmatrix}  R_{\theta'_L} (\tilde{x})  \\ 1 \end{pmatrix}   \right)+  \left(\tilde{W}_{L+1} - \tilde{W}'_{L+1} \right)\begin{pmatrix}  R_{\theta'_L} (\tilde{x})  \\ 1 \end{pmatrix} \right\|_{ \infty} \Arrow{triangle inequality} \\
		&\leq \left\| \tilde{W}_{L+1}  \begin{pmatrix}  R_{\theta_L} (\tilde{x})
			-  R_{\theta'_L} (\tilde{x})  \\ 0 \end{pmatrix}  \right\|_{ \infty} + \left\| \left(\tilde{W}_{L+1} - \tilde{W}'_{L+1}\right) \begin{pmatrix}  R_{\theta'_L} (\tilde{x})  \\ 1 \end{pmatrix} \right\|_{ \infty} \Arrow{ same last column \\ for $\tilde{W}_{L+1}$ and $\tilde{W}'_{L+1}$} \\
		&= \left\| W_{L+1} \left( R_{\theta_L} (\tilde{x})
		-  R_{\theta'_L} (\tilde{x}) \right)  \right\|_{ \infty} + \left\| \left(W_{L+1} - W'_{L+1}\right) R_{\theta'_L} (\tilde{x})\right\|_{ \infty}
		\Arrow{Sub-multiplicativity} \\
		&\leq \|W_{L+1}\|_{\mathrm{op}, \infty} \left\|  R_{\theta_L} (\tilde{x})
		-  R_{\theta'_L} (\tilde{x}) \right\|_\infty + \|W_{L+1} - W'_{L+1}\|_{\mathrm{op}, \infty} \left\| R_{\theta'_L} (\tilde{x}) \right\|_ \infty .
	\end{DispWithArrows*}
	
	Applying the induction hypothesis (Inequality~\eqref{inequality_lemmeC}) to the term \( \left\| R_{\theta_L} (\tilde{x}) - R_{\theta'_L} (\tilde{x}) \right\|_\infty \), we have:
	
	\begin{equation}
		\left\| R_{\theta_L} (\tilde{x}) - R_{\theta'_L} (\tilde{x})  \right\|_\infty
		\leq \sum_{\ell=1}^{L} \left( \prod_{k=\ell+1}^{L} \|W_k\|_{\mathrm{op}, \infty} \right)
		\|W_\ell - W'_\ell\|_{\mathrm{op}, \infty}
		\left\|  R_{\theta'_{\ell-1}} (\tilde{x}) \right\|_\infty.
	\end{equation}

	Substituting this bound back into the previous inequality and using that we  have \( \prod_{k=L+2}^{L=1} \|W_k\|_{\mathrm{op}, \infty} = 1 \) by convention, we get:
	\begin{equation}
		\begin{split}
			\left\| R_{\theta_{L+1}} (\tilde{x}) - R_{\theta_{L+1}'} (\tilde{x}) \right\|_{ \infty}
			&\leq \|W_{L+1}\|_{\mathrm{op}, \infty} \sum_{\ell=1}^{L}
			\left( \prod_{k=\ell+1}^{L} \|W_k\|_{\mathrm{op}, \infty} \right)
			\|W_\ell - W'_\ell\|_{\mathrm{op}, \infty}
			\left\|  R_{\theta'_{\ell-1}} (\tilde{x}) \right\|_\infty \\
			&\quad + \|W_{L+1} - W'_{L+1}\|_{\mathrm{op}, \infty}
			\left\| R_{\theta'_L} (\tilde{x}) \right\|_\infty \\
			&= \sum_{\ell=1}^{L}
			\left( \prod_{k=\ell+1}^{L+1} \|W_k\|_{\mathrm{op}, \infty} \right)
			\|W_\ell - W'_\ell\|_{\mathrm{op}, \infty}
			\left\| R_{\theta'_{\ell-1}} (\tilde{x}) \right\|_\infty \\
			&\quad + \left(\prod_{k=L+1+1}^{L+1} \|W_k\|_{\mathrm{op}, \infty}\right)\|W_{L+1} - W'_{L+1}\|_{\mathrm{op}, \infty}
			\left\| R_{\theta'_L} (\tilde{x}) \right\|_\infty.
		\end{split}
	\end{equation}
	
	We deduce
	\begin{equation}
		\begin{split}
			\left\| R_{\theta_{L+1}} (\tilde{x}) - R_{\theta_{L+1}'} (\tilde{x}) \right\|_{\infty}
			&\leq \sum_{\ell=1}^{L+1} \left( \prod_{k=\ell+1}^{L+1} \|W_k\|_{\mathrm{op}, \infty} \right)
			\|W_\ell - W'_\ell\|_{\mathrm{op}, \infty} \left\| R_{\theta'_{\ell-1}} (\tilde{x}) \right\|_\infty.
		\end{split}
	\end{equation}
	This concludes the induction and proves the lemma.
	
\end{proof}

Note that, if we suppose that our networks have no bias, then, for all layers, we do not have to take account of the last column of weight matrices with included bias. We can do the same proof replacing $\tilde{W}_l$ by $W_l$ and $\tilde{x}$ by $x$.

The result without bias also follows by the same induction, and it comes: 

\begin{equation}\label{inequality_lemmeC_no_bias}
	\left\| R_{\theta} (x) - R_{\theta'} (x) \right\|_{\infty} \leq \sum_{\ell=1}^{L} \left( \prod_{k=\ell+1}^{L} \|W_k\|_{\mathrm{op}, \infty} \right)
	\|W_\ell - W'_\ell\|_{\mathrm{op}, \infty} \left\| R_{\theta'_{\ell-1}} (x) \right\|_\infty.
\end{equation}

The following lemma allows us to upper bound the output of a network based on the operator norms of each layer, without any specific conditions on their values.

\begin{lemma}\label{Lemma_max}
	Let $(L, \mathbf{N})$ be an architecture with  $L \geq 1$, denoting
	$\theta = (\tilde{W}_1, \dots, \tilde{W}_L) \in \Theta_{L, \mathbf{N}}$ a  set of parameters associated with this architecture, adding the assumption $\sigma(0) = 0$.
	Then for every $\tilde{x} = \begin{pmatrix}
		x \\ 1
	\end{pmatrix}$ where $x \in \mathbb{R}^{N_0}$, we have:
	
	\begin{equation} \label{inequality_lemmeMax}
		\|R_{\theta_L} (\tilde{x})\|_{\infty} \leq \max \left( \max_{l=2, \dots, L} \prod_{s=\ell}^{L} \|\tilde{W}_s\|_{\mathrm{op}, \infty} ; \prod_{s=1}^{L} \| \tilde{W}_s\|_{\mathrm{op}, \infty} \| \tilde{x} \|_\infty \right).
	\end{equation}
\end{lemma}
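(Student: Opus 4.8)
The plan is to set up a one-step recursion for the infinity norm of the successive layer outputs and then unroll it by induction on the depth. First I would record the elementary contraction property of the activation: since $\sigma$ is $1$-Lipschitz with $\sigma(0)=0$, we have $\|\sigma(z)\|_\infty \le \|z\|_\infty$ for every vector $z$, because entrywise $|\sigma(z_i)| = |\sigma(z_i)-\sigma(0)| \le |z_i|$. Writing $a_\ell := \|R_{\theta_\ell}(\tilde x)\|_\infty = \|y_\ell\|_\infty$ and $r_\ell := \|\tilde W_\ell\|_{\mathrm{op},\infty}$, and using that the augmented vector satisfies $\|\tilde y_{\ell-1}\|_\infty = \max(a_{\ell-1},1)$ because of the appended entry $1$, submultiplicativity of the operator norm yields the key recursion
\[
a_\ell \le r_\ell\,\|\tilde y_{\ell-1}\|_\infty = r_\ell \max(a_{\ell-1},1), \qquad \ell = 2,\dots,L,
\]
together with the base relation $a_1 \le r_1\|\tilde x\|_\infty$, since $\tilde y_0 = \tilde x$ and $\|\tilde x\|_\infty \ge 1$.

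Next I would prove by induction on $\ell$ the sharpened statement
\[
a_\ell \le \max\!\Big( \max_{2 \le i \le \ell} \prod_{s=i}^{\ell} r_s,\ \Big(\prod_{s=1}^{\ell} r_s\Big)\|\tilde x\|_\infty \Big) =: M_\ell,
\]
where the inner maximum over an empty index set (the case $\ell=1$) is read as $0$, so that the base case $\ell=1$ is exactly $a_1 \le r_1\|\tilde x\|_\infty$. For the inductive step I would insert the hypothesis into $a_\ell \le r_\ell\max(a_{\ell-1},1) \le r_\ell \max(M_{\ell-1},1)$, distribute the factor $r_\ell$ across the three-way maximum $\max(M_{\ell-1},1)$, and observe that the newly created constant term $r_\ell$ is precisely the $i=\ell$ tail product $\prod_{s=\ell}^\ell r_s$. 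Absorbing it into $\max_{2\le i\le \ell-1}\prod_{s=i}^\ell r_s$ upgrades the index range to $2\le i\le \ell$, while the data-dependent term becomes $\big(\prod_{s=1}^\ell r_s\big)\|\tilde x\|_\infty$; this is exactly $M_\ell$. Taking $\ell=L$ then gives Inequality \eqref{inequality_lemmeMax}.

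The one genuinely delicate point is the bookkeeping of the $\max(\cdot,1)$ coming from the augmentation $\tilde y_{\ell-1}$: this $1$ is what allows a partial product to \emph{restart} at an intermediate layer $i\ge 2$, and it is precisely the reason the final bound is a maximum of all tail products $\prod_{s=i}^L r_s$ rather than the single full product $\prod_{s=1}^L r_s$. I would therefore take care to justify the step $r_\ell\max(M_{\ell-1},1) = \max\big(r_\ell M_{\ell-1},\,r_\ell\big)$ and the identification of the spawned constant $r_\ell$ with the $i=\ell$ term, since this is exactly where the structure of the stated bound is generated; everything else reduces to the routine contraction estimate and the induction.
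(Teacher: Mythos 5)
Your proof is correct and follows essentially the same route as the paper's: the same one-step recursion $a_\ell \le r_\ell \max(a_{\ell-1},1)$ obtained from the $1$-Lipschitz property, $\sigma(0)=0$, submultiplicativity, and the appended entry $1$, unrolled by induction with the spawned constant $r_\ell$ absorbed as the $i=\ell$ tail product. The only cosmetic differences are that you induct on the layer index within a fixed network rather than on the depth $L$, and you read the empty maximum as $0$ at the base case where the paper reads it as $1$; both conventions yield the stated bound.
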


\begin{proof}
	We prove~\eqref{inequality_lemmeMax}  by induction on $L \in \mathbb{N}$. For $L=1$, since $\sigma$ is $1$-Lipschitz and satisfies $\sigma(0)=0$, we have:
	
	\begin{equation}
	\|R_{\theta_1}(\tilde{x})\|_\infty 
	= \|\sigma(\tilde{W}_1 \tilde{x}) - \sigma(0)\|_\infty 
	\leq \|\tilde{W}_1 \tilde{x} - 0\|_\infty 
	= \|\tilde{W}_1 \tilde{x}\|_\infty 
	\leq \|\tilde{W}_1\|_{\mathrm{op}, \infty}\,\|\tilde{x}\|_\infty .
	\end{equation}
	
	Then by convention for $L=1$ (as in Lemma~\ref{Lemma_article}),
	
	\begin{equation}
		\max_{l=2, \dots, L} \prod_{s=\ell}^{L} \|\tilde{W}_s\|_{\mathrm{op}, \infty} = 1 .
	\end{equation}
	
	We deduce:
	\begin{equation}
		\|R_{\theta_L} (\tilde{x})\|_{\infty} \leq \| \tilde{W}_1\|_{\mathrm{op}, \infty} \| \tilde{x} \|_\infty \leq \max(1 ; \| \tilde{W}_1\|_{\mathrm{op}, \infty} \| \tilde{x} \|_\infty) .
	\end{equation}
	
	Now assume that the property holds for \( L \geq 1 \). At rank \( L+1 \), using  that operator norm is sub-multiplicative and the fact that $ \sigma$ is 1-Lipschitz and satisfies $\sigma(0)=0$, we have:
	
	\begin{align}
		\|R_{\theta_{L+1}} (\tilde{x})\|_{\infty} &= \left\| \sigma \left( \tilde{W}_{L+1} \begin{pmatrix} R_{\theta_L} (\tilde{x}) \\ 1 \end{pmatrix}\right) \right\|_\infty \nonumber \\
		&\leq \| \tilde{W}_{L+1}\|_{\mathrm{op}, \infty} \left\| \begin{pmatrix} R_{\theta_{L}} (\tilde{x})\\ 1 \end{pmatrix} \right\|_\infty \nonumber \\
		&=\left\| \begin{pmatrix} \| \tilde{W}_{L+1}\|_{\mathrm{op}, \infty} \|R_{\theta_{L}} (\tilde{x})\|_{\infty} \\ \| \tilde{W}_{L+1}\|_{\mathrm{op}, \infty}  \end{pmatrix} \right\|_\infty
	\end{align}

	Then, applying the induction hypothesis to the term $\|R_{\theta_{L}} (\tilde{x})\|_{\infty}$, it comes:

\begin{equation}
	\begin{split}
		\left\| 
		\begin{pmatrix}
			\| \tilde{W}_{L+1} \|_{\mathrm{op}, \infty} \| R_{\theta_L}(\tilde{x}) \|_{\infty} \\
			\| \tilde{W}_{L+1} \|_{\mathrm{op}, \infty}
		\end{pmatrix}
		\right\|_\infty 
		&= \max\Big( 
		\| \tilde{W}_{L+1} \|_{\mathrm{op}, \infty} \| R_{\theta_L}(\tilde{x}) \|_{\infty} ;
		\| \tilde{W}_{L+1} \|_{\mathrm{op}, \infty} 
		\Big) \\
		&\leq \max\Big( 
		\| \tilde{W}_{L+1} \|_{\mathrm{op}, \infty} \cdot 
		\max\Big[ 
		\max_{l=2,\dots,L} \prod_{s=l}^{L} \| \tilde{W}_s \|_{\mathrm{op}, \infty} ; \\
		&\hspace{4em}
		\prod_{s=1}^{L} \| \tilde{W}_s \|_{\mathrm{op}, \infty} \| \tilde{x} \|_\infty 
		\Big] ; 
		\| \tilde{W}_{L+1} \|_{\mathrm{op}, \infty} 
		\Big) \\
		&= \max\Big( 
		\max\Big[ 
		\max_{l=2,\dots,L} \prod_{s=l}^{L+1} \| \tilde{W}_s \|_{\mathrm{op}, \infty} ; \\
		&\hspace{4em}
		\prod_{s=1}^{L+1} \| \tilde{W}_s \|_{\mathrm{op}, \infty} \| \tilde{x} \|_\infty 
		\Big] ; 
		\| \tilde{W}_{L+1} \|_{\mathrm{op}, \infty} 
		\Big) \\
		&= \max\Big[ 
		\max_{l=2,\dots,L} \prod_{s=l}^{L+1} \| \tilde{W}_s \|_{\mathrm{op}, \infty} ; \\
		&\hspace{3em}
		\prod_{s=1}^{L+1} \| \tilde{W}_s \|_{\mathrm{op}, \infty} \| \tilde{x} \|_\infty ; 
		\| \tilde{W}_{L+1} \|_{\mathrm{op}, \infty} 
		\Big]
	\end{split}
\end{equation}

	Then, noticing that the term $\| \tilde{W}_{L+1}\|_{\mathrm{op}, \infty}$ can be included in $\max_{l=2, \dots, L}\left(\prod_{s=l}^{L+1} \| \tilde{W}_s\|_{\mathrm{op}, \infty}\right)$ by adding index $\ell = L+1$, we have
	\begin{equation}
		\|R_{\theta_{L+1}} (\tilde{x})\|_{\infty} \leq \max\left[ \max_{l=2, \dots, L+1}\left(\prod_{s=l}^{L+1} \| \tilde{W}_s\|_{\mathrm{op}, \infty}\right) ; \prod_{s=1}^{L+1} \| \tilde{W}_s\|_{\mathrm{op}, \infty} \| \tilde{x} \|_\infty \right].
	\end{equation}
	
	This concludes the induction and proves the lemma.
	
\end{proof}

We can now prove our main theorem.

\begin{proof}[Proof of Theorem~\ref{Th:my_bound_extend_new}]
	
	We recall that $\theta'_\ell$ is defined as the parameter deduced from $\theta'$, associated with the architecture $(\ell, (N_0, \dots, N_\ell))$.
	
	With the convention that
	
	\begin{equation}
		\begin{split}
			\begin{cases}
				R_{\theta'_{l-1}}(\tilde{x}) = \tilde{x}
				& \text{if } l = 1, \\
				\prod_{k=\ell+1}^{L} \|W_k\|_{\mathrm{op},q} = 1
				& \text{if} \; l = L,
			\end{cases}
		\end{split}
	\end{equation}

	we have, with Lemma \ref{Lemma_article}:
	
	\begin{equation} \label{eq:proof_main1}
		\|R_\theta (\tilde{x}) - R_{\theta'} (\tilde{x})\|_{\infty} \leq \sum_{\ell=1}^{L}  \left( \prod_{k=\ell+1}^{L} \|W_k\|_{\mathrm{op}, \infty} \right) \|W_\ell - W'_\ell\|_{\mathrm{op}, \infty} \| R_{\theta'_{\ell-1}} (\tilde{x}) \|_\infty.
	\end{equation}
	
	Thus, using Lemma \ref{Lemma_max} we can bound $\| R_{\theta'_{\ell-1}} (\tilde{x}) \|_\infty$, with the convention that an empty product is equal to 1, it follows: 
	
	\begin{equation}
		\| R_{\theta'_{\ell-1}} (\tilde{x}) \|_\infty \leq \max\left[ \max_{i=2, \dots, l-1}\left(\prod_{s=i}^{l-1} \| \tilde{W}'_s\|_{\mathrm{op}, \infty}\right) ; \prod_{s=1}^{l-1} \| \tilde{W}'_s\|_{\mathrm{op}, \infty} \| \tilde{x} \|_\infty \right]
	\end{equation}
	
	Then, noticing that $\|\tilde{x}\|_\infty \geq 1$, and re-indexing the second $\max$ to include $\prod_{s=1}^{l-1} \| \tilde{W}'_s\|_{\mathrm{op}, \infty}$, we have:
	
	\begin{align}
		\| R_{\theta'_{\ell-1}} (\tilde{x}) \|_\infty &\leq \max\left[\max_{i=2, \dots, l-1}\left(\prod_{s=i}^{l-1} \| \tilde{W}'_s\|_{\mathrm{op}, \infty}\right) \| \tilde{x} \|_\infty ; \prod_{s=1}^{l-1} \| \tilde{W}'_s\|_{\mathrm{op}, \infty} \| \tilde{x} \|_\infty \right]\nonumber \\
		&= \| \tilde{x} \|_\infty \max\left[ \max_{i=2, \dots, l-1}\left(\prod_{s=i}^{l-1} \| \tilde{W}'_s\|_{\mathrm{op}, \infty}\right) ; \prod_{s=1}^{l-1} \| \tilde{W}'_s\|_{\mathrm{op}, \infty} \right] \nonumber \\
		& = \| \tilde{x} \|_\infty  \max_{i=1, \dots, l-1}\left(\prod_{s=i}^{l-1} \| \tilde{W}'_s\|_{\mathrm{op}, \infty}\right).
	\end{align}
	
	Using this bound in Equation \eqref{eq:proof_main1}, we get:
	
	\begin{equation}\label{eq:proof_main2}
		\|R_\theta (\tilde{x}) - R_{\theta'} (\tilde{x})\|_{\infty} \leq \| \tilde{x} \|_\infty \sum_{\ell=1}^{L}  \left( \prod_{k=\ell+1}^{L} \|W_k\|_{\mathrm{op}, \infty} \right) \|W_\ell - W'_\ell\|_{\mathrm{op}, \infty}  \max_{i=1, \dots, l-1}\left(\prod_{s=i}^{l-1} \| \tilde{W}'_s\|_
		{\mathrm{op}, \infty}\right).
	\end{equation}
	
	Then, recalling that $\tilde{x} = \begin{pmatrix}
		x \\ 1
	\end{pmatrix}$ with $x \in [-D,D]^d$, we have
	\begin{equation}
		\| \tilde{x} \|_\infty \leq \max(D;1)
	\end{equation}

	Then, we know by Lemma \ref{lem:matrices} that for every matrix in $\mathbb{R}^{m\times n}$:
	
	\begin{equation}
		\|W\|_{\mathrm{op},\infty} = \max_{1 \leq i \leq m} \sum_{j=1}^{n} |w_{ij}|
	\end{equation}
	
	Thus, recalling that $\theta = (\tilde{W}_1, \dots, \tilde{W}_L)$ and because for all $l$, $ dim(W_l) = N_l \times N_{l-1}$, we can write:
	
	\begin{equation}
		\begin{split}
			\|W_l\|_{\mathrm{op}, \infty} \leq N_{l-1} \max_{i,j} |(W_l)_{ij}|
			& \leq N_{l-1} \|\theta\|_\infty
		\end{split}
	\end{equation}

	Using the previous inequality on $\|W_l - W'_l\|_{\mathrm{op},\infty}$ and replacing it in Inequality \eqref{eq:proof_main2}, we deduce that:
	
	\begin{align}\label{eq:proof_main3}
		\|R_\theta (\tilde{x}) - R_{\theta'} (\tilde{x})\|_{\infty} &\leq \max(D;1) \sum_{\ell=1}^{L} N_{l-1} \left( \prod_{k=\ell+1}^{L} \|W_k\|_{\mathrm{op}, \infty} \right) \max_{i=1, \dots, l-1}\left(\prod_{s=i}^{l-1} \| \tilde{W}'_s\|_
		{\mathrm{op}, \infty}\right)  \|\theta-\theta'\|_\infty
	\end{align}
	
	Thus, recalling that for all $l$, 
	
	\begin{equation}
		\|\tilde{W}_l\|_{\mathrm{op}, \infty} \leq r_l \quad \text{and} \quad \|\tilde{W}'_l\|_{\mathrm{op}, \infty} \leq r_l
	\end{equation}
	
	Equation \eqref{eq:proof_main3} becomes: 
	
	\begin{align}
		\|R_\theta (\tilde{x}) - R_{\theta'} (\tilde{x})\|_{\infty} &\leq \max(D;1) \sum_{\ell=1}^{L} N_{l-1} \left( \prod_{k=\ell+1}^{L} r_k \right) \max_{i=1, \dots, l-1}\left(\prod_{s=i}^{l-1} r_s\right)  \|\theta-\theta'\|_\infty \nonumber\\
		& \leq \max(D;1) \sum_{\ell=1}^{L}\left( N_{l-1} \max_{i=1, \dots, l-1} \prod\limits_{\substack{j=i \\ j \neq l}}^{L} r_j\right) \|\theta-\theta'\|_\infty
	\end{align}
	
	Then taking the maximum over all layers, we finally have: 
	
	\begin{equation}
		\|R_\theta (\tilde{x}) - R_{\theta'} (\tilde{x})\|_{\infty} \leq \max(D;1) \left(\max_{l=1, \dots, L} \max_{i=1, \dots, l-1}\prod\limits_{\substack{j=i \\ j \neq l}}^{L} r_j\right) \sum_{\ell=1}^{L} N_{l-1}  \|\theta-\theta'\|_\infty
	\end{equation}
	
	Then, we can rewrite this to show the geometric mean for partial products:
	
	\begin{equation}
		\|R_\theta (\tilde{x}) - R_{\theta'} (\tilde{x})\|_{\infty} \leq \max(D;1) \left(\sqrt[L-1]{\max_{l=1, \dots, L} 
			\left( \max_{i=1, \dots, l-1} 
			\prod_{\substack{j=i \\ j \neq l}}^{L} r_j \right)}\right)^{L-1} \sum_{\ell=1}^{L} N_{\ell-1} \|\theta - \theta'\|_\infty
	\end{equation}

	Finally, taking the supremum  of both sides we obtain:
	
	\begin{equation}
		\sup_{x \in \Omega} \|R_\theta (\tilde{x}) - R_{\theta'} (\tilde{x})\|_{\infty} \leq \max(D;1) \left(\sqrt[L-1]{\max_{l=1, \dots, L} \left( \max_{i=1, \dots, l-1} 
			\prod_{\substack{j=i \\ j \neq l}}^{L} r_j \right)}\right)^{L-1} \sum_{\ell=1}^{L} N_{\ell-1} \|\theta - \theta'\|_\infty
	\end{equation}
	
\end{proof}

We now provide an intermediate Theorem (dealing with MLP without biases) to complete Theorem~\ref{Th:my_bound_extend_new} (MLP with biases not quantified) and Theorem~\ref{Th:my_bound_extend_new_conv} (CNN with no biases).

\begin{theorem}[Bound for neural networks without bias]\label{th:mlp_noBias}
	With the same settings as in Theorem \ref{Th:my_bound_extend_new}, with $(b_1, \dots, b_L) = (b'_1, \dots, b'_L) = (0, \dots, 0)$
	(i.e $\forall l, \text{ we can take}\; \tilde{W}_l = W_l$ and $\tilde{W}'_l = W'_l$, the standard weight matrices without included bias). For all $x \in \Omega$ the following bound holds.
	
	\begin{equation}
		\sup_{x \in \Omega} \left\| R_{\theta} (x) - R_{\theta'} (x) \right\|_{\infty} \leq D \left(\sqrt[L-1]{\max_{l=1, \dots, L}\prod\limits_{\substack{k=1 \\ k \neq l}}^{L} r_k}\right)^{L-1} \sum_{\ell=1}^{L} N_{l-1} \|\theta-\theta'\|_\infty.
	\end{equation}
\end{theorem}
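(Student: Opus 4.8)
The plan is to mirror the proof of Theorem~\ref{Th:my_bound_extend_new}, exploiting the fact that the absence of biases simplifies every intermediate estimate. First I would invoke the no-bias version of Lemma~\ref{Lemma_article}, namely inequality~\eqref{inequality_lemmeC_no_bias}, which already gives
\[
\left\| R_{\theta} (x) - R_{\theta'} (x) \right\|_{\infty} \leq \sum_{\ell=1}^{L} \left( \prod_{k=\ell+1}^{L} \|W_k\|_{\mathrm{op}, \infty} \right) \|W_\ell - W'_\ell\|_{\mathrm{op}, \infty} \left\| R_{\theta'_{\ell-1}} (x) \right\|_\infty,
\]
with the usual conventions that the empty product equals $1$ and $R_{\theta'_0}(x) = x$.

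The key simplification occurs at the step controlling the forward output $\|R_{\theta'_{\ell-1}}(x)\|_\infty$. Whereas in Theorem~\ref{Th:my_bound_extend_new} the appended constant $1$ in $\tilde{x}$ forces the two-term maximum of Lemma~\ref{Lemma_max} and requires the re-indexing trick based on $\|\tilde{x}\|_\infty \geq 1$, here no constant is injected. Since $\sigma(0)=0$, a one-line induction using sub-multiplicativity of $\|\cdot\|_{\mathrm{op},\infty}$ together with the $1$-Lipschitz property collapses Lemma~\ref{Lemma_max} to its degenerate form, yielding directly
\[
\|R_{\theta'_{\ell-1}}(x)\|_\infty \leq \left(\prod_{s=1}^{\ell-1} \|W'_s\|_{\mathrm{op},\infty}\right)\|x\|_\infty \leq D\prod_{s=1}^{\ell-1} r_s,
\]
where I use $\|x\|_\infty \leq D$ on $\Omega = [-D,D]^{N_0}$ and $\|W'_s\|_{\mathrm{op},\infty}\leq r_s$.

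Substituting this estimate, bounding $\|W_k\|_{\mathrm{op},\infty}\leq r_k$, and using Lemma~\ref{lem:matrices} to pass from the operator norm to the entrywise norm via $\|W_\ell-W'_\ell\|_{\mathrm{op},\infty}\leq N_{\ell-1}\|\theta-\theta'\|_\infty$, I would obtain
\[
\left\| R_{\theta} (x) - R_{\theta'} (x) \right\|_{\infty} \leq D \sum_{\ell=1}^{L} N_{\ell-1}\left(\prod_{k=\ell+1}^{L} r_k\right)\left(\prod_{s=1}^{\ell-1} r_s\right)\|\theta-\theta'\|_\infty.
\]
The only bookkeeping step is to merge the two partial products into a single product over all indices except $\ell$, that is $\prod_{k=\ell+1}^{L} r_k \cdot \prod_{s=1}^{\ell-1} r_s = \prod_{\substack{k=1 \\ k \neq \ell}}^{L} r_k$. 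I would then bound this quantity by its maximum over $\ell$, factor the maximum out of the sum, rewrite it as the $(L-1)$-th power of its $(L-1)$-th root to expose the geometric-mean structure, and take the supremum over $x\in\Omega$, which is legitimate since the resulting bound is uniform in $x$.

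I do not expect a genuine obstacle here: the statement is strictly a specialization of the machinery already established for Theorem~\ref{Th:my_bound_extend_new}, and the conceptual point is simply that dropping biases removes the two-term maximum in the output bound, so the partial products start cleanly at the first layer. The step demanding the most care is confirming that $\|x\|_\infty\leq D$ (rather than $\max(D,1)$) is the correct prefactor, which is precisely what lets the leading constant improve from $\max(D,1)$ to $D$.
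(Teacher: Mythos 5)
Your proposal is correct and follows essentially the same route as the paper's proof: start from the no-bias difference bound \eqref{inequality_lemmeC_no_bias}, control the forward output by $\|R_{\theta'_{\ell-1}}(x)\|_\infty \le \prod_{s=1}^{\ell-1} r_s\,\|x\|_\infty \le D\prod_{s=1}^{\ell-1} r_s$, pass to the entrywise norm via Lemma~\ref{lem:matrices} to get $\|W_\ell-W'_\ell\|_{\mathrm{op},\infty}\le N_{\ell-1}\|\theta-\theta'\|_\infty$, merge the two partial products into $\prod_{\substack{k=1,\, k\neq \ell}}^{L} r_k$, and factor out the maximum as an $(L-1)$-th power of its $(L-1)$-th root before taking the supremum over $\Omega$. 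The only (immaterial) difference is how the forward bound is justified: you use a direct induction from $\sigma(0)=0$, $1$-Lipschitzness and sub-multiplicativity (the degenerate no-bias form of Lemma~\ref{Lemma_max}), whereas the paper obtains the identical inequality by applying \eqref{inequality_lemmeC_no_bias} a second time against the zero network $\theta'=(0,\dots,0)$.
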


\begin{proof}
	By analogy of the previous proof, we use Equation \eqref{inequality_lemmeC_no_bias} (which corresponds to Lemma \ref{Lemma_article} but without bias), and it comes:
	
	\begin{equation}
		\left\| R_{\theta} (x) - R_{\theta'} (x) \right\|_{\infty} \leq \sum_{\ell=1}^{L} \left( \prod_{k=\ell+1}^{L} \|W_k\|_{\mathrm{op}, \infty} \right)\|W_\ell - W'_\ell\|_{\mathrm{op}, \infty} \left\| R_{\theta'_{\ell-1}} (x) \right\|_\infty
	\end{equation}
	
	Now we want to bound the term $\left\| R_{\theta'_{\ell-1}} (x) \right\|_\infty$ by using \eqref{inequality_lemmeC_no_bias}. Thus, for any $\theta$ and with $\theta' = (0, \dots, 0)$, noticing that:
	\begin{equation}
		\begin{split}
			&\forall l \geq  2, \; \|R_{\theta'_{\ell-1}} (x)\|_\infty = 0, \\
			& \text{if} \; l =1  , \; \|R_{\theta'_{\ell-1}} (x)\|_\infty = \|x\|_\infty, \text{by convention}
		\end{split}
	\end{equation}
	
	We have:
	
	\begin{equation}
		\|R_{\theta_{l-1}} (x) \|_{\infty} \leq \prod_{k=2}^{l-1} \|W_k\|_{\mathrm{op}, \infty} \|W_1 - 0 \|_{\mathrm{op}, \infty} \|x \|_\infty = \prod_{k=1}^{l-1} \|W_k\|_{\mathrm{op}, \infty} \|x \|_\infty
	\end{equation}
	%
	%
	%
	%

	Now for any $\theta, \theta'$ without bias, we can bound $\left\| R_{\theta'_{\ell-1}} (x) \right\|_\infty$ in \eqref{inequality_lemmeC_no_bias} , and it comes:
	
	\begin{equation} \label{eq1_proof_no_bias}
		\left\| R_{\theta} (x) - R_{\theta'} (x) \right\|_{\infty} \leq \|x \|_\infty \sum_{\ell=1}^{L} \left( \prod_{k=\ell+1}^{L} \|W_k\|_{\mathrm{op}, \infty} \right) \prod_{k=1}^{l-1} \|W'_k\|_{\mathrm{op},\infty} \|W_\ell - W'_\ell\|_{\mathrm{op}, \infty}
	\end{equation}
	
	Then, we use Lemma \ref{lem:matrices} to get: 
	\begin{equation}
		\begin{split}
			\|W_l\|_{\mathrm{op}, \infty} \leq N_{l-1} \max_{i,j} |(W_l)_{ij}|
			& \leq N_{l-1} \|\theta\|_\infty
		\end{split}
	\end{equation}
	
	Hence recalling that for all $k$ we have $\|W_k\|_{\mathrm{op}, \infty} \leq r_k \quad \text{and} \quad \|W'_k\|_{\mathrm{op}, \infty} \leq r_k$, it comes:
	
	\begin{align}
		\left\| R_{\theta} (x) - R_{\theta'} (x) \right\|_{\infty} &\leq \|x \|_\infty \sum_{\ell=1}^{L} N_{l-1} \left( \prod_{k=\ell+1}^{L} \|W_k\|_{\mathrm{op}, \infty} \right) \prod_{k=1}^{l-1} \|W'_k\|_{\mathrm{op},\infty} \|\theta - \theta'\|_{ \infty} \nonumber \\
		&\leq D \sum_{\ell=1}^{L} N_{l-1} \prod\limits_{\substack{k=1 \\ k \neq l}}^{L} r_k \|\theta-\theta'\|_\infty 
	\end{align}
	
	Thus taking the maximum over all layers, we can rewrite the bound in terms of the geometric mean: 
	
	\begin{equation}
		\left\| R_{\theta} (x) - R_{\theta'} (x) \right\|_{\infty} \leq D \left(\sqrt[L-1]{\max_{l=1, \dots, L}\prod\limits_{\substack{k=1 \\ k \neq l}}^{L} r_k}\right)^{L-1} \sum_{\ell=1}^{L} N_{l-1} \|\theta-\theta'\|_\infty
	\end{equation}
	
	Finally, taking the supremum gives the desired result: 
	
	\begin{equation}
		\sup_{x \in \Omega} \left\| R_{\theta} (x) - R_{\theta'} (x) \right\|_{\infty} \leq D \left(\sqrt[L-1]{\max_{l=1, \dots, L}\prod\limits_{\substack{k=1 \\ k \neq l}}^{L} r_k}\right)^{L-1} \sum_{\ell=1}^{L} N_{l-1} \|\theta-\theta'\|_\infty.
	\end{equation}

\end{proof}

We now give the proof of our last theorem.

\begin{proof}[Proof of Theorem~\ref{Th:my_bound_extend_new_conv}]
	
	Let \( \theta = (\mathcal{H}_1, \dots, \mathcal{H}_L)\) the vector of parameters where each $\mathcal{H_\ell}$ represent the convolution matrix of layer $l$.
	
	With the convention that
	
	\begin{equation}
		\begin{split}
			\begin{cases}
				R_{\theta'_{l-1}}(x) = x
				& \text{if } l = 1, \\
				\prod_{k=\ell+1}^{L} \|\mathcal{H}_k\|_{\mathrm{op},\infty} = 1
				& \text{if} \; l = L
			\end{cases}
		\end{split}
	\end{equation}
	
	We can start the proof using the same line of reasoning. Thus we can use directly Equation \eqref{eq1_proof_no_bias} that becomes for the convolutional case: 
	
	\begin{equation}
		\|R_\theta(x) - R_{\theta'}(x)\|_\infty
		\leq \|x\|_\infty \sum_{l=1}^{L} \prod_{k=\ell+1}^{L} \|\mathcal{H}_k\|_{\mathrm{op}, \infty} \prod_{k=1}^{l-1} \|\mathcal{H}'_k\|_{\mathrm{op},\infty}   \times \|\mathcal{H}_l - \mathcal{H}'_l\|_{\mathrm{op},\infty}
	\end{equation}
	
	Then by analogy we can write:
	
	\begin{equation}
		\begin{split}
			\|R_\theta(x) - R_{\theta'}(x)\|_\infty
			&\leq \|x\|_\infty \sum_{l=1}^{L} \prod_{k=\ell+1}^{L} \|\mathcal{H}_k\|_{\mathrm{op}, \infty} \prod_{k=1}^{l-1} \|\mathcal{H}'_k\|_{\mathrm{op},\infty}   \times \|\mathcal{H}_l - \mathcal{H}'_l\|_{\mathrm{op},\infty} \\
			& \leq D \sum_{l=1}^{L} \prod_{k=\ell+1}^{L} r_k \prod_{k=1}^{l-1} r_k   \times \|\mathcal{H}_l - \mathcal{H}'_l\|_{\mathrm{op},\infty} \\
			&= D \sum_{l=1}^{L} \prod\limits_{\substack{k=1 \\ k \neq l}}^{L} r_k   \times \|\mathcal{H}_l - \mathcal{H}'_l\|_{\mathrm{op},\infty}
		\end{split}
	\end{equation}
	
	Then, we know by lemma \ref{lem:matrices} that for every matrix in $\mathbb{R}^{m\times n}$ it holds :
	
	\begin{equation}
		\|W\|_{\mathrm{op},\infty} = \max_{1 \leq i \leq m} \sum_{j=1}^{n} |w_{ij}|
	\end{equation}

	Thanks to the convolutional structure, we improve the bound on $\|\mathcal{H}_l\|_{\mathrm{op},\infty}$ given by Lemma~\ref{lem:matrices}.
	Let us note the output of the previous layer as $y_{l-1}$. This output is a set of feature maps with dimensions $(n_{l-1} \times m_{l-1}) \times c_{l-1}$, where $c_{l-1}$ is the number of feature maps (i.e., the number of filters) in the previous layer.
	
	Next, recalling that,we want to express the convolution at layer $l$ as a matrix multiplication: $\mathcal{H}_l \text{vec}(y_{l-1})$.
	
	Then we write $\mathcal{H}_l$ as a block-Toeplitz matrix:
	
	\begin{equation}
		\mathcal{H}_l =
		\begin{pmatrix} H_{l,1}\\ \vdots \\ H_{l,c_{l}}\end{pmatrix}
	\end{equation}

	where each block $H_{l,i}$ is a Toeplitz matrix of size $(n_l m_l) \times (n_{l-1} m_{l-1}c_{l-1})$. These matrices are highly sparse, with each row composed of coefficients of the filters arranged in a specific pattern, while the remaining entries are zeros.
	
	Thus, each block $H_{l,i}$ in $\mathcal{H}_l$ represents the convolution operation between the $i$-th filter and the feature maps of the previous layer.
	
	Hence, the overall dimensions of $\mathcal{H}_l$ are:
	
	\begin{equation}
		\dim(\mathcal{H}_l) = (n_l m_l c_l) \times  (n_{l-1} m_{l-1} c_{l-1}).
	\end{equation}
	
	Then to bound the norm of $\mathcal{H}_l$, we consider only the non-zero coefficients in its rows, which are $p_l^2 \times c_{l-1}$, where $p_l^2$ denotes the size of the filters at layer $l$.
	
	Thus, recalling that $\theta = (\mathcal{H}_1, \dots, \mathcal{H}_L)$ we have:
	
	\begin{equation}
		\begin{split}
			\|\mathcal{H}_l\|_{\mathrm{op}, \infty} \leq c_{l-1}\times p_l^2 \max_{i,j} |(\mathcal{H}_l)_{ij}|
			& \leq c_{l-1} \times p_l^2 \|\theta\|_\infty
		\end{split}
	\end{equation}
	
	Using the previous inequality on $\|\mathcal{H}_l - \mathcal{H}'_l\|_{\mathrm{op},\infty}$ we can  bound the quantity $\|R_\theta(x) - R_{\theta'}(x)\|_\infty$, by:

	\begin{equation}
		\begin{split}
			\|R_\theta(x) - R_{\theta'}(x)\|_\infty
			&\leq D \sum_{l=1}^{L} \prod\limits_{\substack{k=1 \\ k \neq l}}^{L} r_k   \times \|\mathcal{H}_l - \mathcal{H}'_l\|_{\mathrm{op},\infty} \\
			& \leq D \sum_{l=1}^{L} c_{l-1}\times p_l^2\prod\limits_{\substack{k=1 \\ k \neq l}}^{L} r_k  \|\theta-\theta'\|_\infty
		\end{split}
	\end{equation}
	
	Thus taking the maximum over all layers, we can rewrite the bound in terms of the geometric mean: 
	
	\begin{equation}
		\left\| R_{\theta} (x) - R_{\theta'} (x) \right\|_{\infty} \leq D \left(\sqrt[L-1]{\max_{l=1, \dots, L}\prod\limits_{\substack{k=1 \\ k \neq l}}^{L} r_k}\right)^{L-1} \sum_{\ell=1}^{L} c_{l-1} \times p_l^2 \|\theta-\theta'\|_\infty
	\end{equation}

	Hence, we can conclude that:
	
	\begin{equation}
		\begin{split}
			&\sup_{x \in \Omega}\| R_{\theta}(x) - R_{\theta'}(x) \|_{\infty} \leq D \left(\sqrt[L-1]{\max_{l=1, \dots, L}\prod\limits_{\substack{k=1 \\ k \neq l}}^{L} r_k}\right)^{L-1} \sum_{\ell=1}^{L} c_{l-1} \times p_l^2 \|\theta-\theta'\|_\infty.
		\end{split}
	\end{equation}
\end{proof}

Then, we present the 3 specific architectures of residual and bottleneck block, in Resnet18, Resnet50 \cite{he2016deep} and MobilnetV2 \cite{sandler2018mobilenetv2}.

\section{Calculations for MobileNetV2 and Resnets}

\subsection{Specific structure of Resnet18}

\begin{figure}[H]
	\centering
	\begin{tikzpicture}[node distance=0.5cm, auto, >=Latex]
		\node (input) {Input};
		\node[draw, right=of input] (conv1) {Conv 3x3};
		\node[draw, right=of conv1] (bn1) {BN};
		\node[draw, right=of bn1] (relu1) {ReLU};
		\node[draw, right=of relu1] (conv2) {Conv 3x3};
		\node[draw, right=of conv2] (bn2) {BN};
		\node[right=of bn2] (add) {+};
		\node[draw, right=of add] (relu2) {ReLU};
		\node[right=of relu2] (output) {Output};
		
		\draw[->] (input) -- (conv1);
		\draw[->] (conv1) -- (bn1);
		\draw[->] (bn1) -- (relu1);
		\draw[->] (relu1) -- (conv2);
		\draw[->] (conv2) -- (bn2);
		\draw[->] (bn2) -- (add);
		\draw[->] (add) -- (relu2);
		\draw[->] (relu2) -- (output);
		\draw[->] (input) -- ++(0,-1) -| node[below, pos=0.25] {$W_s$} (add);
	\end{tikzpicture}
	\caption{Structure of a Residual Block of ResNet18} \label{fig:Residual_Block_of_ResNet18}
\end{figure}

\begin{lemma}[Matrix Representation of a Residual Block in ResNet-18]\label{lem:resnet_block_representation}
	The output $ y \in \mathbb{R}^n $ of a residual block in ResNet-18, as illustrated in Figure~\ref{fig:Residual_Block_of_ResNet18}, can be expressed as:
	\begin{equation}
		y = \sigma \left( V_2 \cdot \tilde{\sigma}_1 \left( V_1 \cdot f \right) \right),
	\end{equation}
	
	where $f \in \mathbb{R}^n$ is the input, and $V_1$ and $V_2$ are defined as:
	
	\begin{equation}
		V_1 = \begin{pmatrix} W_1 \\ I \end{pmatrix} \in \mathbb{R}^{(d + n) \times n}, \quad 
		V_2 = \begin{pmatrix} W_2 & W_s \end{pmatrix} \in \mathbb{R}^{m \times (d + n)},
	\end{equation}
	
	with $W_1 \in \mathbb{R}^{d \times n}$ and $W_2 \in \mathbb{R}^{m \times d}$ as the convolutional weight matrices, $I \in \mathbb{R}^{n \times n}$ as the identity matrix and $W_s \in \mathbb{R}^{m \times n}$ represents the shortcut weight matrix. Note that, if the input and output of the block have the same dimension $W_s = I$. The function $\tilde{\sigma}_1$ applies the non-linearity $\sigma$ only to the term $W_1 \cdot f$, leaving the shortcut component $I \cdot f$ unchanged.
\end{lemma}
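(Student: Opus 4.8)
The plan is to verify the identity by unfolding the residual block's forward pass one node at a time and checking that the proposed block-matrix factorization reproduces it exactly. Since we work in the setting where BatchNorm is removed (or, equivalently, folded into the adjacent convolutional weights), the block of Figure~\ref{fig:Residual_Block_of_ResNet18} computes, from input $f$, the sequence $W_1 f$, then $\sigma(W_1 f)$, then $W_2 \sigma(W_1 f)$, and finally adds the shortcut $W_s f$ before the terminal ReLU, giving $y = \sigma\bigl(W_2 \sigma(W_1 f) + W_s f\bigr)$. The entire content of the lemma is that this expression coincides with $\sigma\bigl(V_2 \cdot \tilde{\sigma}_1(V_1 \cdot f)\bigr)$ for the stated $V_1, V_2$.

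First I would compute $V_1 \cdot f$. By the block structure $V_1 = \binom{W_1}{I}$ one gets $V_1 f = \binom{W_1 f}{f} \in \mathbb{R}^{d+n}$: the top $d$ coordinates carry the pre-activation of the first convolution, while the bottom $n$ coordinates carry a verbatim copy of the input, stored for the skip connection. Applying $\tilde{\sigma}_1$ — which by definition applies $\sigma$ to the first $d$ coordinates and the identity to the last $n$ — then yields $\binom{\sigma(W_1 f)}{f}$. Multiplying on the left by $V_2 = (W_2 \;\; W_s)$, the block product gives $V_2 \cdot \binom{\sigma(W_1 f)}{f} = W_2 \sigma(W_1 f) + W_s f$, which is precisely the pre-activation entering the addition-then-ReLU node; applying the terminal $\sigma$ recovers $y = \sigma\bigl(W_2 \sigma(W_1 f) + W_s f\bigr)$, matching the unfolded forward pass. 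I would also record the dimensional bookkeeping ($V_1 \in \mathbb{R}^{(d+n)\times n}$, $V_2 \in \mathbb{R}^{m \times (d+n)}$) and note the special case $W_s = I$ when the block preserves its input dimensions.

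There is no genuine obstacle here — the statement is an exact algebraic rewriting — but the step that deserves care, and is the real purpose of the construction, is the treatment of the partial activation $\tilde{\sigma}_1$: it must be compatible with the hypotheses used downstream in Theorem~\ref{Th:my_bound_extend_new_conv}. Since $\tilde{\sigma}_1$ is a coordinatewise map that is ReLU on one block and the identity on the other, it is $1$-Lipschitz in $\|\cdot\|_\infty$ and satisfies $\tilde{\sigma}_1(0)=0$; hence the augmented two-layer form $f \mapsto \sigma(V_2\,\tilde{\sigma}_1(V_1 f))$ falls within the class of networks handled by Lemmas~\ref{Lemma_article} and~\ref{Lemma_max}, which is exactly what permits the block to be substituted into the global bound. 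I would make this Lipschitz and zero-preserving remark explicit, so that the representation can be fed directly into the subsequent operator-norm estimates without further justification.
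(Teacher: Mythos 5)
Your proof is correct and follows essentially the same route as the paper's: unfold the block's forward pass to $y = \sigma\bigl(W_2\,\sigma(W_1 f) + W_s f\bigr)$, then verify by direct block-matrix computation that $V_1$, $\tilde{\sigma}_1$, and $V_2$ reproduce this expression exactly. Your added remark that $\tilde{\sigma}_1$ is coordinatewise $1$-Lipschitz with $\tilde{\sigma}_1(0)=0$ is a useful explicit justification for feeding the representation into Lemmas~\ref{Lemma_article} and~\ref{Lemma_max}, which the paper leaves implicit.
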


\begin{proof}
	The residual block computes the output $ y$ as:
	\begin{equation}
		y = \sigma\left(\text{BN}_2\left(\text{Conv}_2\left(\text{ReLU}\left(\text{BN}_1\left(\text{Conv}_1(f)\right)\right)\right)\right) + W_s f\right),
	\end{equation}
	where: $\text{Conv}_1$ and $\text{Conv}_2$ represent the convolutional layers with weight matrices $W_1$ and $W_2$, respectively, $\text{BN}_1$ and $\text{BN}_2$ are batch normalization layers (omitted in the matrix formulation because we removed them in our experiments).
	
	The first convolutional layer computes:
	\begin{equation}
		x_1 = \text{Conv}_1(f) = W_1 \cdot f,
	\end{equation}
	where $W_1 \in \mathbb{R}^{d \times n}$. This result is passed through $\text{BN}_1$ and $\sigma$, yielding:
	
	\begin{equation}
		\tilde{x}_1 = \sigma\left(\text{BN}_1(x_1)\right).
	\end{equation}
	In our matrix representation, we express this as:
	\begin{equation}
		\tilde{x}_1 = \tilde{\sigma}_1\left(V_1 \cdot f\right),
	\end{equation}
	where \(V_1 = \begin{pmatrix} W_1 \\ I \end{pmatrix} \in \mathbb{R}^{(d + n) \times n}\). The expanded computation is:
	\begin{equation}
		V_1 \cdot f = \begin{pmatrix} W_1 \cdot f \\ I \cdot f \end{pmatrix} = \begin{pmatrix} W_1 \cdot f \\ f_1 \\ \vdots \\ f_n \end{pmatrix}.
	\end{equation}
	Thus:
	\begin{equation}
		\tilde{\sigma}_1\left(V_1 \cdot f\right) = \begin{pmatrix} \sigma(W_1 \cdot f) \\ f_1 \\ \vdots \\ f_n \end{pmatrix}.
	\end{equation}

	Then the second convolutional layer computes:
	\begin{equation}
		x_2 = \text{Conv}_2\left(\sigma\left(\text{BN}_1(x_1)\right)\right) = W_2 \cdot \tilde{x}_1,
	\end{equation}
	where \(W_2 \in \mathbb{R}^{m \times d}\). Combining the result with the shortcut connection $W_sf$, we obtain:
	\begin{equation}
		y = \sigma\left(x_2 + W_sf\right).
	\end{equation}
	
	Using our matrix representation for $V_2$, where $V_2 = \begin{pmatrix} W_2 & W_s \end{pmatrix} \in \mathbb{R}^{m \times (d + n)}$, the computation expands as:
	\begin{equation}
		V_2 \cdot \begin{pmatrix} \sigma(W_1 \cdot f) \\ f_1 \\ \vdots \\ f_n \end{pmatrix} = \begin{pmatrix} W_2 & W_s \end{pmatrix} \cdot \begin{pmatrix} \sigma(W_1 \cdot f) \\ f_1 \\ \vdots \\ f_n \end{pmatrix} =  W_2 \cdot \sigma(W_1 \cdot f) + W_s \cdot f = x_2 + W_sf.
	\end{equation}
	
	So the final output of the block is:
	\begin{equation}
		y = \sigma\left(V_2 \cdot \tilde{\sigma}_1\left(V_1 \cdot f\right)\right),
	\end{equation}
	
	Thus, the residual block's output is equivalent to our matrix representation.
\end{proof}

An immediate consequence of this Lemma is:

\begin{equation}
	\begin{cases}
		\|V_1\|_{\mathrm{op},\infty} = \max(1,\|W_1\|_{\mathrm{op},\infty}), \\
		\|V_2\|_{\mathrm{op},\infty} = \|W_2\|_{\mathrm{op},\infty} + 1, & \text{if $W_s = I$}, \\
		\|V_2\|_{\mathrm{op},\infty} \leq \|W_2\|_{\mathrm{op},\infty} + \|W_s\|_{\mathrm{op},\infty}, & \text{otherwise}.
	\end{cases}
\end{equation}

\subsection{Specific structure of Resnet50}

\begin{figure}[H]
	\centering
	\resizebox{\textwidth}{!}{%
		\begin{tikzpicture}[node distance=0.8cm and 0.6cm, auto, >=Latex]
			\node (input) {Input};
			\node[draw, right=of input] (conv1) {Conv 1x1};
			\node[draw, right=of conv1] (bn1) {BN};
			\node[draw, right=of bn1] (relu1) {ReLU};
			\node[draw, right=of relu1] (conv2) {Conv 3x3};
			\node[draw, right=of conv2] (bn2) {BN};
			\node[draw, right=of bn2] (relu2) {ReLU};
			\node[draw, right=of relu2] (conv3) {Conv 1x1};
			\node[draw, right=of conv3] (bn3) {BN};
			\node[right=of bn3] (add) {+};
			\node[draw, right=of add] (relu3) {ReLU};
			\node[right=of relu3] (output) {Output};
			
			\draw[->] (input) -- (conv1);
			\draw[->] (conv1) -- (bn1);
			\draw[->] (bn1) -- (relu1);
			\draw[->] (relu1) -- (conv2);
			\draw[->] (conv2) -- (bn2);
			\draw[->] (bn2) -- (relu2);
			\draw[->] (relu2) -- (conv3);
			\draw[->] (conv3) -- (bn3);
			\draw[->] (bn3) -- (add);
			\draw[->] (add) -- (relu3);
			\draw[->] (relu3) -- (output);
			\draw[->] (input) -- ++(0,-1.2) -| node[below, pos=0.25] {$W_s$} (add);
		\end{tikzpicture}%
	}
	\caption{Structure of a Residual Block of ResNet50}
	\label{fig:Residual_Block_of_ResNet50}
\end{figure}

\begin{lemma}[Matrix Representation of a Bottleneck Block in ResNet-50]\label{lem:resnet50_block_representation}
	The output $ y \in \mathbb{R}^n $ of a bottleneck block in ResNet-50, as illustrated in Figure~\ref{fig:Residual_Block_of_ResNet50}, can be expressed as:
	
	\begin{equation}
		y = \sigma \left( V_3 \cdot \tilde{\sigma}_2 \left( V_2 \cdot \tilde{\sigma}_1 \left( V_1 \cdot f \right) \right) \right),
	\end{equation}
	
	where $f \in \mathbb{R}^n$ is the input, and the matrices $V_1$, $V_2$, and $V_3$ are defined as follows:
	\begin{equation}
		V_1 = \begin{pmatrix} W_1 \\ I \end{pmatrix} \in \mathbb{R}^{(d_1 + n) \times n}, \quad
		V_2 = 
		\begin{pmatrix}
			W_2 & 0 \\
			0 & I
		\end{pmatrix} \in \mathbb{R}^{(d_2 + n) \times (d_1 + n)}, \quad
		V_3 = \begin{pmatrix} W_3 & W_s \end{pmatrix} \in \mathbb{R}^{m \times (d_2 + n)}.
	\end{equation}
	
	Here: $W_1 \in \mathbb{R}^{d_1 \times n}$, $W_2 \in \mathbb{R}^{d_2 \times d_1}$, and $W_3 \in \mathbb{R}^{m \times d_2}$ are the weight matrices of the three convolutional layers in the bottleneck block, $W_s$ is the weight matrix associated with the shortcut, $I \in \mathbb{R}^{n \times n}$ is the identity, and $0$ denotes zero matrices of appropriate dimensions.
	
	The functions $\tilde{\sigma}_1$ and $\tilde{\sigma}_2$ apply the non-linearity $\sigma$ only to specific components, leaving the shortcut components unchanged.
\end{lemma}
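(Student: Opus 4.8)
The plan is to verify the claimed matrix identity by direct computation, mirroring the proof of Lemma~\ref{lem:resnet_block_representation} but now threading the shortcut through \emph{two} intermediate nonlinearities rather than one. First I would write out the genuine bottleneck computation dictated by Figure~\ref{fig:Residual_Block_of_ResNet50} (dropping the batch-normalization layers, as they are removed in the experiments):
\begin{equation}
	y = \sigma\bigl(W_3\,\sigma(W_2\,\sigma(W_1 f)) + W_s f\bigr),
\end{equation}
and then show that evaluating $V_3\cdot\tilde{\sigma}_2(V_2\cdot\tilde{\sigma}_1(V_1 f))$ from the inside out reproduces exactly this expression.

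The core of the argument is to track the augmented state as it passes through each block-structured matrix. Applying $V_1$ gives $V_1 f = \begin{pmatrix} W_1 f \\ f\end{pmatrix}$; the partial activation $\tilde{\sigma}_1$, which acts by $\sigma$ on the top $d_1$ coordinates and by the identity on the bottom $n$ coordinates, yields $\begin{pmatrix}\sigma(W_1 f)\\ f\end{pmatrix}$. The new ingredient compared with ResNet-18 is the block-diagonal form of $V_2 = \begin{pmatrix} W_2 & 0 \\ 0 & I\end{pmatrix}$: multiplying gives $\begin{pmatrix} W_2\,\sigma(W_1 f)\\ f\end{pmatrix}$, so the lower-right identity leaves the shortcut component $f$ untouched while $W_2$ transforms the main branch. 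Applying $\tilde{\sigma}_2$ (again $\sigma$ on the top $d_2$ coordinates, identity on the bottom $n$) produces $\begin{pmatrix}\sigma(W_2\,\sigma(W_1 f))\\ f\end{pmatrix}$, and finally $V_3 = \begin{pmatrix} W_3 & W_s\end{pmatrix}$ collapses the augmented vector to $W_3\,\sigma(W_2\,\sigma(W_1 f)) + W_s f$. Applying the outer $\sigma$ recovers $y$.

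The step I expect to require the most care is the bookkeeping on the middle layer: one must confirm that the identity in the lower-right of $V_2$ carries the \emph{unchanged} input $f$ forward, and, crucially, that $\tilde{\sigma}_2$ does not apply $\sigma$ to this shortcut component (applying ReLU to $f$ would clip its negative entries and break the identity). This is precisely why the partial activations $\tilde{\sigma}_1,\tilde{\sigma}_2$ are defined to leave the appended coordinates fixed. No genuine obstacle arises beyond verifying dimensional consistency of the three blocks, namely $V_1 \in \mathbb{R}^{(d_1+n)\times n}$, $V_2 \in \mathbb{R}^{(d_2+n)\times(d_1+n)}$, and $V_3\in\mathbb{R}^{m\times(d_2+n)}$, which I would check as the composition is unwound.
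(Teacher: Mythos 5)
Your proposal is correct and follows essentially the same route as the paper's proof: a direct inside-out evaluation of $V_3\cdot\tilde{\sigma}_2(V_2\cdot\tilde{\sigma}_1(V_1 f))$, tracking the augmented vector $\begin{pmatrix}\text{main branch}\\ f\end{pmatrix}$ through each block matrix and matching it against the bottleneck computation $\sigma\bigl(W_3\,\sigma(W_2\,\sigma(W_1 f)) + W_s f\bigr)$ with batch normalization removed. Your explicit remark that $\tilde{\sigma}_2$ must not act on the shortcut coordinates (since ReLU would clip negative entries of $f$) is a nice clarification of a point the paper leaves implicit in the definition of the partial activations.
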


\begin{proof}
	The bottleneck block computes the output $ y $ as:
	\begin{equation}
		y = \sigma\left(\text{BN}_3\left(\text{Conv}_3\left(\text{ReLU}\left(\text{BN}_2\left(\text{Conv}_2\left(\text{ReLU}\left(\text{BN}_1\left(\text{Conv}_1(f)\right)\right)\right)\right)\right)\right)\right) + W_sf\right),
	\end{equation}
	where $\text{Conv}_1$, $\text{Conv}_2$, and $\text{Conv}_3$ represent the three convolutional layers with weight matrices $W_1$, $W_2$, and $W_3$, respectively, $\text{BN}_1$, $\text{BN}_2$, and $\text{BN}_3$ are batch normalization layers (omitted in the matrix formulation because we removed them in our experiments).
	
	The first convolutional layer computes:
	\begin{equation}
		x_1 = \text{Conv}_1(f) = W_1 \cdot f,
	\end{equation}
	where \(W_1 \in \mathbb{R}^{d_1 \times n}\). This result is passed through \(\text{BN}_1\) and \(\sigma\), yielding:
	\begin{equation}
		\tilde{x}_1 = \sigma\left(\text{BN}_1(x_1)\right).
	\end{equation}
	In our matrix representation, we express this as:
	\begin{equation}
		\tilde{x}_1 = \tilde{\sigma}_1\left(V_1 \cdot f\right),
	\end{equation}
	
	where $V_1 = \begin{pmatrix} W_1 \\ I \end{pmatrix} \in \mathbb{R}^{(d_1 + n) \times n}$. The expanded computation is:
	
	\begin{equation}
		V_1 \cdot f = \begin{pmatrix} W_1 \cdot f \\ I \cdot f \end{pmatrix} = \begin{pmatrix} W_1 \cdot f \\ f_1 \\ \vdots \\ f_n \end{pmatrix}.
	\end{equation}
	Thus:
	
	\begin{equation}
		\tilde{\sigma}_1\left(V_1 \cdot f\right) = \begin{pmatrix} \sigma(W_1 \cdot f) \\ f_1 \\ \vdots \\ f_n \end{pmatrix}.
	\end{equation}
	
	The second convolutional layer computes:
	\begin{equation}
		\tilde{x}_2 = \sigma(\text{Conv}_2(\tilde{x}_1)) = \sigma(W_2 \cdot \tilde{x}_1),
	\end{equation}
	where $W_2 \in \mathbb{R}^{d_2 \times d_1}$. Using $V_2$, the expanded computation is:
	
	\begin{equation}
		V_2 \cdot \begin{pmatrix} \sigma(W_1 \cdot f) \\ f_1 \\ \vdots \\ f_n \end{pmatrix} = \begin{pmatrix}
			W_2 & 0 \\
			0 & I
		\end{pmatrix}  \cdot \begin{pmatrix} \sigma(W_1 \cdot f) \\ f_1 \\ \vdots \\ f_n \end{pmatrix} =
		\begin{pmatrix}
			W_2 \cdot \sigma(W_1 \cdot f) \\
			f_1 \\
			\vdots \\
			f_n
		\end{pmatrix}.
	\end{equation}
	
	So the output of this layer with our matrix representation is: $\begin{pmatrix}
		\sigma(W_2 \cdot \sigma(W_1 \cdot f)) \\
		f_1 \\
		\vdots \\
		f_n
	\end{pmatrix}.$ 
	
	The third convolutional layer combines the outputs of the second layer and the shortcut connection:
	
	\begin{equation}
		x_3 = \text{Conv}_3(\tilde{x}_2) = W_3 \cdot \tilde{x}_2 + W_sf,
	\end{equation}
	
	and the final original output of the block is $y = \sigma(x_3)$
	
	Using $V_3$, this becomes:
	\begin{equation}
		V_3 \cdot \begin{pmatrix} \sigma(W_2 \cdot \sigma(W_1 \cdot f)) \\ f_1 \\ \vdots \\ f_n \end{pmatrix} = \begin{pmatrix} W_3 & W_s \end{pmatrix} \cdot \begin{pmatrix} \sigma(W_2 \cdot \sigma(W_1 \cdot f)) \\ f_1 \\ \vdots \\ f_n \end{pmatrix} = W_3 \cdot \sigma( W_2 \cdot \sigma(W_1 \cdot f)) + W_sf.
	\end{equation}
	
	Then the final output is:
	\begin{equation}
		y = \sigma\left(V_3 \cdot \tilde{\sigma}_2\left(V_2 \cdot \tilde{\sigma}_1\left(V_1 \cdot f\right)\right)\right),
	\end{equation}
	ending the proof of the matrix representation.
\end{proof}

An immediate consequence of this Lemma is that: 

\begin{equation}
	\begin{cases}
		\|V_1\|_{\mathrm{op},\infty} = \max(1,\|W_1\|_{\mathrm{op},\infty}), \\
		\|V_2\|_{\mathrm{op},\infty} = \max(1,\|W_2\|_{\mathrm{op},\infty}), \\
		\|V_3\|_{\mathrm{op},\infty} = \|W_3\|_{\mathrm{op},\infty} + 1, & \text{if $W_s = I$}, \\
		\|V_3\|_{\mathrm{op},\infty} \leq \|W_3\|_{\mathrm{op},\infty} + \|W_s\|_{\mathrm{op},\infty}, & \text{otherwise}.
	\end{cases}
\end{equation}

\subsection{Specific structure of MobileNetV2}

\begin{figure}[H]
	\centering
	\resizebox{\textwidth}{!}{%
		\begin{tikzpicture}[node distance=0.8cm and 0.6cm, auto, >=Latex]
			\node (input) {Input};
			\node[draw, right=of input] (conv1) {Conv 1x1};
			\node[draw, right=of conv1] (bn1) {BN};
			\node[draw, right=of bn1] (relu1) {ReLU6};
			\node[draw, right=of relu1] (dwconv) {DW Conv 3x3};
			\node[draw, right=of dwconv] (bn2) {BN};
			\node[draw, right=of bn2] (relu2) {ReLU6};
			\node[draw, right=of relu2] (conv2) {Conv 1x1};
			\node[draw, right=of conv2] (bn3) {BN};
			\node[right=of bn3] (add) {+};
			\node[right=of add] (output) {Output};
			
			\draw[->] (input) -- (conv1);
			\draw[->] (conv1) -- (bn1);
			\draw[->] (bn1) -- (relu1);
			\draw[->] (relu1) -- (dwconv);
			\draw[->] (dwconv) -- (bn2);
			\draw[->] (bn2) -- (relu2);
			\draw[->] (relu2) -- (conv2);
			\draw[->] (conv2) -- (bn3);
			\draw[->] (bn3) -- (add);
			\draw[->] (add) -- (output);
			\draw[->] (input) -- ++(0,-1.2) -| node[below, pos=0.25] {$I$} (add);
		\end{tikzpicture}%
	}
	\caption{Structure of a Residual Block of MobileNetV2}
	\label{fig:Residual_Block_of_MobilnetV2}
\end{figure}

\begin{remark}
	The only difference between Resnet-50 bottleneck block and MobileNetV2 bottleneck block is that MobileNetV2 uses inverted residuals bottleneck bock (expansion before convolution) and Depthpwise convolution (only one filter is applied for each input channel), without activation function at the end of the bloc. Thus the matrix representation can also be expressed as: 
	\begin{equation}
		y = V_3 \cdot \tilde{\sigma}_2 \left( V_2 \cdot \tilde{\sigma}_1 \left( V_1 \cdot f \right) \right),
	\end{equation}
	Where all matrices are defined in Lemma \ref{lem:resnet50_block_representation}.
\end{remark}

\section{Experimental setup for the MLP case} \label{sec:setup_MLPs}

For the Figure \ref{fig:MLP_comparison_side_by_side}, we developed four MLPs with different depths. All of them were trained on MNIST dataset during 2 epochs,  using the Adam optimizer with a learning rate of 0.001 and a batch size of 64.  The MLP with depth 5 has four hidden layers with sizes [1024, 512, 256, 128]. The MLP with depth 7 has six hidden layers with sizes [1024, 512, 256, 128, 64, 32]. The MLP with depth 9 has eight hidden layers with sizes [1024, 512, 256, 128, 128, 64, 64, 32]. The MLP with depth 11 has ten hidden layers with sizes [1024, 512, 512, 256, 256, 128, 128, 64, 64, 32]. The models were quantized using uniform quantization with different bit widths: 4, 8, 16, and 24 bits only on the weights.

\section{Experimental Setup of Figure \ref{fig:comparison_pretrained_quantized_tinyimagenet}}\label{appendix:setup_adaround}

\textbf{Dataset:} We used the Tiny ImageNet dataset as described in Section \ref{sec:experiments}. Images were resized to $224 \times 224$ resolution. For training, standard normalization and data augmentation was applied.

\textbf{Model and training:} We replaced ResNet18, ResNet50 and MobileNetV2, final classification layer by a linear layer with 200 outputs and Batch normalization layers were removed. Models were initialized from pretrained ImageNet weights. Training was done for 80 epochs using SGD with learning rate 0.01, momentum 0.9, and weight decay $5 \cdot 10^{-4}$. Batch size was 128 for both training and validation.

\textbf{AdaRound}: applied layerwise using a calibration set of 256 samples. Each layer's rounding parameters were optimized for 15 steps using Adam with learning rate $10^{-3}$ and regularization coefficient $\lambda = 0.01$.

\section{Experiences on MNIST and CIFAR-10 datasets} \label{appendix:MNIST-CIFAR}

\begin{figure*}
	\centering
	\begin{subfigure}{0.32\textwidth}
		\centering
		\includegraphics[width=\columnwidth]{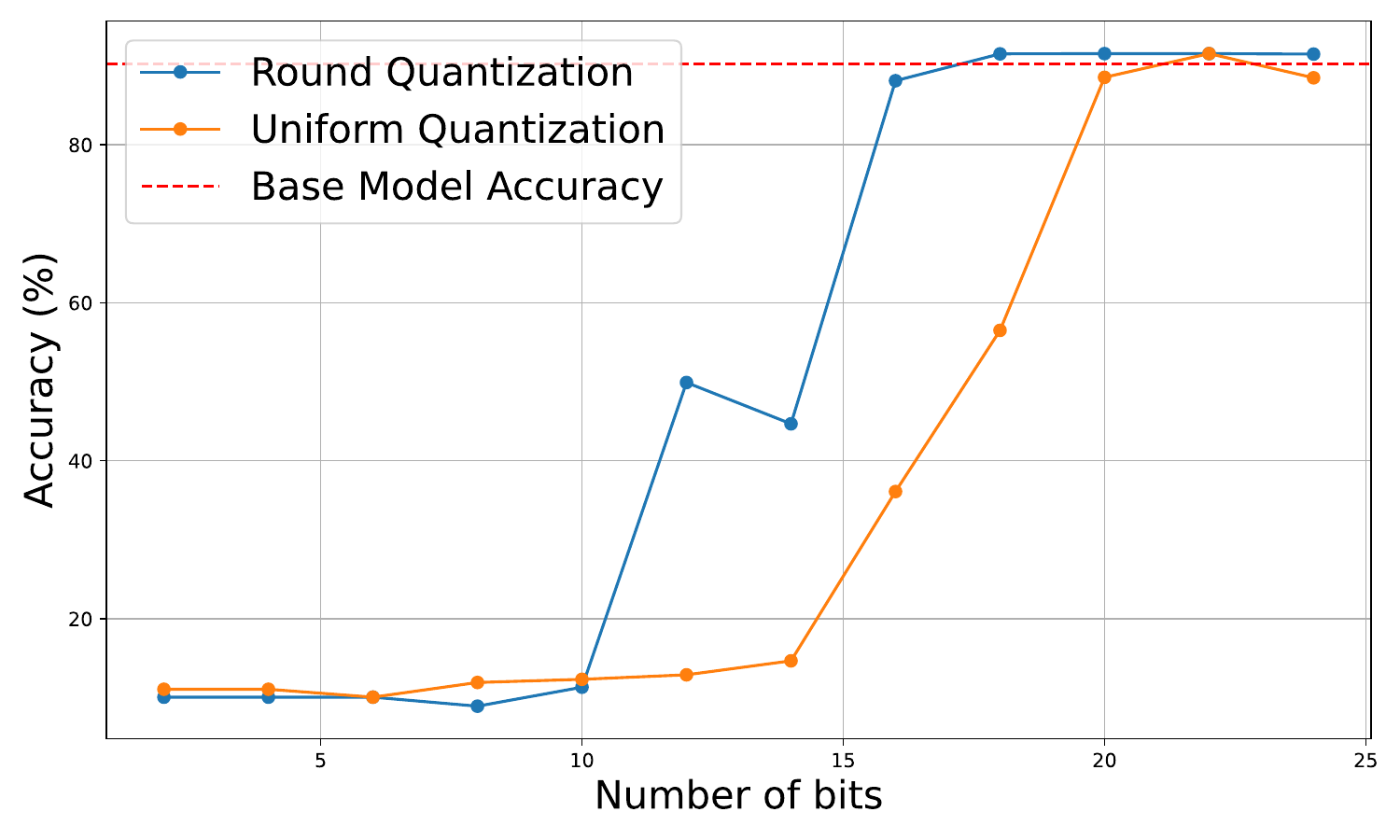}
		\caption{MNIST - MobileNetV2}
		\label{fig:ICML_Acc_MNIST_MobileNetV2}
	\end{subfigure}
	\begin{subfigure}{0.32\textwidth}
		\centering
		\includegraphics[width=\columnwidth]{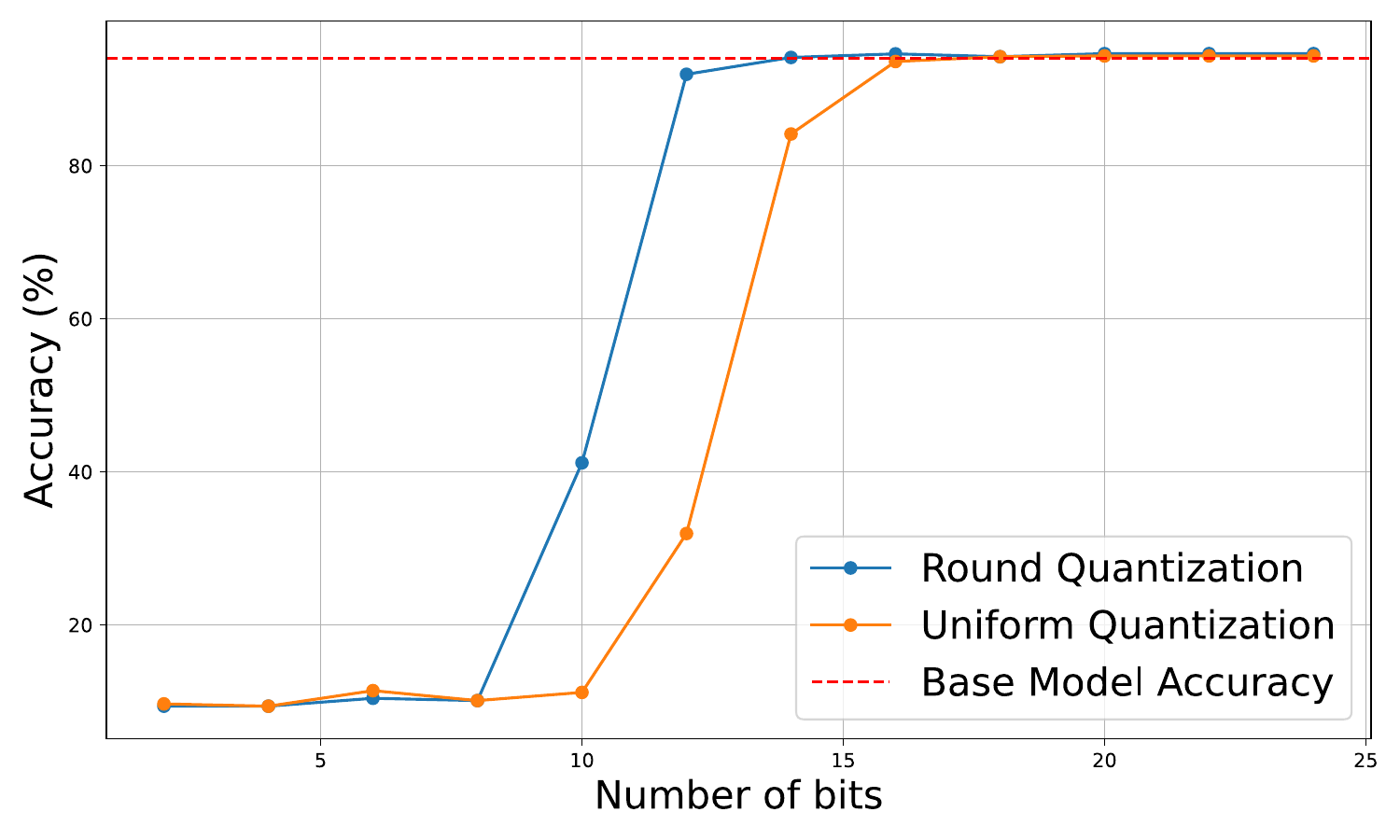}
		\caption{MNIST - ResNet18}
		\label{fig:ICML_Acc_MNIST_ResNet18}
	\end{subfigure}
	\begin{subfigure}{0.32\textwidth}
		\centering
		\includegraphics[width=\columnwidth]{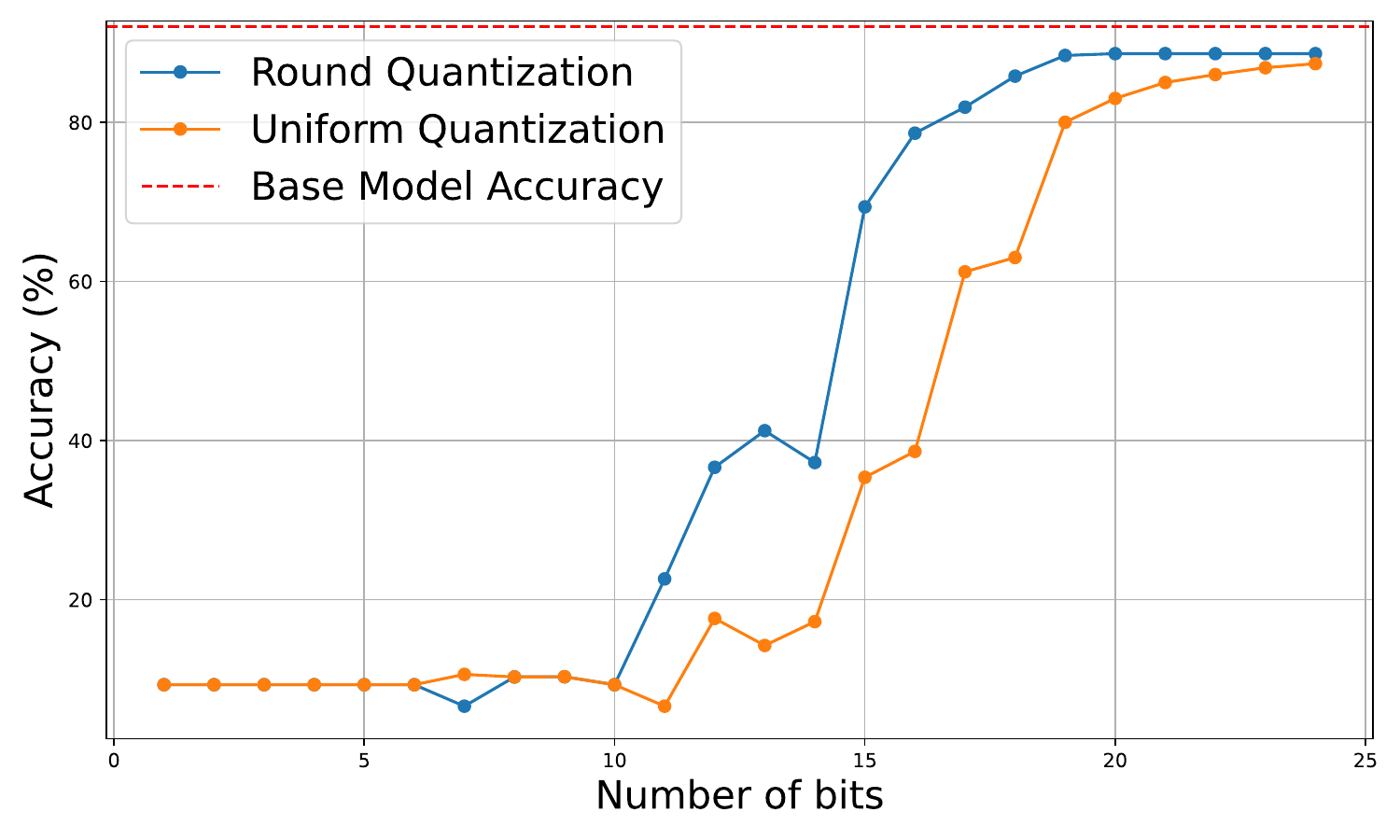}
		\caption{MNIST - ResNet50}
		\label{fig:ICML_Acc_MNIST_ResNet50}
	\end{subfigure}
	\\
	\begin{subfigure}{0.32\textwidth}
		\centering
		\includegraphics[width=\columnwidth]{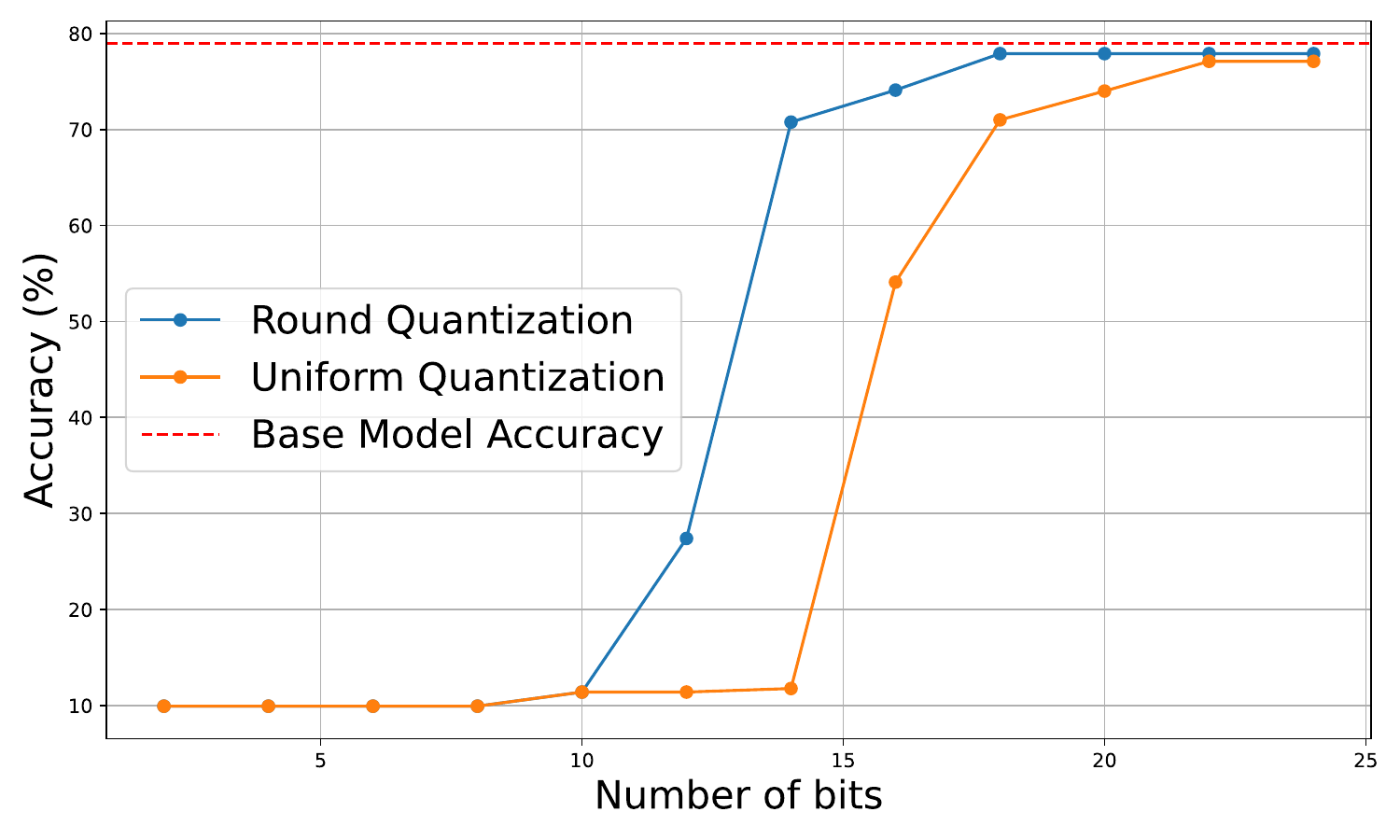}
		\caption{CIFAR-10 - MobileNetV2}
		\label{fig:ICML_Acc_CIFAR10_MobileNetV2}
	\end{subfigure}
	\begin{subfigure}{0.32\textwidth}
		\centering
		\includegraphics[width=\columnwidth]{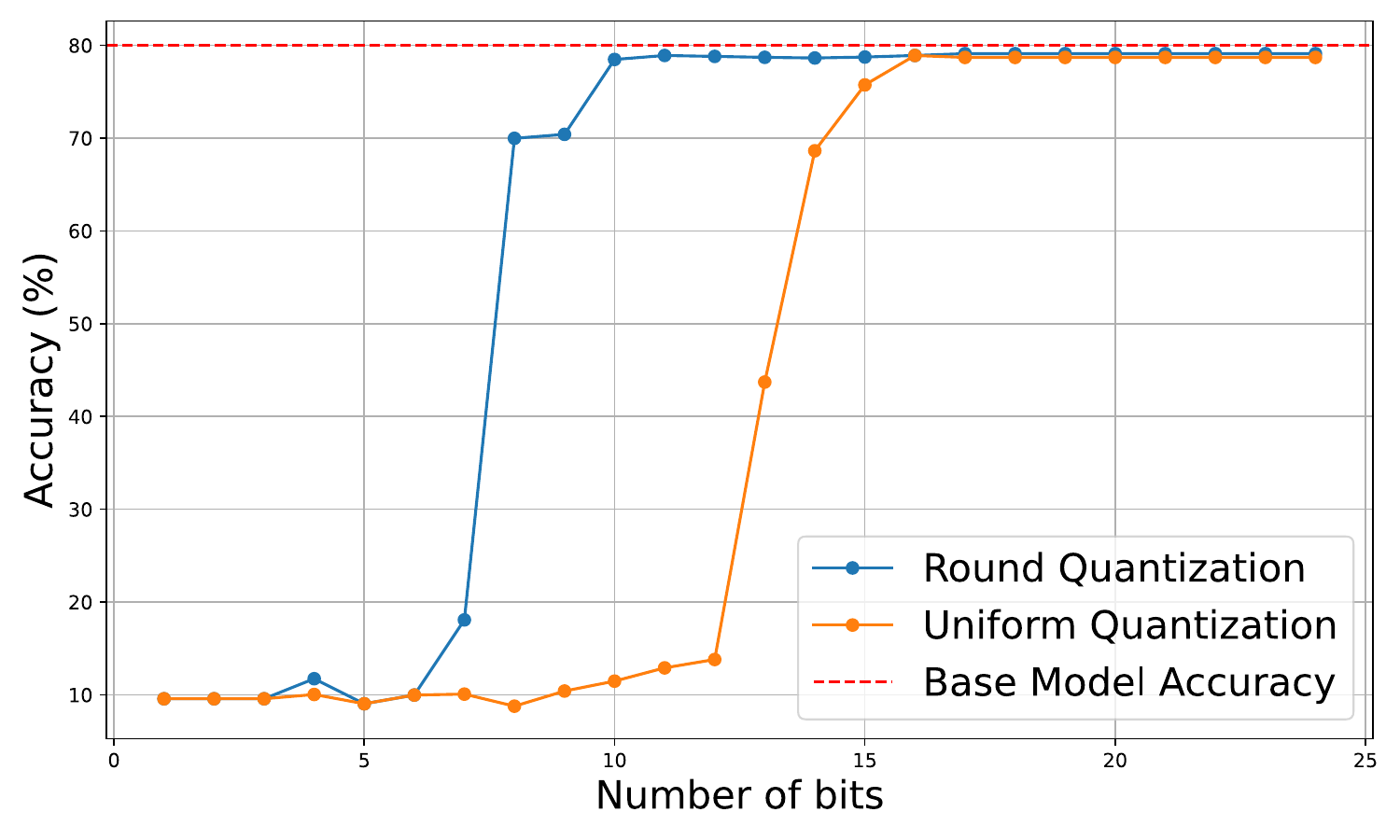}
		\caption{CIFAR-10 - ResNet18}
		\label{fig:ICML_Acc_CIFAR10_ResNet18}
	\end{subfigure}
	\begin{subfigure}{0.32\textwidth}
		\centering
		\includegraphics[width=\columnwidth]{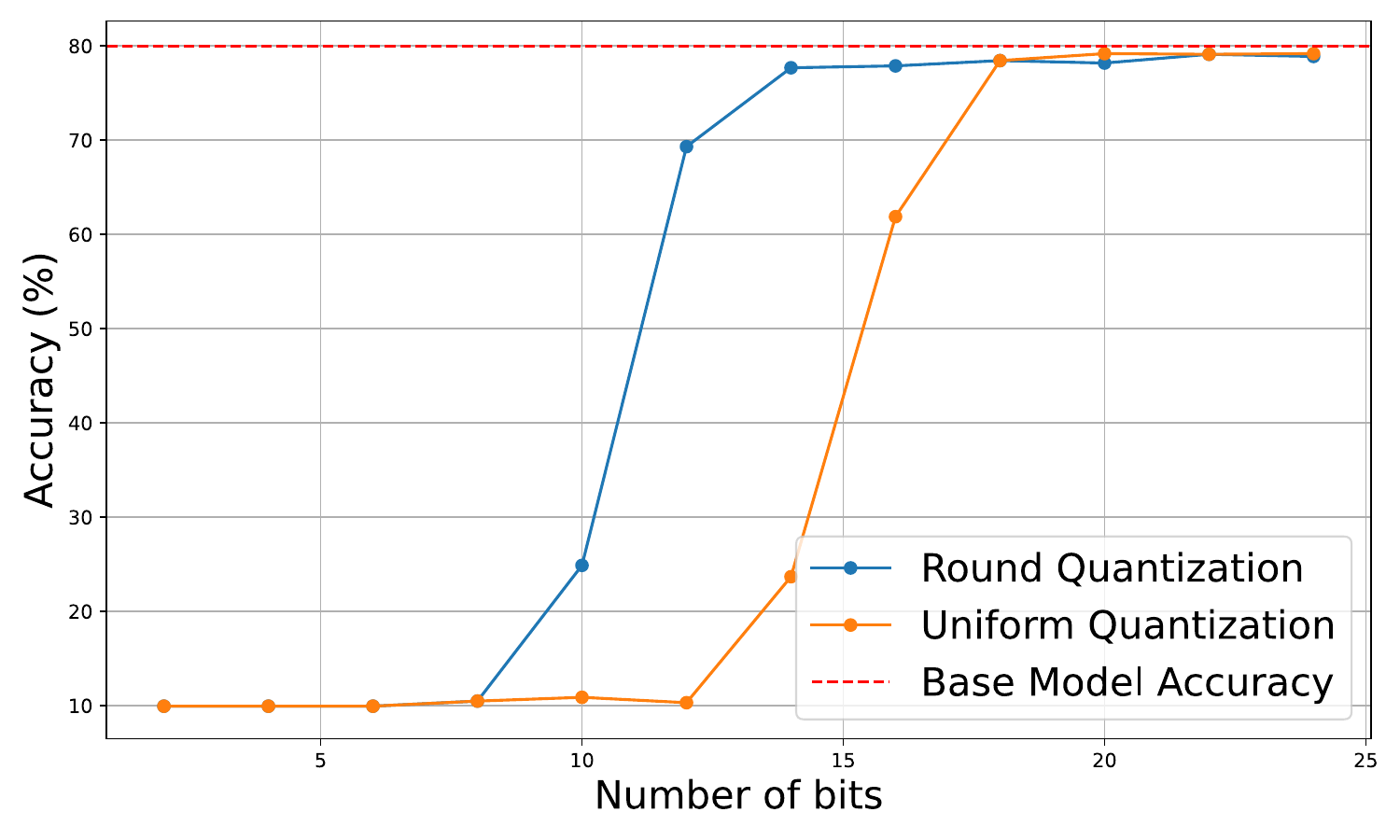}
		\caption{CIFAR-10 - ResNet50}
		\label{fig:ICML_Acc_CIFAR10_ResNet50}
	\end{subfigure}
	
	\caption{Graphs illustrating the effect of quantization on performance for two quantization functions (round and uniform). The results highlight how quantization reduces memory requirements while maintaining or approaching the base model's accuracy. The amount of quantization needed to reach the base precision depends on the quantization function used.}
	\label{fig:comparison_pretrained_quantized_six}
\end{figure*}

 We use two widely-used image classification datasets. The MNIST dataset: 28x28 grayscale images of handwritten digits, with 60,000 training samples and 10,000 test samples across 10 classes. The CIFAR-10 dataset:  32x32 color images in 10 different classes, with 50,000 images for training and 10,000 for testing.
In Figure \ref{fig:comparison_pretrained_quantized_six}, we analyze the impact of post-training quantization on the precision of various pretrained models, across two datasets: MNIST and CIFAR-10. We  clearly observe the behavior reflected by the form of the bound: depth significantly influences quantization error. For instance, in the case of ResNet50 on MNIST, uniform quantization impacts precision as the model fails to reach the baseline precision even at 24 bits.  Conversely, ResNet18 achieves the baseline precision with just 12 bits on MNIST (using uniform quantization). To a lesser extent, we can also suppose that other parameters play a role. For example, despite having very similar depths, MobileNetV2 and ResNet50 exhibit noticeably different quantized performances. This is probably due to the specific architecture of MobilNetV2 that uses residual bottlenecks and Relu6 activation function ($Relu6(x):= \min(\max(0,x),6)$ which is known to better support quantization \cite{sandler2018mobilenetv2}. Specifically, MobileNetV2 reaches baseline precision with 20-bit quantization on MNIST.

Furthermore, the dataset influences precision as well. On CIFAR-10, ResNet50 handles quantization much better, achieving the desired precision with only 18 bits of quantization (14 bits with rounding quantization). Finally, the quantization method itself significantly affects the results: rounding quantization offers an average gain of approximately 4 bits, regardless of the dataset or model, to achieve the desired precision. This behavior is accounted for in the bound through the factor involving $ \| \theta - \theta' \|_{\infty}$.

\section{Experimental Setup of Figure \ref{fig:comparison_pretrained_quantized_six}} \label{appendix:setup_MNIST_CIFAR}

\textbf{Dataset:} We used MNIST and CIFAR-10 datasets as described in Appendix \ref{appendix:MNIST-CIFAR}. All images were resized to $224 \times 224$ to be compatible with our three pretrained architectures. Input normalization was applied for each channel. For MNIST, which contains grayscale images, the single channel was duplicated to produce 3-channel inputs. On MNIST, only 30\% of the training and test sets were used, with a batch size of 32. On CIFAR-10, the full dataset was used with a batch size of 64.

\textbf{Models and training on MNIST:} All models were initialized with ImageNet-pretrained weights. Batch normalization layers were removed from each architecture and replaced with identity layers. For all models, the final classification layer was replaced with a new linear layer with 10 outputs. ResNet18 and ResNet50 were fine-tuned by unfreezing only the last residual block (\texttt{layer4}) along with the final linear layer (\texttt{fc}), and trained for 1 epoch. In the case of MobileNetV2, only the last convolutional block (\texttt{features[-1]}) and the classifier were unfrozen. The rest of the network remained frozen. Training was performed for 15 epochs using the Adam optimizer with a learning rate of 0.001.

\textbf{Models and training on CIFAR-10:} For CIFAR-10, the full dataset was used. Batch normalization layers were also removed before training. The ResNet18 model was trained for 5 epochs, with \texttt{layer1}, \texttt{layer4}, and \texttt{fc} layers unfrozen. The ResNet50 model was trained for 20 epochs with all layers unfrozen. MobileNetV2 was trained for 35 epochs. In this case, the convolutional blocks \texttt{features[0]}, \texttt{features[1]}, \texttt{features[17]}, \texttt{features[18]} and the classifier were the only unfrozen layers. All training process used the Adam optimizer with a learning rate of 0.001.

\section{Comparison between $r_{mean}$ and $r_{max}$} \label{appendix:$r_{mean}$ $r_{max$}}

In Figure \ref{fig:ICML_comparison_product_all} we simulate several values of $r_\ell$ with different distributions. For the exponential distribution, there is variability across layers, most of $r_\ell$ values are small (less than 5) but the max is around 18. For this case we have the value of $r_{mean}$ around 3. The opposite case is the second histogram where most of $r_j$ values are large (more than 15), but even in this case, $r_{mean}$ allows to  better fit to the distribution, taking account of the few small values of $r_j$. Finally, the last scenario, is the best for our bound, because for all layers except the last one, $r_j \in [0,1]$ and the last one is 10. With this distribution we have that, $r_{mean}$ is equal to $1.02$.
We will see in Section \ref{sec:experiments} practical examples, and especially MobileNetV2, that is close to the exponential distribution, making us gain  orders of magnitude as the approximation constant grows exponentially  with respect to the depth (with parameter $r_{mean}$).
\begin{figure}
	\centering
	\includegraphics[width=0.7\columnwidth]{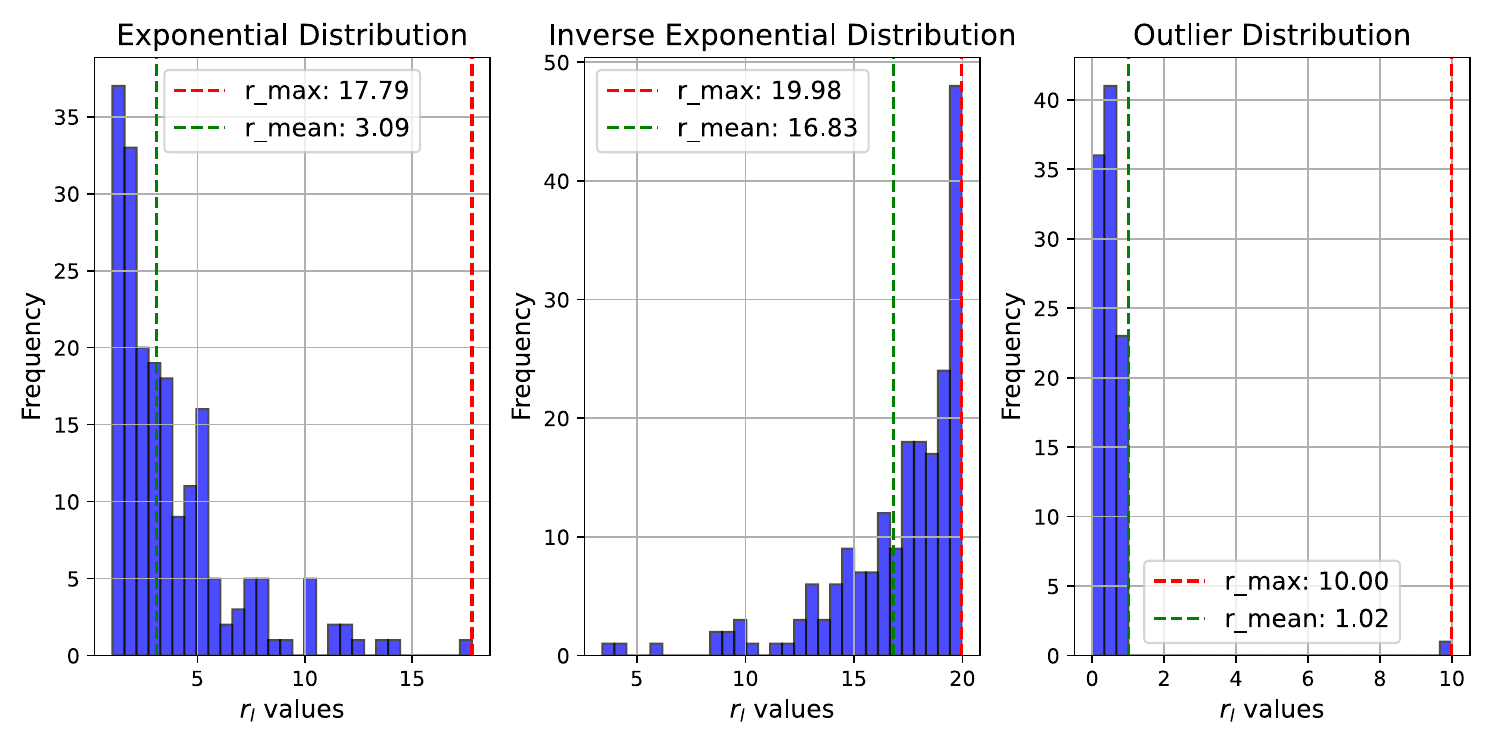}
	\caption{ Comparison between the $r_{mean}$ (green) used in Theorem \ref{Th:my_bound_extend_new} and $r$ (red) used in Theorem \ref{Th_bound_article}, for three different simulated distributions, showing a smaller value compared to $r$ for each distribution.}
	\label{fig:ICML_comparison_product_all}
\end{figure}

\section{Cross layer equalization as a preprocessing} \label{appendix:CLE}

\begin{figure}[H]
	\centering
	\includegraphics[width=0.8\textwidth]{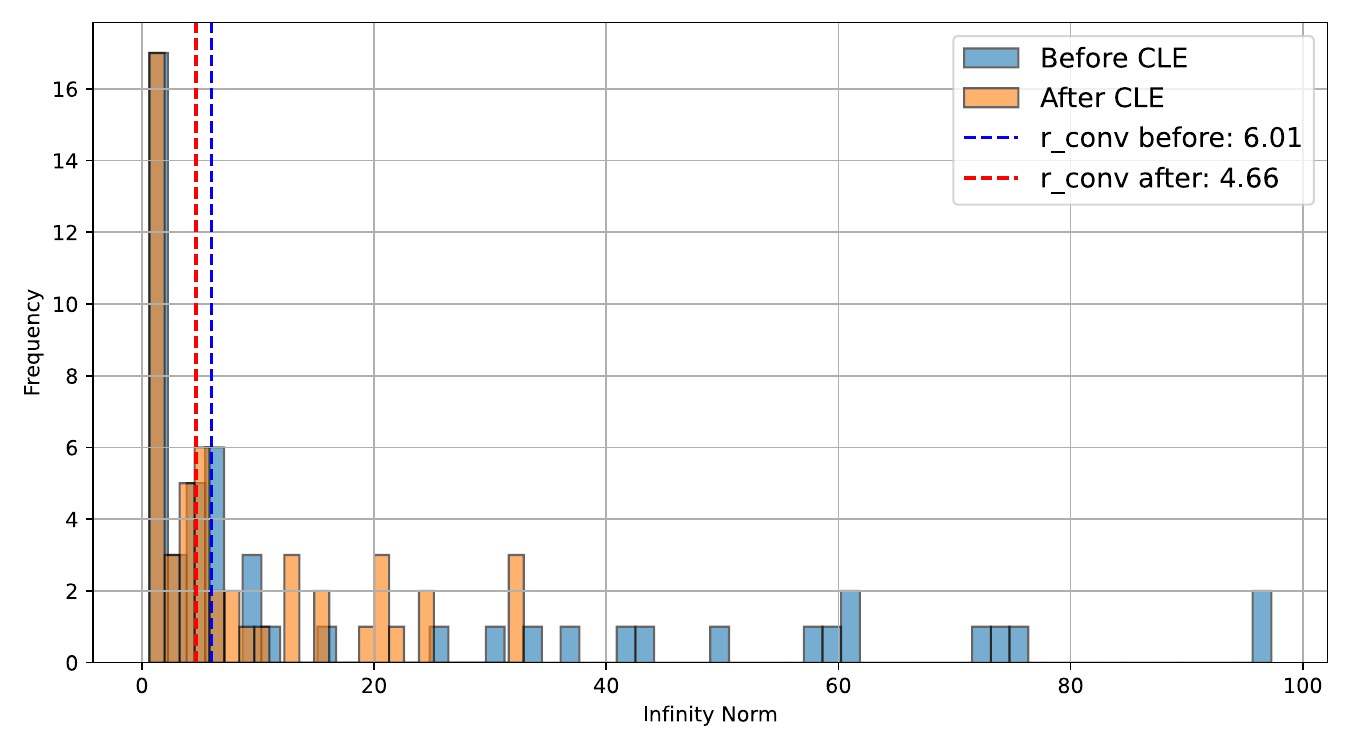}
	\caption{Comparison between weight norm distribution before and after cross-layer equalization on a 4-bits ResNet50, showing a smaller value of $r_{mean}$ after CLE.}
	\label{fig:CLE_comp_norms_resnet50_4bits}
\end{figure}

In Figure \ref{fig:CLE_comp_norms_resnet50_4bits} we notice that the cross layer equalization have uniformized the distribution of weight norms and therefore the value of $r_{conv}$ decreased as well. Thus our bound can be combined with preprocessing tends to reduce its pessimism, even if it is not sufficient to have a tight bound.

\section{MLPs with non-ReLu activation function} \label{appendix:MLP_tanh}

Settings of Figure \ref{fig:MLP_comparison_side_by_side_Tanh} is exactly the same as Appendix \ref{sec:setup_MLPs} but we replace de ReLu activation functions, by sigmoid, implemented by Tanh, to show that our approach is robust to other activation functions. 

\begin{figure}[h]
	\centering
	\begin{subfigure}[b]{0.48\columnwidth}
		\includegraphics[width=\textwidth]{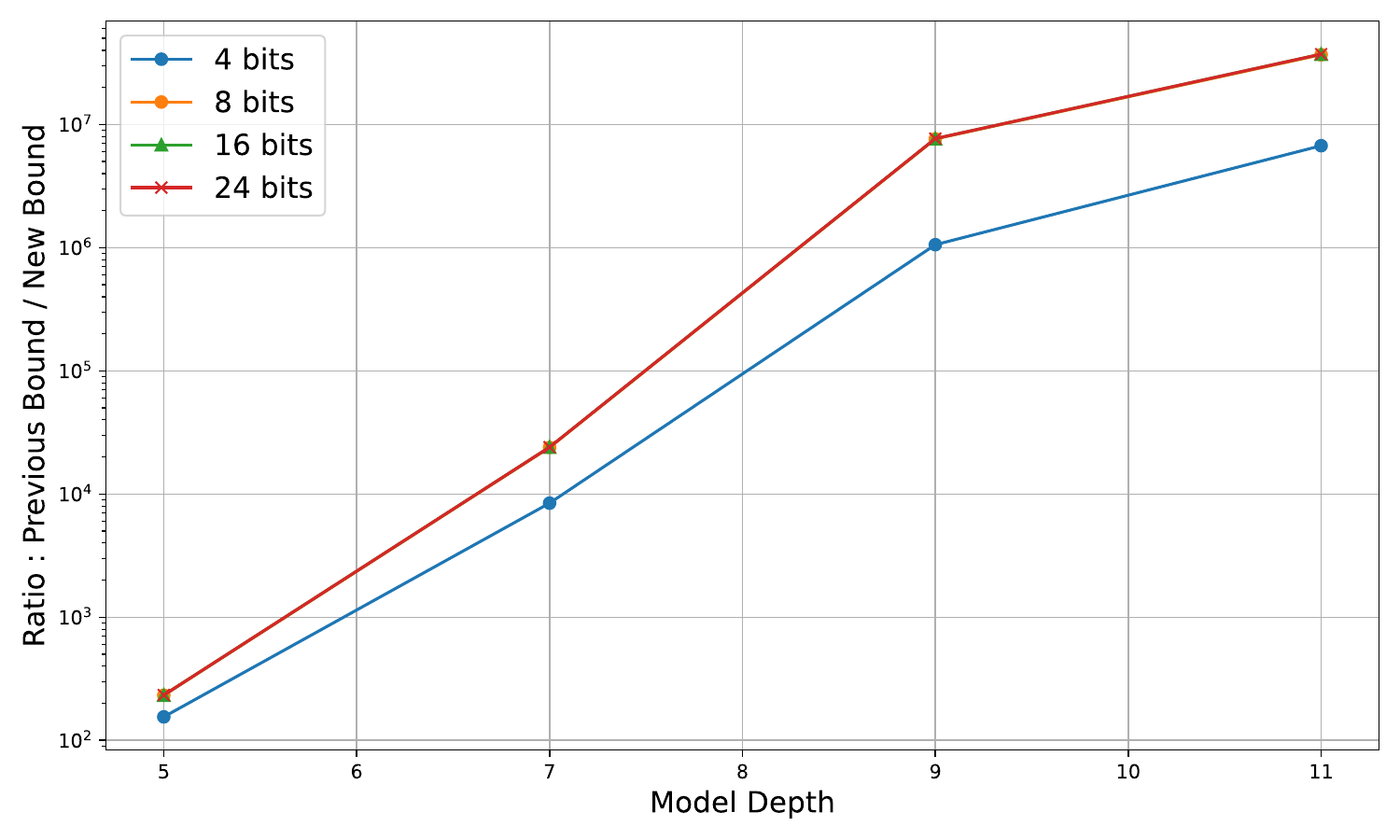}
		\caption{Ratio in log scale between the previous bound \eqref{eq:orig_bound1} and our bound \eqref{eq1_main_th} as a function of model depth for different quantization bit-widths.}
		\label{fig:MLP_comparison_Tanh}
	\end{subfigure}
	\hfill
	\begin{subfigure}[b]{0.49\columnwidth}
		\includegraphics[width=\textwidth]{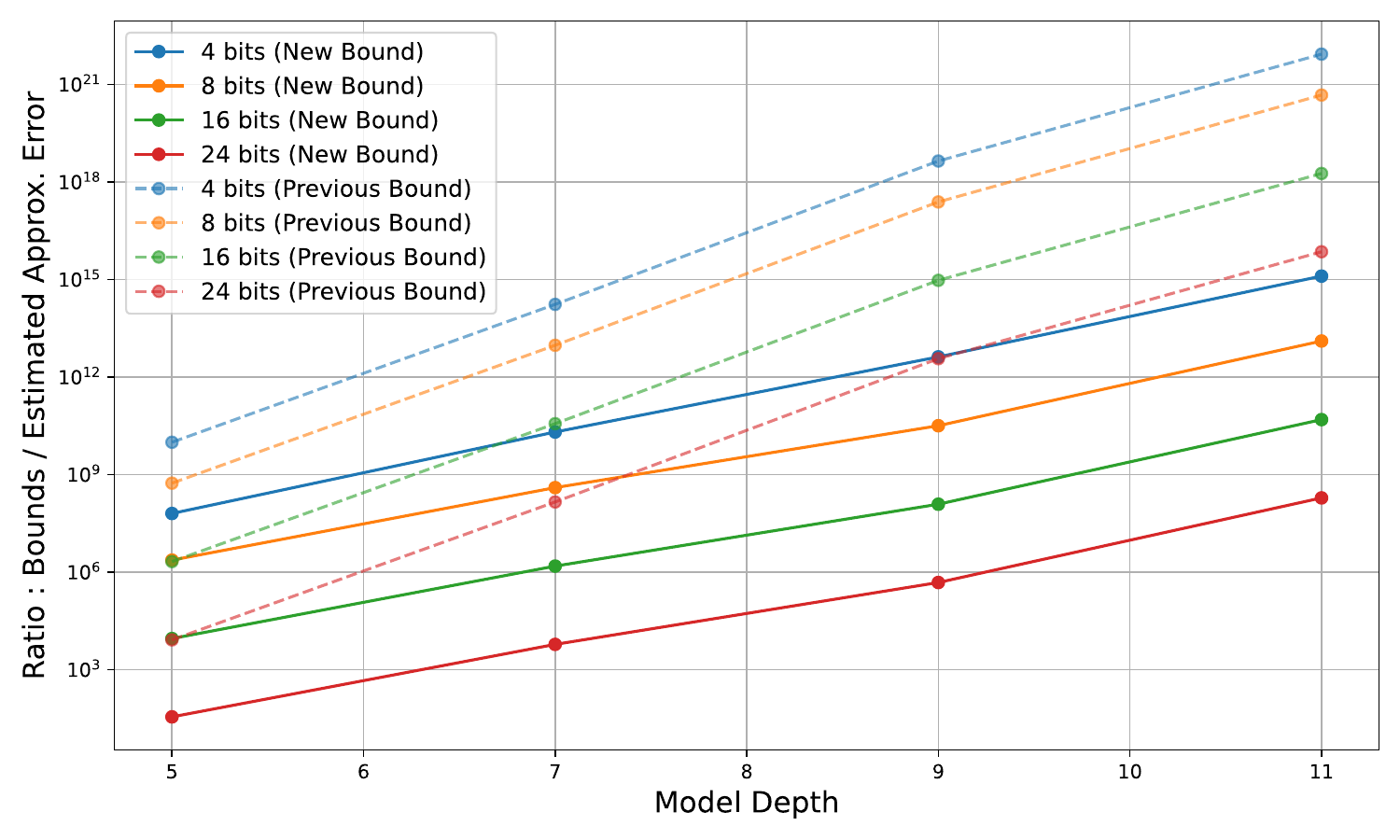}
		\caption{Ratio in log scale between bounds \eqref{eq:orig_bound1}, \eqref{eq1_main_th} and the estimated error approximation, as a function of model depth for different quantization bit-widths.}
		\label{fig:MLP_both_bounds_ratios_Tanh}
	\end{subfigure}
	\caption{Comparison for MLPs of depths 5, 7, 9 and 11 on MNIST. (a) shows how the ratio of our bound over the previous bound grows exponentially with depth, and (b) demonstrates that our bound reduces that exponential dependence across bit-widths.}
	\label{fig:MLP_comparison_side_by_side_Tanh}
\end{figure}

\section{Hardware} \label{appendix:hardware}

Most experiments in this paper, including all MLP evaluations and quantization analyses on MNIST and CIFAR-10, were conducted on a MacBook Air with an Apple M2 chip (8-core CPU, 8-core GPU) and 16 GB of unified memory. The experiments shown in Figure \ref{fig:comparison_pretrained_quantized_tinyimagenet} were performed on an NVIDIA DGX A100 system. The DGX hardware allows us to retrain full CNN architectures on the Tiny ImageNet dataset. We emphasize that execution time and efficiency metrics are not the focus of this work. As our contribution is primarily theoretical, experiments are here only to validate and illustrate the behavior of our approximation bounds in practice.

\end{document}